\newtheorem{theorem}{Theorem}[section]
\newtheorem{lemma}[theorem]{Lemma}
\theoremstyle{definition}
\newtheorem{definition}[theorem]{Definition}
\theoremstyle{remark}
\numberwithin{equation}{section}
\subjclass[2000]{Primary~81U40, Secondary~47A40}
\keywords{Schr{\"o}dinger operator, lattice, quantum graph, S-matrix, inverse scattering.}
\title[Inverse scattering on the quantum graph for graphene]{Inverse scattering on the quantum graph for graphene}
\author{Kazunori ANDO}
\address[K. Ando]{Department of Electrical and Electronic Engineering and Computer Science, Ehime University, Matsuyama, 790-8577, Japan}
\email{ando@cs.ehime-u.ac.jp}
\author{Hiroshi ISOZAKI}
\address[H. Isozaki]{Graduate School of Pure and Applied Sciences, Professor Emeritus,
University of Tsukuba, 
Tsukuba, 305-8571, Japan}
\email{isozakih@math.tsukuba.ac.jp}
\author{Evgeny KOROTYAEV}
\address[E. Korotyaev]{
Department of  Math. Analysis,
Saint-Petersburg State University,  Universitetskaya nab. 7/9, St. Petersburg, 199034, Russia, 
National Research University Higher School of Economics, St. Petersburg, Russia
}
\email{e.korotyaev@spbu.ru}
\author{Hisashi MORIOKA}
\address[H. Morioka]{Department of Electrical and Electronic Engineering and Computer Science, Ehime University, Matsuyama, 790-8577, Japan}
\email{morioka@cs.ehime-u.ac.jp}
\date{\today}
\begin{document}
\maketitle

\begin{abstract}
We consider the inverse scattering on the quantum graph associated with the hexagonal lattice. Assuming that the potentials on the edges are compactly supported and symmetric, we show that the S-matrix for all energies in any given open set in the continuous spectrum determines the potentials.
\end{abstract}

\section{Introduction}
In this paper, we are concerned with a family of one-dimensional Schr{\"o}dinger operators
$\displaystyle{- d^2/dz^2+ q_{\bf e}(z)}$
defined on the edges of the hexagonal lattice assuming the Kirchhoff condition on the vertices. Here, $z$ varies over the interval $(0,1)$ and ${\bf e} \in \mathcal E$, $\mathcal E$ being the set of all edges of the hexagonal lattice.  
The following assumptions are imposed on the potentials.

\medskip
\noindent
{\bf (Q-1)} \ \ \ \ \ \ {\it $q_{\bf e}(z)$ is real-valued, and $q_{\bf e} \in L^2(0,1)$}.

\medskip
\noindent
{\bf (Q-2)} \ \ \ \ \ \   {\it $q_{\bf e}(z) = 0$ on $(0,1)$ except for a finite  number of edges.}

\medskip
\noindent
{\bf (Q-3)} \ \ \ \ \ \  $q_{\bf e}(z) = q_{\bf e}(1-z)$ for $z \in (0,1)$.

\medskip
\noindent
Under these assumptions, the Schr{\"o}dinger operator 
$$
\widehat H_{\mathcal E} = \Big\{
- \frac{d^2}{dz^2} + q_{\bf e}(z)\, ; \, {\bf e} \in \mathcal E 
\Big\}
$$
is self-adjoint with  essential spectrum $\sigma_e(\widehat H_{\mathcal E}) = [0,\infty)$. 
  There exists a discrete (but infinte) subset $\mathcal T \subset {\bf R}$ such that $\sigma_e(\widehat H_{\mathcal E})\setminus \mathcal T$ is absolutely continuous.
  We can then define Heisenberg's S-matrix $S(\lambda)$ for $\lambda \in (0,\infty)\setminus\mathcal T$.   
 The following two theorems are the main purpose of this paper.

\begin{theorem}
\label{Maintheorem1}
Assume (Q-1), (Q-2) and (Q-3). Then,
given any open interval $I \subset (0,\infty)\setminus \mathcal T$, and the S-matrix $S(\lambda)$ for all $\lambda \in I$, one can uniquely reconstruct the potential $q_{\bf e}(z)$ for all ${\bf e} \in \mathcal E$.
\end{theorem}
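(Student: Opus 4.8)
The plan is to reduce the global inverse problem to a family of one-dimensional inverse problems on the individual edges, combining three ingredients: analytic continuation in the energy, a stationary (Lippmann--Schwinger type) representation of the S-matrix in terms of edge monodromy matrices, and classical inverse spectral theory on an interval for symmetric potentials. To begin, I would record that $S(\lambda)$, written in the spectral parameter $k=\sqrt{\lambda}$, extends to a meromorphic function of $k$ in a neighborhood of the positive real axis; this follows once the generalized eigenfunctions of $\widehat H_{\mathcal E}$ are shown to depend analytically on $k$ away from the thresholds $\mathcal T$. Consequently the datum $S(\lambda)$ for $\lambda\in I$ already determines $S(\lambda)$ for all $\lambda\in(0,\infty)\setminus\mathcal T$, together with its meromorphic continuation, so the restriction to the small interval $I$ may be dropped at once and I may work with the S-matrix at all energies.

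Next I would compare $\widehat H_{\mathcal E}$ with the free operator obtained by setting all $q_{\bf e}\equiv 0$, which I denote $\widehat H^{0}_{\mathcal E}$. On each edge let $c(z,k),s(z,k)$ be the fundamental solutions of $-u''+q_{\bf e}u=k^2u$ normalized by $c(0,k)=s'(0,k)=1$ and $c'(0,k)=s(0,k)=0$, and let $M_{\bf e}(k)$ be the associated monodromy matrix across $(0,1)$, whose entries are $c(1,k),s(1,k),c'(1,k),s'(1,k)$. Because the perturbation is supported on only finitely many edges by (Q-2), the Kirchhoff conditions together with the Floquet--Bloch structure of the periodic background allow the S-matrix to be expressed through a stationary formula in which the only potential-dependent quantities are the finitely many matrices $M_{\bf e}(k)$, coupled through the free lattice Green's function. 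Inverting this algebraic relation energy by energy, I would recover the full collection $\{M_{\bf e}(k)\}$, for every $k$ and every defect edge ${\bf e}$.

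Finally I would solve the one-dimensional inverse problem on each edge. The symmetry (Q-3) forces $c(1,k)=s'(1,k)$, and together with the Wronskian identity $c(1,k)s'(1,k)-s(1,k)c'(1,k)=1$ this means that $M_{\bf e}(k)$ is already pinned down by, say, the two entire functions $s(1,k)$ and $c(1,k)$ of $k^2$, whose zero sets are the Dirichlet and the mixed eigenvalues of $-d^2/dz^2+q_{\bf e}$ on $(0,1)$. By Borg's two-spectra theorem in the reduced form valid for potentials symmetric about the midpoint --- equivalently, by Gelfand--Levitan/Marchenko reconstruction from the Weyl function built out of $s(1,k)$ and $c(1,k)$ --- this data determines $q_{\bf e}$ uniquely. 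Applying this to each of the finitely many defect edges reconstructs the entire potential.

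The main obstacle is the middle step: showing that a \emph{single} global S-matrix, even known at all energies, genuinely separates the contributions of distinct defect edges and recovers each monodromy matrix in full, rather than merely constraining them. This rests on a careful analysis of the analytic structure of the free lattice Green's function --- in particular its behavior across the thresholds $\mathcal T$ where Floquet modes degenerate --- together with a Rellich-type unique continuation theorem on the quantum graph to exclude non-radiating configurations. Verifying that the resulting system for $\{M_{\bf e}(k)\}$ is genuinely invertible is where the real work lies.
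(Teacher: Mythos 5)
Your first step (meromorphy of $S(\lambda)$, so knowledge on $I$ gives knowledge at all energies) and your last step (Borg-type uniqueness for symmetric potentials on $(0,1)$) both match the paper. But the middle step --- passing from the single global S-matrix to the per-edge spectral data --- is exactly the content of the theorem, and your proposal does not prove it; you assert that one can ``invert the algebraic relation energy by energy'' to recover all the monodromy matrices $M_{\bf e}(k)$, and then concede that verifying this invertibility ``is where the real work lies.'' That concession is the gap. The defect edges enter the scattering amplitude $A(\lambda)=\widehat{\mathcal F}^{(+)}(\lambda)q_{\mathcal E}\widehat{\mathcal F}^{(0)}(\lambda)$ through the \emph{perturbed} resolvent, so the dependence of $S(\lambda)$ on the collection $\{M_{\bf e}\}$ is nonlinear and coupled; there is no a priori reason that this map is injective, and no amount of analyticity in $k$ or Rellich-type uniqueness by itself separates the contributions of distinct edges. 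A blind ``inversion'' claim here is assuming the conclusion.

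The paper closes this gap by a different, constructive route that uses the hexagonal geometry in an essential way. First, the S-matrix is shown to determine the Dirichlet-to-Neumann map of the interior problem on a finite region containing all defects (Theorem \ref{Aext-A=IBSigma-1ILemma}), via the equivalence between the edge model and the vertex model (Lemma \ref{LemmaDNedge=DNvertex}, Theorem \ref{VertexSmatrix=EdgeSmatrix}). Second, on a hexagonal parallelogram one constructs special solutions of the vertex equation with partial Dirichlet/Neumann data (Lemmas \ref{S6partialDNdata}, \ref{Lemmaspecialboundarydata}) that vanish below a diagonal line $A_k$; evaluating the equation at vertices on $A_k$ yields the ratios $\phi_{[{\bf e}]}(1,\lambda)/\phi_{[{\bf e}']}(1,\lambda)$ for adjacent pairs of edges, these ratios are propagated along zigzag lines, and the absolute normalization is fixed by a remote edge where $q_{{\bf e}'}=0$ and $\phi_{[{\bf e}']}(1,\lambda)$ is explicit. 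This layer-stripping scheme is what actually separates the edges. Note also that it recovers only $\phi_{[{\bf e}]}(1,\lambda)$ (your $s(1,k)$), whose zeros are the Dirichlet eigenvalues; under (Q-3) this single spectrum already determines $q_{\bf e}$ by Borg's theorem, so your plan to recover the full monodromy matrix (both $s(1,k)$ and $c(1,k)$) demands more data than the scattering information is shown to provide --- and more than is needed.
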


Under our assumptions (Q-1), (Q-2), (Q-3), $S(\lambda)$ is 
meromorphic in the complex domain $\{{\rm Re}\,\lambda > 0\}$ with possible branch points at $\mathcal T$. Therefore, the assumption of Theorem \ref{Maintheorem1} is equivalent to the condition that we are given $S(\lambda)$ for all $\lambda \in (0,\infty)\setminus{\mathcal T}$.   
One can also deal with perturbation of periodic edge potentials.

\begin{theorem}
\label{Maintheorem2}
Assume (Q-1) and (Q-3). 
Assume that we are given a real $q_0(z) \in L^2(0,1)$ satisfying $q_0(z) = q_0(1-z)$ and $q_{\bf e}(z) = q_0(z)$ on $(0,1)$ except for a finite number of edges ${\bf e} \in \mathcal E$. Given an open interval $I \subset \sigma_e(\widehat H_{\mathcal E}) \setminus \mathcal T$ and the S-matrix $S(\lambda)$ for all $\lambda \in I$, one can uniquely reconstruct the potential $q_{\bf e}(z)$ for all edges ${\bf e}\in \mathcal E$.
\end{theorem}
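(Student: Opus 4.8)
The plan is to follow the scheme of Theorem~\ref{Maintheorem1}, replacing the free background by the known periodic operator $\widehat H^{(0)}_{\mathcal E}$ all of whose edge potentials equal $q_0$. Since $S(\lambda)$ is meromorphic on $\{{\rm Re}\,\lambda>0\}$ with branch points only on $\mathcal T$, the data on the open interval $I$ extend by analytic continuation to all of $\sigma_e(\widehat H_{\mathcal E})\setminus\mathcal T$; hence I may assume that $S(\lambda)$ is known for every admissible energy.

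The first main step is to reduce the quantum graph to a discrete problem on the hexagonal lattice. On each edge ${\bf e}$ let $s_{\bf e}(z,\lambda)$ and $c_{\bf e}(z,\lambda)$ solve $-u''+q_{\bf e}u=\lambda u$ with $s_{\bf e}(0)=0,\ s_{\bf e}'(0)=1$ and $c_{\bf e}(0)=1,\ c_{\bf e}'(0)=0$. Writing the Kirchhoff conditions in terms of the boundary values $u(0),u(1),u'(0),u'(1)$ and the edge transfer matrices collapses $\widehat H_{\mathcal E}$ to a $\lambda$-dependent discrete Schr\"odinger-type operator on the vertices, in which each edge enters only through the scalars $s_{\bf e}(1,\lambda)$ and $c_{\bf e}(1,\lambda)=s_{\bf e}'(1,\lambda)$. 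Here the symmetry (Q-3), $q_{\bf e}(z)=q_{\bf e}(1-z)$, is decisive: it forces the two diagonal entries of the transfer matrix to coincide, so that a \emph{single} characteristic function $s_{\bf e}(1,\lambda)$ carries the full data of the edge. Because $q_{\bf e}=q_0$ off a finite set of edges, this discrete operator is a finite-rank (though $\lambda$-dependent) perturbation of the \emph{known} periodic discrete operator attached to $q_0$.

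The second main step is to read off, from the continuous $S(\lambda)$, the scattering data of the discrete problem relative to the periodic background, and then to recover the finitely many unknown functions $s_{\bf e}(\cdot,\lambda)$. The continuous and discrete S-matrices are intertwined by the explicit, $q_0$-dependent map that passes between Bloch waves on the leads and lattice amplitudes, so knowledge of $S(\lambda)$ for all $\lambda$ yields the discrete scattering data for all $\lambda$. A peeling argument, starting from the edges on the boundary of the perturbed region and using the Kirchhoff relations together with the known periodic leads, then isolates each perturbed edge and determines $s_{\bf e}(1,\lambda)$ as a function of $\lambda$. Finally, for each such edge $s_{\bf e}(1,\lambda)$ is an entire function of $\lambda$ of order $1/2$ whose zeros are the Dirichlet eigenvalues; since $q_{\bf e}$ is symmetric, these eigenvalues alone determine it, so classical inverse Sturm--Liouville theory reconstructs $q_{\bf e}$ uniquely. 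On every other edge $q_{\bf e}=q_0$ is given, which completes the reconstruction.

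The hard part is the recovery in the second step: disentangling the individual $\lambda$-dependent edge functions from the single global S-matrix against a nontrivial periodic background. The band structure of $\widehat H^{(0)}_{\mathcal E}$, which is precisely what produces the exceptional set $\mathcal T$, makes the Bloch channels energy-dependent and threatens to degenerate the intertwining map at band edges. One must therefore verify that on each band, away from $\mathcal T$, the number of open channels together with the analytic dependence on $\lambda$ suffice to make the peeling map injective, so that the perturbation, and hence every $q_{\bf e}$, is uniquely determined.
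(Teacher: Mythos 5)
Your overall strategy coincides with the paper's: the paper proves Theorem \ref{Maintheorem2} by repeating the proof of Theorem \ref{Maintheorem1} with the free edge solutions $\sin(\sqrt{\lambda}z)/\sqrt{\lambda}$, $\sin(\sqrt{\lambda}(1-z))/\sqrt{\lambda}$ replaced by the corresponding solutions of $-\varphi''+q_0\varphi=\lambda\varphi$, and your reduction to a $\lambda$-dependent discrete vertex problem in which each symmetric edge contributes a single characteristic function $s_{\bf e}(1,\lambda)$ (the paper's $\phi_{[{\bf e}]}(1,\lambda)$), followed by analytic continuation of the eigenvalues and Borg's theorem, is exactly that scheme. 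However, there is a genuine gap at the center of your argument: the passage from the S-matrix to the individual functions $s_{\bf e}(1,\lambda)$ is precisely the hard content of the paper, and you replace it by an unproved ``peeling argument.'' You flag this yourself in your last paragraph (``one must therefore verify that \ldots the peeling map [is] injective''), but that verification is not a routine check; it is the substance of the theorem.

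What the paper actually does at this step is: (i) show that the edge S-matrix coincides with the S-matrix of the reduced vertex operator (Theorem \ref{VertexSmatrix=EdgeSmatrix}) and that the S-matrix and the interior D-N map determine each other (Theorem \ref{Aext-A=IBSigma-1ILemma}); (ii) on a hexagonal parallelogram containing the perturbation, solve a mixed partial-data problem (Lemma \ref{S6partialDNdata}, whose unique solvability is a nontrivial geometric fact about the hexagonal lattice) to manufacture special solutions vanishing below a diagonal line $A_k$ (Lemma \ref{Lemmaspecialboundarydata}); (iii) evaluate the vertex equation along $A_k$ to obtain the ratios $f_{k,m}(\lambda)$ of (\ref{s7equationforfkm}), then rotate the configuration and propagate along zigzag lines until reaching a remote edge where $q_{\bf e}=q_0$, so that its characteristic function is known and the ratios can be converted into the functions themselves. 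Your proposal contains no substitute for (i)--(iii): you never introduce the D-N map, and a naive peeling inward from the boundary of the perturbed region does not work directly, because the S-matrix does not hand you solution values at interior vertices; one needs the D-N map together with carefully chosen partial boundary data to produce solutions whose vanishing region localizes the Kirchhoff relations to one pair of adjacent edges at a time. Until that mechanism (or an equivalent one) is supplied, the proof is incomplete; the remaining ingredients of your proposal --- the analytic continuation of $S(\lambda)$ from $I$, the use of (Q-3) to reduce each edge to one scalar function, and the final Borg step --- do match the paper.
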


It is well-known that there is a close connection between the Laplacian on the quantum graph and that on the associated vertex set (see e.g. \cite{Below85}, \cite{C97}, \cite{KuchPost},  \cite{Pank06}, \cite{BPG08}).
Therefore, the basic results on the spectral theory for the quantum graph are derived from those for the associated discrete Laplacian.
Sections \ref{BasicTheory}, \ref{ResolventEstimates} and \ref{SectionSmatrixtoDNmap} are devoted to this transfer. In particular, we show that the S-matrix for the whole quantum graph determines the Dirchlet-to-Neumann map in a finite region on which perturbations are confined (Theorem \ref{Aext-A=IBSigma-1ILemma}). The inverse problem is solved in \S \ref{SectionInversescattering}  by using the classical theorem of Borg \cite{Borg}.

The monographs \cite{CDGT88}, \cite{CDS95}, \cite{Ch97}, \cite{Post12}, \cite{BK13} are expositions of the graph spectra and related problems from algebraic, geometric,  physical  and functional analytic view points with slight different emphasis on them.  
The present situation of the study of quantum graph is well explained in the above mentioned books, especially in Chapter 7 of \cite{BK13} together with an abundance of references therein.  See also \cite{KoLo07}, \cite{Niikuni16} for more recent results. 
Plenty of deep results for the inverse problem on the quantum graph have been presented. See e.g. \cite{Below85}, 
\cite{KostSchr99}, \cite{Pivo00}, \cite{GutSmil01}, \cite{Bel04}, \cite{BroWei09}, \cite{Yurk05}, \cite{Kura08}, \cite{ChExTu11}, \cite{AvdBelMat11}, \cite{VisComMirSor11}, \cite{MochiTroosh12}, \cite{Yurko16}, \cite{Bonda20} and other papers cited in the above books. We must also mention \cite{Col98}, \cite{CurtMor00} on inverse problem for the planar discrete graph.
There are also many recent works on the spectral and (inverse) scattering theory for discrete Schr{\"o}dinger operators on perturbed periodic structures
\cite{KoSa15}, \cite{KoSa15b}, \cite{KoSa15c}, \cite{Nakamura14}, \cite{Tadano16},  \cite{ParRich18}, \cite{IsKo12}, \cite{Ando12}, \cite{IsMo}, \cite{AnIsoMo17}, \cite{AnIsoMo17(1)}, \cite{EKMN17}. In this paper, we use our previous results \cite{AnIsoMo17}, \cite{AnIsoMo17(1)} in the transfer from the discrete Laplacian to the quantum graph. 
Many parts of this paper, especially the part dealing with the the forward probem,  can be generalized to more general lattices, which can be seen in \cite{AnIsoMo(2020)}.

For a measure space $(M,d\mu)$, let $L^2(M;{\bf C}^n;d\mu)$ be the  space of the ${\bf C}^n$-valued functions on $M$. It is often denoted by $L^2(M;{\bf C}^n)$ or $L^2(M)$ when $n=1$. For Banach spaces $X$ and $Y$, let ${\bf B}(X;Y)$ be the set of all bounded operators from $X$ to $Y$, and ${\bf B}(X) = {\bf B}(X;X)$.

\section{Quantum graph}
\label{BasicTheory}

\subsection{Vertex Laplacian}
\label{ExampleHexalattice} 
We follow the standard formulation of metric graph (see e.g. \cite{KuchPost} or \cite{Pank06})\footnote{For  the figure  of hexagonal lattice, see  e.g. \cite{KuchPost} or \cite{AnIsoMo(2020)}.}.
In ${\bf R}^2$, let $p^{(1)} = (1,0)$, $p^{(2)}= (2,0)$, ${\bf v}_1 = \Big(\frac{3}{2},-\frac{\sqrt3}{2}\Big)$, $
{\bf v}_2 = \Big(\frac{3}{2},\frac{\sqrt3}{2}\Big)$, and ${\bf v}(n)= n_1{\bf v_1} + n_2{\bf v}_2$ for $n = (n_1,n_2)$.
We define the vertex set $\mathcal V$ by
\begin{equation}
\mathcal V = {\mathop\cup_{i=1}^2}\mathcal V_i, \quad \mathcal V_i = \{p^{(i)} + {\bf v}(n)\, ; \, n \in {\bf Z}^2\}.
\nonumber
\end{equation}
Let $I_2$ : $L^2_{loc}({\mathcal V};{\bf C}) \to L^2_{loc}({\bf Z}^2 ; {\bf C}^2)$ be defined by
\begin{equation}
I_2 : \widehat f(v) \rightarrow (I_2\widehat f)(n) =
\left(
\begin{array}{c}
\widehat f_1(n) \\
\widehat f_2(n)
\end{array}
\right) = 
\left(
\begin{array}{c}
\widehat f(p^{(1)} + {\bf v}(n))\\
\widehat f(p^{(2)} + {\bf v}(n))
\end{array}
\right).
\label{DefineI2}
\end{equation}
We often write $\widehat f(n)$ instead of $(I_2\widehat f)(n)$.
The Laplacian is defined by 
\begin{equation}
\big(\widehat{\Delta}_{\mathcal V}\widehat f\big)(n) = 
\frac{1}{3}
\left(
\begin{split}
\widehat f_2(n_1,n_2) + \widehat f_2(n_1-1,n_2) + \widehat f_2(n_1,n_2-1)\\
\widehat f_1(n_1,n_2) + \widehat f_1(n_1+1,n_2) + \widehat f_2(n_1,n_2+1)
\end{split}
\right),
\nonumber
\end{equation}
which is self-adjoint on $L^2(\mathcal V)$ equipped with the inner product\footnote{Note that the degree of each vertex in $\mathcal V$ is 3. }
\begin{equation}
(\widehat f,\widehat g) = 3\sum_{n \in {\bf Z}^2} \widehat f(n)\cdot \overline{\widehat g(n)}.
\nonumber
\end{equation}
Define the discrete Fourier transform  $\mathcal U_{\mathcal V} : L^2({\bf Z}^2;{\bf C}^2) \to L^2({\bf T}^2;{\bf C}^2)$ by
\begin{equation}
(\mathcal U_{\mathcal V}\widehat f)(x) = 
\sqrt{3}\, (2\pi)^{-1}\sum_{n\in{\bf Z}^2}e^{in\cdot x}\widehat f(n), \quad x \in {\bf T}^2 = {\bf R}^2/(2\pi{\bf Z})^2.
\nonumber
\end{equation}
Then on $L^2({\bf T}^2;{\bf C}^2)$, $\mathcal U_{\mathcal V}(- \Delta_{\mathcal V})\,\mathcal U_{\mathcal V}^{\ast}$ is the operator of multiplication by 
\begin{equation}
H_0(x) = -\frac{1}{3}\left(
\begin{array}{cc}
 0 & 1 + e^{ix_1} + e^{ix_2} \\
1 + e^{-ix_1} + e^{-ix_2} & 0
\end{array}
\right).
\label{DefineH0(x)}
\end{equation}
The edge set $\mathcal E$ consists of the segments ${\bf e}$ of length 1 with end points in $\mathcal V$, endowed with  arclength metric, as well as the identification with the interval $(0,1)$ : 
${\bf e} = \{(1-z){\bf e}(0) + z{\bf e}(1)\, ; \, 0 \leq z \leq 1\}$, where ${\bf e}(0), {\bf e}(1) \in \mathcal V$. We put
\begin{equation}
\mathcal E_v = \mathcal E_v(0) \cup \mathcal E_v(1),
\quad
\mathcal E_v(i) = \{{\bf e} \in \mathcal E\, ; \, {\bf e}(i) = v\}, \quad i = 0, 1.
\nonumber
\end{equation}
 For a function $\widehat f$ on an 
edge ${\bf e} \in \mathcal E_v$, we define $\widehat f'(v)$ to be the derivative at $v$ along ${\bf e}$.
A function $\widehat f = \{\widehat f_{\bf e}\}_{{\bf e} \in \mathcal E}$ defined on $\mathcal E$ is said to satisfy the {\it Kirchhoff condition} if

\smallskip
\noindent
{\bf (K-1)} \noindent $\  \widehat f$ {\it is continuous on $\mathcal E$}.  

\smallskip
\noindent
{\bf (K-2)} \ {\it  $\widehat f_{\bf e}  \in C^1([0,1])$ on each edge ${\bf e} \in \mathcal E$, and 
$\sum_{{\bf e}\in\mathcal E_v} \widehat f'_{\bf e} = 0
$ at any vertex $v \in \mathcal V$.}


\subsection{Edge Laplacian}
We consider  1-dimensional Schr{\"o}dinger operators  
\begin{equation}
h^{(0)}_{\bf e} = - d^2/dz^2, \quad h_{\bf e} = h^{(0)}_{\bf e} + q_{\bf e}(z)
\nonumber
\end{equation} 
on 
$L^2_{\bf e} = L^2(0,1)$.  We  define the Hilbert space $L^2(\mathcal E)$ of ${\bf C}$-valued $L^2$-functions $\widehat f = \big\{\widehat f_{\bf e}\big\}_{{\bf e} \in \mathcal E}$ on the edge set $\mathcal E$ : $L^2(\mathcal E) = {\mathop\oplus_{\bf e \in \mathcal E}}L^2_{\bf e}$
equipped with the inner product
\begin{equation}
(\widehat f,\widehat g)_{L^2({\mathcal E})}  = \sum_{{\bf e}\in {\mathcal E}}(\widehat f_{\bf e},\widehat g_{\bf e})_{L^2(0,1)}.
\label{L2mathcalE}
\end{equation}
Define the Hamiltonian
\begin{equation}
\widehat H_{\mathcal E} : \widehat u = \left\{{\widehat u}_{\bf e}\right\}_{{\bf e}\in \mathcal E} \to 
\left\{h_{{\bf e}}{\widehat u}_{\bf e}\right\}_{{\bf e}\in \mathcal E}
\label{S2HamitonianhatH}
\end{equation}
with domain $D(\widehat H_{\mathcal E})$ consisting of  $\widehat u_{\bf e} \in H^2(0,1)$ \footnote{This is the Sobolev space of order 2.} satisfying the Kirchhoff condition (K-1), (K-2)  
 and $\sum_{{\bf e}\in\mathcal E}\|h_{\bf e}{\widehat u}_{\bf e}\|_{L^2(0,1)}^2 < \infty$. 
Then, $\widehat H_{\mathcal E}$ 
is self-adjoint in $L^2(\mathcal E)$.
When $q_{\bf e} = 0$, $\widehat H_{\mathcal E}$ is denoted by $\widehat H_{\mathcal E}^{(0)}$ or 
$- \widehat\Delta_{\mathcal E}$, i.e. 
\begin{equation}
\big(- \widehat\Delta_{\mathcal E}\widehat u\big)_{\bf e}(z) = - \frac{d^2}{dz^2}\widehat u_{\bf e}(z), \quad {\bf e} \in \mathcal E.
\nonumber
\end{equation}
 We call it  {\it edge Laplacian}.
 Let $q_{\mathcal E}$ be the multiplication operator defined by
\begin{equation}
\big(q_{\mathcal E}\widehat f\big)_{\bf e}(z) = q_{\bf e}(z)\widehat f_{\bf e}(z), \quad {\bf e} \in \mathcal E.
\nonumber
\end{equation}
Then
$\widehat H_{\mathcal E} = \widehat H^{(0)}_{\mathcal E} + q_{\mathcal E}.$ We put
\begin{equation}
 \widehat R^{(0)}_{\mathcal E}(\lambda) = (\widehat H^{(0)}_{\mathcal E} - \lambda)^{-1}, \quad 
 \widehat R_{\mathcal E}(\lambda) = (\widehat H_{\mathcal E} - \lambda)^{-1}.
\label{S3ResolventsRandR0}
\end{equation}

Let 
$ - (d^2/dz^2)_D$
be the Laplacian on $(0,1)$ with boundary condition $u(0) = u(1)=0$.  
Let $\phi_{{\bf e}0}(z,\lambda), \phi_{{\bf e}1}(z,\lambda)$ be the solutions of 
\begin{equation}
\big(- d^2/dz^2 + q_{\bf e}(z) - \lambda\big)\phi = 0
\label{DiffEqontheedege}
\end{equation}
 with initial data
\begin{equation}
\left\{
\begin{split}
&\phi_{{\bf e}0}(0,\lambda) = 0, \\
& \phi'_{{\bf e}0}(0,\lambda)=1, 
\end{split}
\right.
\qquad
\left\{
\begin{split}
&\phi_{{\bf e}1}(1,\lambda) = 0, \\
& \phi'_{{\bf e}1}(1,\lambda)=-1. 
\end{split}
\right.
\nonumber
\end{equation}
In the following, we assume that 
\begin{equation}
	\lambda \not\in {\mathop\cup_{{\bf e} \in \mathcal E}}\sigma(-(d^2/dz^2)_D + q_{{\bf e}}(z)),
\nonumber
\end{equation}
which guarantees that $\phi_{{\bf e}0}(1,\lambda) \neq 0$ and $\phi_{{\bf e}1}(0,\lambda) \neq 0$. 
If $w, v \in \mathcal V$ are two end ponts of  an  edge ${\bf e} \in \mathcal E$, we define  $\psi_{wv}(z,\lambda)$  by
$$
\psi_{wv}(z,\lambda) =
\left\{
\begin{split}
& \phi_{{\bf e}0}(z,\lambda), \quad {\rm if} \quad {\bf e}(0)  = v, \\
& \phi_{{\bf e}1}(z,\lambda), \quad {\rm if} \quad {\bf e}(0) = w.
\end{split}
\right.
$$
Note that by the assumption (Q-3), we have  $\phi_{{\bf e}0}(z,\lambda) = \phi_{{\bf e}1}(1 - z,\lambda)$, hence
\begin{equation}
\psi_{wv}(1,\lambda) = \psi_{vw}(1,\lambda).
\nonumber
\end{equation}
\begin{definition}
We define the {\it reduced vertex Laplacian} $\widehat\Delta_{\mathcal V, \lambda}$ on $\mathcal V$ by
\begin{equation}
\big(\widehat\Delta_{\mathcal V, \lambda}\widehat u\big)(v) = \frac{1}{3}
\sum_{w \sim v}\frac{1}{\psi_{wv}(1,\lambda)}\widehat u(w), \quad v \in \mathcal V
\label{S3DefineVertexLaplacian}
\end{equation}
for  $\widehat u \in L^2_{loc}(\mathcal V)$, where $w \sim v$ means that there exists an  edge ${\bf e} \in \mathcal E$ such that $v, w$ are end points of ${\bf e}$.
We also define a scalar multiplication operator:
\begin{equation}
\big(\widehat Q_{\mathcal V,\lambda}\widehat u\big)(v) = \widehat Q_{v,\lambda}(v)\widehat u(v), 
\nonumber
\end{equation}
where
\begin{equation}
\widehat Q_{v,\lambda}(v) =  \frac{1}{3}\sum_{w \in \mathcal E_v}\frac{\psi'_{wv}(1,\lambda)}{\psi_{wv}(1,\lambda)}.
\label{S3DefinewidehatQVlambda}
\end{equation}
\end{definition}

The resolvent 
$r_{{\bf e}}(\lambda) = ( - (d^2/dz^2)_D + q_{\bf e}(z) - \lambda)^{-1}$ is written as 
\begin{equation}
(r_{\bf e}(\lambda)\widehat f)(v) = \int_0^z\frac{\phi_{{\bf e}1}(z,\lambda)\phi_{{\bf e}0}(t,\lambda)}{\phi_{{\bf e}0}(1,\lambda)}\widehat f(t)dt 
+ \int_z^1\frac{\phi_{{\bf e}0}(z,\lambda)\phi_{{\bf e}1}(t,\lambda)}{\phi_{{\bf e}1}(0,\lambda)}\widehat f(t)dt.
\nonumber
\end{equation}
We put
\begin{equation}
\begin{split}
\Phi_{{\bf e}0}(\lambda)\widehat f &= \frac{d}{dz}\big(r_{\bf e}(\lambda)\widehat f\big)\Big|_{z = 0} = \int_0^1\frac{\phi_{{\bf e}1}(t,\lambda)}{\phi_{{\bf e}1}(0,\lambda)}\widehat f(t)dt, \\
\Phi_{{\bf e}1}(\lambda)\widehat f &= -\frac{d}{dz}\big(r_{\bf e}(\lambda)\widehat f\big)\Big|_{z = 1} = \int_0^1\frac{\phi_{{\bf e}0}(t,\lambda)}{\phi_{{\bf e}0}(1,\lambda)}\widehat f(t)dt,
\end{split}
\nonumber
\end{equation}
and define an operator $\widehat T_{\mathcal V}(\lambda) : L^2_{loc}(\mathcal E) \to L^2_{loc}(\mathcal V)$ by
\begin{equation}
\begin{split}
\big(\widehat T_{{\mathcal V}}(\lambda)\widehat f\big)(v)  = \frac{1}{3}\Big(\sum_{{\bf e}\in\mathcal E_v(1)} \Phi_{{\bf e}1}(\lambda)\widehat f_{\bf e} + \sum_{{\bf e} \in \mathcal E_v(0)}\Phi_{{\bf e}0}(\lambda)\widehat f_{\bf e}\Big), \quad 
v \in \mathcal V.
\label{DefineFe(lambda)}
\end{split}
\end{equation}

Let $\widehat u = \{\widehat u_{\bf e}\}_{{\bf e}\in\mathcal E}$ be a solution to the equation $(\widehat H_{\mathcal{E}}-\lambda)\widehat u = \widehat f$. On each edge ${\bf e} \in \mathcal E$, it is written as 
\begin{equation}
\begin{split}
\widehat u_{{\bf e}}(z,\lambda) &= \Phi_{{\bf e}1}(\overline{\lambda})^{\ast}c_{{\bf e}}(1,\lambda) + 
\Phi_{{\bf e}0}(\overline{\lambda})^{\ast}c_{{\bf e}}(0,\lambda) + r_{{\bf e}}(\lambda)\widehat f_{{\bf e}} \\
&= \frac{\phi_{{\bf e}0}(t,\lambda)}{\phi_{{\bf e}0}(1,\lambda)}c_{\bf e}(1,\lambda) + \frac{\phi_{{\bf e}1}(t,\lambda)}{\phi_{{\bf e}1}(1,\lambda)}c_{\bf e}(0,\lambda) + r_{{\bf e}}(\lambda)\widehat f_{{\bf e}}
\end{split}
\label{widehatue(z,lambda)=PhiC+ref}
\end{equation}
with some constants $c_{{\bf e}}(0,\lambda)$, $c_{{\bf e}}(1,\lambda)$. Then, the condition (K-1) is satisfied if and only if for two edges ${\bf e}, {\bf e}' \in \mathcal E$ and  $p, q = 0, 1$, $c_{\bf e}(p,\lambda) = c_{{\bf e}'}(q,\lambda)$ if 
${\bf e}(p) = {\bf e}'(q)$.

\begin{lemma}
\label{Lemma3.2}
Let $\widehat u\big|_{\mathcal V}$ be the restriction of $\widehat u$ on $\mathcal V$. Then
the  condition (K-2) is rewritten as
\begin{equation}
\begin{split}
& \left(- \widehat\Delta_{\mathcal V,\lambda}+  \widehat{Q}_{\mathcal V,\lambda}\right)\widehat u\big|_{\mathcal V} = {\widehat T}_{\mathcal V}(\lambda)\widehat f.
\end{split}
\label{DeltaVlambdawidehataU+lambdaQVU=f}
\end{equation}
\end{lemma}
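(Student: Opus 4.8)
The plan is to start from the explicit edge representation \eqref{widehatue(z,lambda)=PhiC+ref} of a solution of $(\widehat H_{\mathcal E}-\lambda)\widehat u = \widehat f$ and translate the derivative part (K-2) of the Kirchhoff condition into a relation among the vertex values and normal derivatives, one vertex at a time. First I would identify the constants with the boundary values of $\widehat u$. Evaluating \eqref{widehatue(z,lambda)=PhiC+ref} at $z=0$ and $z=1$ and using $\phi_{{\bf e}0}(0,\lambda)=0$, $\phi_{{\bf e}1}(1,\lambda)=0$ together with the fact that $r_{\bf e}(\lambda)\widehat f_{\bf e}$ obeys the Dirichlet condition and hence vanishes at both endpoints, one obtains $\widehat u_{\bf e}(0,\lambda)=c_{\bf e}(0,\lambda)$ and $\widehat u_{\bf e}(1,\lambda)=c_{\bf e}(1,\lambda)$. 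By continuity (K-1) these values are consistent across edges meeting at a vertex, so $c_{\bf e}(i,\lambda)=\widehat u\big|_{\mathcal V}({\bf e}(i))$.

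Next I would compute, for each ${\bf e}\in\mathcal E_v$, the normal derivative $\widehat u'_{\bf e}(v)$ entering (K-2), taken with the convention that it points from $v$ into ${\bf e}$. There are two cases, $v={\bf e}(0)$ and $v={\bf e}(1)$; in the former I differentiate \eqref{widehatue(z,lambda)=PhiC+ref} at $z=0$, in the latter at $z=1$ with a minus sign. Using $\phi'_{{\bf e}0}(0,\lambda)=1$, $\phi'_{{\bf e}1}(1,\lambda)=-1$ and the definitions of $\Phi_{{\bf e}0}(\lambda)$, $\Phi_{{\bf e}1}(\lambda)$, each derivative splits into an off-diagonal term carrying $\widehat u(w)$ at the opposite vertex $w$, a diagonal term proportional to $\widehat u(v)$, and the source $\Phi_{{\bf e}i}(\lambda)\widehat f_{\bf e}$. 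The symmetry (Q-3), written as $\phi_{{\bf e}0}(z,\lambda)=\phi_{{\bf e}1}(1-z,\lambda)$, gives $\phi_{{\bf e}0}(1,\lambda)=\phi_{{\bf e}1}(0,\lambda)$ and $\phi'_{{\bf e}0}(1,\lambda)=-\phi'_{{\bf e}1}(0,\lambda)$, which is exactly what makes the off-diagonal coefficient equal to $1/\psi_{wv}(1,\lambda)$ and the diagonal coefficient equal to $\psi'_{wv}(1,\lambda)/\psi_{wv}(1,\lambda)$ regardless of the orientation of ${\bf e}$, matching \eqref{S3DefineVertexLaplacian} and \eqref{S3DefinewidehatQVlambda}.

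Finally I would sum over all ${\bf e}\in\mathcal E_v=\mathcal E_v(0)\cup\mathcal E_v(1)$ and impose $\sum_{{\bf e}\in\mathcal E_v}\widehat u'_{\bf e}(v)=0$. The off-diagonal contributions assemble into $\sum_{w\sim v}\frac{1}{\psi_{wv}(1,\lambda)}\widehat u(w)=3\,(\widehat\Delta_{\mathcal V,\lambda}\widehat u\big|_{\mathcal V})(v)$; the diagonal contributions assemble into $3\,\widehat Q_{v,\lambda}(v)\widehat u(v)$; and the source contributions, grouped as the $\mathcal E_v(0)$- and $\mathcal E_v(1)$-sums of $\Phi_{{\bf e}0}(\lambda)\widehat f_{\bf e}$ and $\Phi_{{\bf e}1}(\lambda)\widehat f_{\bf e}$, are precisely $3\,(\widehat T_{\mathcal V}(\lambda)\widehat f)(v)$ by \eqref{DefineFe(lambda)}. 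Since the common factor $3$ cancels, rearranging yields \eqref{DeltaVlambdawidehataU+lambdaQVU=f}, and as every step is reversible the vertex equation is equivalent to (K-2).

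I expect the main obstacle to be the orientation bookkeeping: one must verify that the coefficients $1/\phi_{{\bf e}0}(1,\lambda)$ and $\phi'_{{\bf e}1}(0,\lambda)/\phi_{{\bf e}1}(0,\lambda)$ read off at a $z=0$ endpoint agree, after (Q-3), with the coefficients $1/\phi_{{\bf e}1}(0,\lambda)$ and $-\phi'_{{\bf e}0}(1,\lambda)/\phi_{{\bf e}0}(1,\lambda)$ read off at a $z=1$ endpoint, so that the quantities $\psi_{wv}(1,\lambda)$ and $\psi'_{wv}(1,\lambda)$ are genuinely edge quantities. This is where the evenness of the potential is essential, and it is the same computation that underlies the stated identity $\psi_{wv}(1,\lambda)=\psi_{vw}(1,\lambda)$.
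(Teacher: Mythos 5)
Your proposal is correct and follows essentially the same route as the paper: the paper's proof consists precisely of substituting the edge representation \eqref{widehatue(z,lambda)=PhiC+ref} into (K-2), with $c_{\bf e}(0,\lambda)$, $c_{\bf e}(1,\lambda)$ identified with the vertex values of $\widehat u$, and collecting the off-diagonal, diagonal, and source terms into $-\widehat\Delta_{\mathcal V,\lambda}$, $\widehat Q_{\mathcal V,\lambda}$, and $\widehat T_{\mathcal V}(\lambda)\widehat f$ respectively. Your write-up merely makes explicit the orientation bookkeeping via (Q-3) (i.e.\ $\phi_{{\bf e}0}(1,\lambda)=\phi_{{\bf e}1}(0,\lambda)$ and $\phi'_{{\bf e}0}(1,\lambda)=-\phi'_{{\bf e}1}(0,\lambda)$) that the paper leaves implicit in its single displayed identity.
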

This lemma is well-known.
In fact, (K-2) is rewritten as
\begin{equation}
\begin{split}
&- \sum_{{\bf e}\in\mathcal E_v(0)}\frac{1}{\phi_{{\bf e}0}(1,\lambda)}c_{\bf e}(1,\lambda) - \sum_{{\bf e}\in\mathcal E_v(1)}\frac{1}{\phi_{{\bf e}1}(0,\lambda)}c_{\bf e}(0,\lambda)\\
& - \sum_{{\bf e}\in\mathcal E_v(0)}\frac{\phi'_{{\bf e}1}(0,\lambda)}{\phi_{{\bf e}1}(0,\lambda)}c_{\bf e}(0,\lambda) + \sum_{{\bf e}\in\mathcal E_v(1)}\frac{\phi'_{{\bf e}0}(1,\lambda)}{\phi_{{\bf e}0}(1,\lambda)}c_{\bf e}(1,\lambda)\\
&=  \sum_{{\bf e}\in\mathcal E_v(1)} \Phi_{{\bf e}1}(\lambda)\widehat f_{\bf e} + \sum_{{\bf e}\in\mathcal E_v(0)} \Phi_{{\bf e}0}(\lambda)\widehat f_{\bf e},
\end{split}
\nonumber
\end{equation}
which implies (\ref{DeltaVlambdawidehataU+lambdaQVU=f}). 
Therefore, $\widehat u\big|_{\mathcal V}$ should be written  as
\begin{equation}
\widehat u\big|_{\mathcal V} =  \left(- \widehat\Delta_{\mathcal V,\lambda}+  \widehat{Q}_{\mathcal V,\lambda}\right)^{-1}\widehat T_{\mathcal V}(\lambda)\widehat f.
\label{widehatu|mathcalVformal}
\end{equation}
Here, we must be careful about the operator $\big(- \widehat\Delta_{\mathcal V,\lambda}+  \widehat{Q}_{\mathcal V,\lambda}\big)^{-1}$.
For $\lambda \not\in {\bf R}$, the operator $-\widehat \Delta_{\mathcal V,\lambda} + \widehat Q_{\mathcal V,\lambda}$  has complex coefficients, hence is not self-adjoint. Therefore, the existence of its inverse is not obvious.  We discuss the validity of (\ref{widehatu|mathcalVformal})  in Subsection \ref{Limiting absorption principle}.  For the moment, we admit it as a formal formula. 

Noting that  $\widehat T_{\mathcal V}(\overline{\lambda})^{\ast} : L^2_{loc}(\mathcal V) \to L^2_{loc}(\mathcal E)$ is written as (see (\ref{widehatue(z,lambda)=PhiC+ref}))
\begin{equation}
(\widehat T_{\mathcal V}(\overline{\lambda})^{\ast}\widehat u)_{\bf e}(z) = 
\Phi_{{\bf e}1}(\overline{\lambda})^{\ast}\widehat u({\bf e}(1)) + 
\Phi_{{\bf e}0}(\overline{\lambda})^{\ast}\widehat u({\bf e}(0)),
\nonumber
\end{equation}
we have the following lemma by (\ref{DeltaVlambdawidehataU+lambdaQVU=f}).  Let $r_{\mathcal E}(\lambda) \in 
{\bf B}(L^2(\mathcal E))$ be defined by
\begin{equation}
r_{\mathcal E}(\lambda)\widehat f = r_{e}(\lambda)\widehat f_{\bf e}, \quad {\rm on} \quad {\bf e}.
\nonumber
\end{equation}
\begin{lemma}
\label{ResolventofwidehatHE}
The resolvent of $\widehat H_{\mathcal E}$ is  written as
\begin{equation}
\widehat R_{\mathcal E}(\lambda)= 
\widehat T_{\mathcal V}(\overline\lambda)^{\ast}\left(- \widehat\Delta_{\mathcal V,\lambda} + \widehat Q_{\mathcal V,\lambda}\right)^{-1}\widehat T_{\mathcal V}(\lambda) + r_{\mathcal E}(\lambda).
\label{Resolventformalformula1}
\end{equation}
\end{lemma}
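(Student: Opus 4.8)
The plan is to read off (\ref{Resolventformalformula1}) directly from the edge-by-edge solution formula (\ref{widehatue(z,lambda)=PhiC+ref}) once the two Kirchhoff conditions have been used, since all of the analytic work has already been carried out in the preceding paragraphs. Let $\widehat u = \widehat R_{\mathcal E}(\lambda)\widehat f$ be the solution of $(\widehat H_{\mathcal E}-\lambda)\widehat u = \widehat f$. On each edge ${\bf e}$, $\widehat u_{\bf e}$ has the form (\ref{widehatue(z,lambda)=PhiC+ref}) with constants $c_{\bf e}(0,\lambda), c_{\bf e}(1,\lambda)$. The first step is to interpret these constants: by the continuity condition (K-1), which forces $c_{\bf e}(p,\lambda)$ to agree across all edges meeting at a common vertex, the constant $c_{\bf e}(p,\lambda)$ is precisely the vertex value $\widehat u({\bf e}(p))$, i.e. the component of $\widehat u\big|_{\mathcal V}$ at ${\bf e}(p)$. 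Substituting $c_{\bf e}(1,\lambda) = \widehat u({\bf e}(1))$ and $c_{\bf e}(0,\lambda) = \widehat u({\bf e}(0))$ into (\ref{widehatue(z,lambda)=PhiC+ref}), the leading two terms on edge ${\bf e}$ become exactly $\Phi_{{\bf e}1}(\overline\lambda)^{\ast}\widehat u({\bf e}(1)) + \Phi_{{\bf e}0}(\overline\lambda)^{\ast}\widehat u({\bf e}(0))$, which is the stated formula for $(\widehat T_{\mathcal V}(\overline\lambda)^{\ast}\widehat u\big|_{\mathcal V})_{\bf e}$. Hence, recognizing the last term as $(r_{\mathcal E}(\lambda)\widehat f)_{\bf e}$, the edge formula collapses to the operator identity
\begin{equation}
\widehat u = \widehat T_{\mathcal V}(\overline\lambda)^{\ast}\,\widehat u\big|_{\mathcal V} + r_{\mathcal E}(\lambda)\widehat f.
\nonumber
\end{equation}

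The second step is to eliminate $\widehat u\big|_{\mathcal V}$. Here I invoke Lemma~\ref{Lemma3.2}: the remaining condition (K-2) is equivalent to (\ref{DeltaVlambdawidehataU+lambdaQVU=f}), and solving that linear relation on $\mathcal V$ gives $\widehat u\big|_{\mathcal V} = (- \widehat\Delta_{\mathcal V,\lambda}+ \widehat Q_{\mathcal V,\lambda})^{-1}\widehat T_{\mathcal V}(\lambda)\widehat f$, which is (\ref{widehatu|mathcalVformal}). Inserting this into the displayed identity, and using that $\widehat u = \widehat R_{\mathcal E}(\lambda)\widehat f$ for arbitrary $\widehat f$, yields (\ref{Resolventformalformula1}). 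Conversely, one should check that the right-hand side of (\ref{Resolventformalformula1}) does produce an element of $D(\widehat H_{\mathcal E})$ solving the equation: on each edge the ansatz (\ref{widehatue(z,lambda)=PhiC+ref}) already solves $(h_{\bf e}-\lambda)\widehat u_{\bf e} = \widehat f_{\bf e}$, (K-1) holds because the boundary data were taken to be genuine vertex values, and (K-2) holds by construction via (\ref{DeltaVlambdawidehataU+lambdaQVU=f}); this makes the computation reversible and pins down $\widehat u$ uniquely.

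The only genuinely delicate point is the meaning of the inverse $(- \widehat\Delta_{\mathcal V,\lambda}+ \widehat Q_{\mathcal V,\lambda})^{-1}$ for non-real $\lambda$, where the reduced operator is non-self-adjoint and its invertibility is not a priori clear; this is exactly the issue flagged just before the statement and deferred to Subsection~\ref{Limiting absorption principle}. Accordingly, I would present (\ref{Resolventformalformula1}) at this stage as a formal identity, valid wherever the inverse exists, and rely on the later limiting-absorption analysis to justify it as an equality of bounded operators on the resolvent set. I do not expect any further obstacle: once invertibility is granted, every step above is a bookkeeping rearrangement of formulas already established.
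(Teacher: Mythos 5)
Your proposal is correct and follows essentially the same route as the paper: the paper also identifies the constants $c_{\bf e}(p,\lambda)$ in (\ref{widehatue(z,lambda)=PhiC+ref}) with the vertex values via (K-1), invokes Lemma \ref{Lemma3.2} to get (\ref{widehatu|mathcalVformal}), applies $\widehat T_{\mathcal V}(\overline\lambda)^{\ast}$ to recover the edge solution, and defers the meaning of $(-\widehat\Delta_{\mathcal V,\lambda}+\widehat Q_{\mathcal V,\lambda})^{-1}$ to the limiting absorption principle in Subsection \ref{Limiting absorption principle}. Your write-up is in fact more detailed than the paper's (which states the lemma as an immediate consequence of (\ref{DeltaVlambdawidehataU+lambdaQVU=f})), including the reverse check that the right-hand side lies in $D(\widehat H_{\mathcal E})$ and solves the equation.
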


For the unperturbed case $\widehat q_{\mathcal E} = 0$, we put the superscript $^{(0)}$ for every term. Then, we have
\begin{equation}
\phi_{{\bf e}0}^{(0)}(z) = \frac{\sin\sqrt{\lambda}z}{\sqrt{\lambda}}, \quad \phi_{{\bf e}1}^{(0)}(z) = \frac{\sin\sqrt{\lambda}(1-z)}{\sqrt{\lambda}}.
\label{phie10unperturbed}
\end{equation}
Therefore by (\ref{S3DefineVertexLaplacian}) and (\ref{S3DefinewidehatQVlambda}),  
\begin{equation}
\Big(\widehat\Delta^{(0)}_{\mathcal V,\lambda}\widehat u\Big)(v) = \frac{\sqrt{\lambda}}{\sin\sqrt{\lambda}}\frac{1}{3}\sum_{w\in \mathcal E_v}\widehat u(w) = \frac{\sqrt{\lambda}}{\sin\sqrt{\lambda}}\big(\widehat\Delta_{\mathcal V}\widehat u\big)(v),
\label{DefineDeltaVlambda(0)}
\end{equation}
\begin{equation}
\widehat Q^{(0)}_{v,\lambda} = \frac{\sqrt{\lambda}}{\sin{\sqrt{\lambda}}}\cos\sqrt{\lambda}.
\label{DefinewidehatQ0}
\end{equation}
Lemma \ref{ResolventofwidehatHE} then implies the following formula
\begin{equation}
\widehat R_{\mathcal E}^{(0)}(\lambda) 
= \widehat T^{(0)}_{\mathcal V}(\overline\lambda)^{\ast}\frac{\sin{\sqrt{\lambda}}}{\sqrt{\lambda}}\big(- \widehat\Delta_{\mathcal V} + \cos\sqrt{\lambda}\big)^{-1}\widehat T^{(0)}_{\mathcal V}(\lambda) + r^{(0)}_{\mathcal E}(\lambda).
\label{Resolventformula2}
\end{equation}


\section{Resolvent estimates}
\label{ResolventEstimates}
\subsection{Limiting absorption principle}
\label{Limiting absorption principle}
 In our previous work \cite{AnIsoMo17}, we proved  resolvent estimates of vertex Laplacian $- \Delta_{\mathcal V}$ in weighted $L^2$ spaces or Besov spaces of ${\bf C}$-valued functions. By virtue of the formulas (\ref{Resolventformalformula1}) and (\ref{Resolventformula2}),  the resolvent estimates of  edge Laplacian are derived from those of vertex Laplacian using the space of $L^2({\bf e})$-valued functions on the edge set $\mathcal E$ defined as follows.

For ${\bf e} \in \mathcal E$, we put
\begin{equation}
c({\bf e}) = \frac{1}{2}\left({\bf e}(0) + {\bf e}(1)\right).
\nonumber
\end{equation}
Letting $r_{-1}=0$, $r_j = 2^j$ ($j \geq 0$),  we define
\begin{eqnarray}
\nonumber
\widehat{\mathcal B}({\mathcal E}) \ni \widehat f 
&\Longleftrightarrow&
\|\widehat f\|_{\widehat{\mathcal B}(\mathcal E)} 
= \sum_{j=0}^{\infty}r_j^{1/2}
\Big(\sum_{r_{j-1}\leq |c({\bf e})| < r_j}\|\widehat f_{{\bf e}}\|^2_{L^2(0,1)}\Big)^{1/2} < \infty,
\nonumber
\\
\widehat{\mathcal B}^{\ast}({\mathcal E}) \ni \widehat f 
&\Longleftrightarrow&
\|\widehat f\|^2_{\widehat{\mathcal B}^{\ast}(\mathcal E)} = \sup_{R>1}\frac{1}{R}\sum_{|c({\bf e})| < R}\|\widehat f_{{\bf e}}\|_{L^2(0,1)}^2  < \infty,
\nonumber
\\
\widehat{\mathcal B}^{\ast}_{0}(\mathcal E)\ni \widehat f 
&\Longleftrightarrow&
\lim_{R\to\infty}\frac{1}{R}\sum_{|c({\bf e})| < R}\|\widehat f_{{\bf e}}\|_{L^2(0,1)}^2  =0.
\nonumber
\\
{\widehat L}^{\; 2,s}(\mathcal E) \ni \widehat f
&\Longleftrightarrow& 
\sum_{{\bf e} \in {\mathcal E}}
(1 + |c({\bf e})|^2)^{s}\|\widehat f_{{\bf e}}\|_{L^2(0,1)}^2  < \infty,\quad s \in {\bf R}.
\nonumber
\end{eqnarray}
 The function spaces $\widehat{\mathcal B}(\mathcal V)$ etc on the vertex set $\mathcal V$ are defined similarly for ${\bf C}$-valued functions with $c({\bf e})$ and $\widehat f_{\bf e}$ replaced by $v$ and $\widehat f(v)$, $v \in \mathcal V$, respectively.

Taking account of (\ref{Resolventformula2}), we define the characteristic surface of 
$- \Delta^{(0)}_{\mathcal V, \lambda}$ by 
$$
M_{\lambda} = \{x \in {\bf T}^2\, ; \, \det(H_0(x) + \cos\sqrt{\lambda}) = 0\}.
$$
Lemma 3.3 of \cite{AnIsoMo17} implies that $M_{\lambda}$ is smooth if $\cos\sqrt{\lambda} \neq 0, \pm 1/2, \pm 1,  \lambda \in {\bf R}$. Note that
$$
\sigma(-(d^2/dz^2)_D) = \{(\pi n)^2\, ;\, \, n \in {\bf Z}\} = 
\{\lambda\, ; \,\cos\sqrt{\lambda} = \pm 1\}.
$$
We put
\begin{equation}
\mathcal T^{(0)} = \{\lambda\, ; \, \cos\sqrt{\lambda} = 0, \pm 1/3, \pm 1\},
\nonumber
\end{equation}
\begin{equation}
\mathcal T = \mathcal T^{(0)} \cup\big(\cup_{{\bf e} \in \mathcal E}\sigma(- (d^2/dz^2)_D + q_{{\bf e}}(z))\big).
\label{DefinemathcalT}
\end{equation}

Let us return to the problem for $\big(- \widehat\Delta_{\mathcal V,\lambda}+  \widehat{Q}_{\mathcal V,\lambda}\big)^{-1}$ we have encountered in \S \ref{BasicTheory}. 
First we consider the case $q_{\mathcal E} = 0$. For $\lambda \in (0,\infty)\setminus{\mathcal T}^{(0)}$, 
arguing as in the proof of Theorem 7.7 in \cite{AnIsoMo17}, one can prove the uniform boundedness of $\big(- \widehat\Delta_{\mathcal V,\lambda}+  \widehat{Q}_{\mathcal V,\lambda \pm i\epsilon} \big)^{-1}$ and the existence of strong limit in ${\bf B}(L^{2,s}(\mathcal V) ; L^{2,-s}(\mathcal V))$, $s > 1/2$, and weak $\ast$-limit in ${\bf B}(\mathcal B(\mathcal V) ; {\mathcal B}^{\ast}(\mathcal V))$ of $\big(- \widehat\Delta_{\mathcal V,\lambda}+  \widehat{Q}_{\mathcal V,\lambda \pm i0} \big)^{-1}$.    
The arguments in \S \ref{BasicTheory} are then justified if we consider
all operators in $\mathcal B(\mathcal E)$ or $\mathcal B^{\ast}(\mathcal E)$. The limiting absorption principle is then extended to the edge Laplacian in the following way.
\begin{theorem}
\label{LAPwholespace}
(1) For any compact interval  $I$ in $(0,\infty)\setminus\mathcal T$, 
 there exists a constant $C > 0$ such that for any $\lambda \in I$ and $\epsilon > 0$
\begin{equation}
\|( \widehat H_{\mathcal E} - \lambda \mp i\epsilon)^{-1}\|_{{\bf B}(\mathcal B(\mathcal E);{\mathcal B}^{\ast}(\mathcal E))} \leq C.
\end{equation}

\noindent
(2) For any $\lambda \in (0,\infty)\setminus {\mathcal T}$ and $s > 1/2$, there exists a strong limit 
\begin{equation}
{\mathop{\rm s-lim}_{\epsilon\downarrow 0}}( \widehat H_{\mathcal E} - \lambda \mp i\epsilon)^{-1} := (\widehat H_{\mathcal E} - \lambda \mp i0)^{-1} \in 
{\bf B}\big({\widehat L}^{2,s}(\mathcal E);{\widehat L}^{2,-s}(\mathcal E)\big),
\end{equation}
and for any $\widehat f \in \widehat L^{2,s}(\mathcal E)$, $(\widehat H_{\mathcal E} - \lambda \mp i0)^{-1}\widehat f$ is an $\widehat L^{2,-s}(\mathcal E)$-valued strongly continuous function of $\lambda$.\\
\noindent
(3) For any $\widehat f, \widehat g \in \widehat{\mathcal B}(\mathcal E)$ and $\lambda \in (0,\infty)\setminus{\mathcal T}$, there exists a limit
\begin{equation}
{\mathop{\lim}_{\epsilon\downarrow 0}}\big((\widehat H_{\mathcal E} - \lambda \mp i\epsilon)^{-1}\widehat f,\widehat g\big) := \big(( \widehat H_{\mathcal E} - \lambda \mp i0)^{-1}\widehat f,\widehat g\big),
\end{equation}
and $ \big((\widehat H_{\mathcal E} - \lambda \mp i0)^{-1}\widehat f,\widehat g\big)$ is a continuous function of $\lambda$.
\end{theorem}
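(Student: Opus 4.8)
The plan is to transfer the limiting absorption principle already obtained for the reduced vertex operator to the edge Laplacian by means of the resolvent identity (\ref{Resolventformalformula1}),
\[
\widehat R_{\mathcal E}(\lambda)=\widehat T_{\mathcal V}(\overline\lambda)^{\ast}\big(-\widehat\Delta_{\mathcal V,\lambda}+\widehat Q_{\mathcal V,\lambda}\big)^{-1}\widehat T_{\mathcal V}(\lambda)+r_{\mathcal E}(\lambda).
\]
The guiding observation is that the only factor whose behaviour is singular as $\lambda$ approaches the real axis is the middle one; for it the preceding discussion (following Theorem 7.7 of \cite{AnIsoMo17}) supplies, for $\lambda\in(0,\infty)\setminus\mathcal T$, the uniform bound, the strong limit in ${\bf B}(L^{2,s}(\mathcal V);L^{2,-s}(\mathcal V))$ with $s>1/2$, and the weak$^{\ast}$-limit in ${\bf B}(\mathcal B(\mathcal V);\mathcal B^{\ast}(\mathcal V))$ of $\big(-\widehat\Delta_{\mathcal V,\lambda\pm i\epsilon}+\widehat Q_{\mathcal V,\lambda\pm i\epsilon}\big)^{-1}$. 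First I would record that, since $q_{\bf e}$ vanishes off a finite set of edges, $-\widehat\Delta_{\mathcal V,\lambda}+\widehat Q_{\mathcal V,\lambda}$ differs from its free counterpart $\frac{\sqrt\lambda}{\sin\sqrt\lambda}(-\widehat\Delta_{\mathcal V}+\cos\sqrt\lambda)$ by a finite-rank operator supported on the vertices adjacent to the perturbed edges, so that a Fredholm/finite-dimensional argument carries the free vertex LAP over to the perturbed one, the exceptional values being precisely those absorbed into $\mathcal T$ through (\ref{DefinemathcalT}).

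Next I would establish the mapping and regularity properties of the two connecting operators. By (\ref{DefineFe(lambda)}) the value $(\widehat T_{\mathcal V}(\lambda)\widehat f)(v)$ depends only on the three edges of $\mathcal E_v$, and each $\Phi_{{\bf e}i}(\lambda)$ is integration against a ratio of the fundamental solutions $\phi_{{\bf e}j}(\cdot,\lambda)$; for the unperturbed edges this ratio is of the form $\sin(\sqrt\lambda(1-t))/\sin\sqrt\lambda$, and for the finitely many remaining edges it is analytic in $\lambda$ off $\mathcal T$. Hence Cauchy--Schwarz on each edge gives $|\Phi_{{\bf e}i}(\lambda)\widehat f_{\bf e}|\le C(\lambda)\,\|\widehat f_{\bf e}\|_{L^2(0,1)}$ with $C(\lambda)$ locally uniform on $(0,\infty)\setminus\mathcal T$. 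Since $|c({\bf e})|$ and $|v|$ differ by at most $1/2$ for $v\sim{\bf e}$, the dyadic annuli defining the two families of norms match up, and one obtains $\widehat T_{\mathcal V}(\lambda)\in{\bf B}(\widehat{\mathcal B}(\mathcal E);\mathcal B(\mathcal V))$ and $\widehat T_{\mathcal V}(\lambda)\in{\bf B}(\widehat L^{\,2,s}(\mathcal E);L^{2,s}(\mathcal V))$, with the dual statements for $\widehat T_{\mathcal V}(\overline\lambda)^{\ast}$ read off from its explicit edgewise form. Because the $\phi_{{\bf e}j}(\cdot,\lambda)$ are analytic where they are nonvanishing, both operators are norm-analytic in $\lambda$, in particular continuous up to the real axis with no $\pm i0$ distinction.

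It then remains to dispose of the remainder $r_{\mathcal E}(\lambda)$. This operator is block-diagonal, acting inside each edge as the Dirichlet resolvent $r_{\bf e}(\lambda)$, whose norm on $L^2(0,1)$ is bounded uniformly in ${\bf e}$ and locally uniformly in $\lambda\in(0,\infty)\setminus\mathcal T$, the excluded Dirichlet spectra lying in $\mathcal T$. Consequently $r_{\mathcal E}(\lambda)$ maps $\widehat{\mathcal B}(\mathcal E)$ and $\widehat L^{\,2,s}(\mathcal E)$ boundedly into themselves, is analytic in $\lambda$ across the real axis, and carries no boundary-value ambiguity. Composing the three factors and letting $\epsilon\downarrow 0$, the product inherits the uniform bound of statement~(1), the strong limit of statement~(2) in the weighted spaces, and the scalar limits of statement~(3) in the $\widehat{\mathcal B}(\mathcal E)$ pairing, the continuity in $\lambda$ in each case following from the joint continuity of the three factors.

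The step I expect to be the genuine obstacle is the first one, namely passing the vertex LAP from the free operator to the perturbed operator uniformly up to the real axis. Concretely, one must verify that the finite-rank Fredholm factor $I+\big(\widehat Q_{\mathcal V,\lambda}-\widehat Q^{(0)}_{\mathcal V,\lambda}-\widehat\Delta_{\mathcal V,\lambda}+\widehat\Delta^{(0)}_{\mathcal V,\lambda}\big)\big(-\widehat\Delta^{(0)}_{\mathcal V,\lambda}+\widehat Q^{(0)}_{\mathcal V,\lambda}\big)^{-1}$ is boundedly invertible for every $\lambda\in(0,\infty)\setminus\mathcal T$; this amounts to ruling out embedded eigenvalues and resonances of $\widehat H_{\mathcal E}$ off $\mathcal T$. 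Once this finite-dimensional nondegeneracy is secured, everything else reduces to a routine composition in the Besov and weighted $L^2$ scales.
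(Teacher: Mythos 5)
Your overall architecture coincides with the paper's: the factorization (\ref{Resolventformalformula1}) is exactly how the paper transfers the limiting absorption principle from the reduced vertex operator to $\widehat H_{\mathcal E}$, and your edgewise estimates for $\widehat T_{\mathcal V}(\lambda)$, $\widehat T_{\mathcal V}(\overline\lambda)^{\ast}$ and $r_{\mathcal E}(\lambda)$ (uniformity over the infinitely many free edges, matching of the dyadic annuli in the $\mathcal E$- and $\mathcal V$-norms, absence of a $\pm i0$ ambiguity in these factors) are the implicit content of Lemma \ref{S2B0astEandB0astVLemma} and of the paper's remark that the formulas of Section \ref{BasicTheory} are justified in $\mathcal B(\mathcal E)$, $\mathcal B^{\ast}(\mathcal E)$. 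So the composition step and the reduction of statements (1)--(3) to the vertex LAP are sound.

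The genuine gap is the step you yourself flag and then leave open: the bounded invertibility of $I + F(\lambda)R_0(\lambda \pm i0)$, where $F(\lambda)$ is the finite-rank difference between the perturbed and the free vertex operators. This is not a peripheral verification that can be deferred; it is the entire analytic content of the theorem for $q_{\mathcal E}\neq 0$, and without it your argument proves the LAP only for $\widehat H^{(0)}_{\mathcal E}$. Nor is it automatic that the exceptional values are "absorbed into $\mathcal T$": on quantum graphs embedded eigenvalues genuinely occur (eigenfunctions supported on a single hexagon, assembled from Dirichlet eigenfunctions of the edge operators), and the Dirichlet spectra $\sigma(-(d^2/dz^2)_D+q_{\bf e})$ are included in (\ref{DefinemathcalT}) precisely because of this phenomenon --- so the nondegeneracy for $\lambda\notin\mathcal T$ is a claim requiring proof, not a bookkeeping convention. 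The way to close it (and what the paper means by ``arguing as in the proof of Theorem 7.7 in \cite{AnIsoMo17}'') is the Fredholm alternative combined with a uniqueness argument: since $F(\lambda)R_0(\lambda\pm i0)$ is compact, non-invertibility forces a nonzero kernel element, which produces a nonzero $\mathcal B^{\ast}$ solution of the homogeneous perturbed equation satisfying the outgoing or incoming radiation condition; by the Rellich-type theorem (Theorem \ref{RelichTypeTheorem}, i.e.\ Theorem 5.1 of \cite{AnIsoMo17}) such a solution vanishes near infinity, and by the unique continuation property of the hexagonal lattice it vanishes identically, contradicting nontriviality of the kernel. This chain --- radiation-condition uniqueness (Theorem \ref{LemmaRadCondUnique}), Rellich, unique continuation --- is the missing ingredient; note also that the uniform bound in statement (1) over a compact interval $I$ requires this invertibility to hold with norms locally uniform in $\lambda$, which again comes from the compactness--continuity argument rather than from the free LAP alone.
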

\subsection{Analytic continuation of the resolvent}
It is well-known that for the Schr{\"o}dinger operator $- \Delta + V(x)$ in ${\bf R}^d$, where $V(x)$ has compact support, the boundary value of the resolvent $( - \Delta + V(x) - \lambda - i0)^{-1}$ has a meromorphic continuation into the lower half plane $\{{{\rm Re}\, \lambda > 0, \ \rm Im}\, \lambda < 0\}$ as an operator from the space of compactly supported $L^2({\bf R}^d)$ functions to $L^2_{loc}({\bf R}^d)$. This is proven by considering the free case, i.e.  the operator
$$
\int_{{\bf R}^d}\frac{e^{ix\cdot\xi}\widetilde f(\xi)}{|\xi|^2 - \zeta}d\xi = 
\int_0^{\infty}\frac{\int_{S^{d-1}}e^{ir\omega\cdot x}\widetilde f(r\omega)d\omega}{r^2 - \zeta}r^{d-1}dr 
$$
($\widetilde f(\xi)$ being the Fourier transfrom of $f$)
for ${\rm Im}\,\zeta > 0$, deforming the path of integration into the lower half-plane, and then applying the perturbation theory. This method also works for the discrete case, and one can show that the resolvents of the vertex Hamiltonian and the edge Hamiltonian defined for $\{{{\rm Re}\,\lambda > 0, \ \rm Im}\, \lambda > 0\}$ can be continued meromorphically into the lower half-plane $\{{{\rm Re}\,\lambda > 0, \ \rm Im}\, \lambda < 0\}$ with possible branch points on $\mathcal T$, when the perturbation is compactly supported. 
\subsection{Spectral representation}
We can then construct the spectral representation of the edge Laplacian. 
Letting $P_{\mathcal V,j}(x)$ be the eigenprojection associated with the eigenvalue $\lambda_j(x)$ of $H_0(x)$, 
we put
\begin{equation}
\begin{split}
D^{(0)}(\lambda \pm i0)  &= \frac{\sin\sqrt{\lambda}}{\sqrt{\lambda}}\, \mathcal U_{\mathcal V}I_2\big(- \widehat\Delta_{\mathcal V} + \cos\sqrt{\lambda \pm i0}\big)^{-1}I_2^{\ast}{\mathcal U_{\mathcal V}}^{\ast}\\
&= \frac{\sin\sqrt{\lambda}}{\sqrt{\lambda}}\sum_{j=1}^2\frac{1}{\lambda_j(x) + \cos\sqrt{\lambda} \mp i\sigma(\lambda)0}P_{\mathcal V,j}(x),
\end{split}
\label{S5D0lambdaandresolvent}
\end{equation}
where $\sigma(\lambda) = 1$ if $\lambda > 0, \sin\sqrt{\lambda} > 0$, $\sigma(\lambda) = -1$ if $\lambda > 0, \sin\sqrt{\lambda} < 0$.
We also put
\begin{equation}
\Phi^{(0)}(\lambda)  = \mathcal U_{\mathcal V}I_2\widehat T^{(0)}_{\mathcal V}(\lambda).
\end{equation}
By (\ref{Resolventformula2}),   $\widehat R^{(0)}_{\mathcal E}(\lambda \pm i0)$ is rewritten as
\begin{equation}
\widehat R^{(0)}_{\mathcal E}(\lambda \pm i0) = 
\Phi^{(0)}(\lambda)^{\ast}D^{(0)}(\lambda \pm i0)\Phi^{(0)}(\lambda) + r^{(0)}_{\mathcal E}(\lambda).
\label{R0Elambda=Phi0AlambdaPhi0} 
\end{equation}

 We put
\begin{equation}
 M_{\lambda} = \cup_{j=1}^2M_{\lambda,j},
\quad 
M_{\lambda,j} = \{x \in {\bf T}^d\, ; \, \lambda_j(x) + \cos\sqrt{\lambda} = 0\}, 
\nonumber
\end{equation}
\begin{equation}
(\varphi,\psi)_{\lambda,j} = \int_{M_{\lambda,j}}\varphi(x)\overline{\psi(x)}dS_j, \quad
dS_j = \frac{|\sin \sqrt{\lambda}|}{\sqrt{\lambda}} \frac{dM_{\lambda,j}}{|\nabla_x\lambda_j(x)|},
\nonumber
\end{equation}
where $dM_{\lambda,j}$ is the induced measure on $M_{\lambda,j}$. 
For $\widehat f \in \mathcal B(\mathcal E)$, we define $\widehat{\mathcal F}_j^{(0)}(\lambda)\widehat f$ by
$$
\widehat{\mathcal F}_j^{(0)}(\lambda)\widehat f = \big(P_{\mathcal V,j}(x)\Phi^{(0)}(\lambda)\widehat f\big)\big|_{M_{\lambda}},
$$
i.e. the restriction of $P_{\mathcal V,j}(x)\Phi^{(0)}(\lambda)\widehat f$ to $M_{\lambda}$,
and
\begin{equation}
\widehat{\mathcal F}^{(0)}(\lambda) = \big(\widehat{\mathcal F}^{(0)}_{1}(\lambda),\widehat{\mathcal F}^{(0)}_{2}(\lambda)\big), 
\label{S5DefineFj(0)lambda}
\end{equation} 
\begin{equation}
{\bf h}_{\lambda} = L^2\big(M_{\lambda}\big) = {\mathop\oplus_{j=1}^2}L^2\big(M_{\lambda,j};dS_j\big),
\nonumber
\end{equation}
\begin{equation}
{\bf H} = L^2\big((0,\infty),{\bf h}_{\lambda};d\lambda\big).
\nonumber
\end{equation}
Noting that $\widehat{\mathcal F}^{(0)}(\lambda) \in {\bf B}(\mathcal B(\mathcal E);{\bf h}_{\lambda})$, the  spectral representation associated with  $\widehat H_{\mathcal E}$ is constructed by the perturbation method. Define $\widehat{\mathcal F}^{(\pm)}(\lambda)$ by
\begin{equation}
\widehat{\mathcal F}^{(\pm)}(\lambda) = \widehat{\mathcal F}^{(0)}(\lambda)\Big(1 - q_{\mathcal E}\widehat R_{\mathcal E}(\lambda \pm i0)\Big) \in 
{\bf B}(\mathcal B(\mathcal E)\, ; \,{\bf h}_{\lambda}).
\label{DefineFpmlambdainmathcalE}
\end{equation}
Then we have
\begin{equation}
\frac{1}{2\pi i}\Big(\big(\widehat R_{\mathcal E}(\lambda + i0)-\widehat R_{\mathcal E}(\lambda - i0)\big)\widehat f,\widehat g\Big) 
= (\widehat{\mathcal F}^{(\pm)}(\lambda)\widehat f,\widehat{\mathcal F}^{(\pm)}(\lambda)\widehat g)_{{\bf h}_{\lambda}}.
\label{RmathcalEParseval}
\end{equation}
We can  prove (\ref{RmathcalEParseval}) first for $\widehat H^{(0)}_{\mathcal E}$ by (\ref{R0Elambda=Phi0AlambdaPhi0}), and then for $\widehat H_{\mathcal E}$ 
by using the resolvent equation (see Lemma 7.8 in \cite{AnIsoMo17}). 
\begin{theorem}
\label{EigenfunctionExpansionWholespace}
(1) The operator $\widehat{\mathcal F}^{(\pm)}$ is uniquely extended to a partial isometry with initial set ${\mathcal H}_{ac}(\widehat H_{\mathcal E})$ and final set ${\bf H}$ annihilating ${\mathcal H}_p(\widehat H_{\mathcal E})$, the point spectral subspace for $\widehat H_{\mathcal E}$. 

\noindent
(2) It diagonalizes $\widehat H_{\mathcal E}$:
$$
\big(\widehat{\mathcal F}^{(\pm)}\widehat H_{\mathcal E}\widehat f\big)(\lambda) = 
\lambda\big(\widehat{\mathcal F}^{(\pm)}\widehat f\big)(\lambda), \quad \forall \widehat f \in D(\widehat H_{\mathcal E}).
$$
(3) The adjoint operator $\widehat{\mathcal F}^{(\pm)}(\lambda)^{\ast} \in {\bf B}({\bf h}_{\lambda};\mathcal B^{\ast}(\mathcal E))$ is an eigenoperator in the sense that
$$
(\widehat H_{\mathcal E} - \lambda)\widehat{\mathcal F}^{(\pm)}(\lambda)^{\ast}\phi = 0, \quad \forall \phi \in {\bf h}_{\lambda}.
$$
(4) For $\widehat f \in \mathcal H_{ac}(\widehat H_{\mathcal E})$, the inversion formula holds:
$$
\widehat f = \int_0^{\infty}\widehat{\mathcal F}^{(\pm)}(\lambda)^{\ast}\big(\widehat{\mathcal F}^{(\pm)}\widehat f\big)(\lambda)d\lambda.
$$
\end{theorem}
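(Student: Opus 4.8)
The plan is to derive all four assertions from the stationary data already assembled, following the Agmon--Kato--Kuroda scheme transplanted from the vertex model of \cite{AnIsoMo17}. The engine is the Parseval identity (\ref{RmathcalEParseval}) together with Stone's formula. For $\widehat f,\widehat g \in \widehat{\mathcal B}(\mathcal E)$ and any interval $(a,b)\subset(0,\infty)\setminus\mathcal T$, Stone's formula for the spectral measure $E(\cdot)$ of $\widehat H_{\mathcal E}$ reads
\[
\big(E((a,b))\widehat f,\widehat g\big) = \lim_{\epsilon\downarrow 0}\frac{1}{2\pi i}\int_a^b\big((\widehat R_{\mathcal E}(\lambda+i\epsilon) - \widehat R_{\mathcal E}(\lambda - i\epsilon))\widehat f,\widehat g\big)\,d\lambda .
\]
By Theorem \ref{LAPwholespace} the boundary values exist and depend continuously on $\lambda$, so the limit passes inside the integral, and (\ref{RmathcalEParseval}) rewrites the right-hand side as $\int_a^b(\widehat{\mathcal F}^{(\pm)}(\lambda)\widehat f,\widehat{\mathcal F}^{(\pm)}(\lambda)\widehat g)_{{\bf h}_\lambda}\,d\lambda$. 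Letting $(a,b)$ exhaust $(0,\infty)\setminus\mathcal T$ (a discrete, hence null, set is discarded) yields $\|E_{ac}\widehat f\|^2 = \|\widehat{\mathcal F}^{(\pm)}\widehat f\|_{{\bf H}}^2$ on the dense set $\widehat{\mathcal B}(\mathcal E)$. Thus $\widehat{\mathcal F}^{(\pm)}$ extends to a partial isometry with initial set ${\mathcal H}_{ac}(\widehat H_{\mathcal E})$ annihilating ${\mathcal H}_p(\widehat H_{\mathcal E})$, and $(\widehat{\mathcal F}^{(\pm)})^{\ast}\widehat{\mathcal F}^{(\pm)} = E_{ac}$; the latter, written in fibered form, is precisely the inversion formula of part (4). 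This settles the isometry half of (1) and all of (4) at once.

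For the diagonalization (2) I would insert the resolvent identity $\widehat R_{\mathcal E}(\lambda\pm i0)(\widehat H_{\mathcal E}-\mu)=1+(\lambda-\mu)\widehat R_{\mathcal E}(\lambda\pm i0)$ into the definition (\ref{DefineFpmlambdainmathcalE}) and use the fiberwise eigenvalue equation for the free term; on $D(\widehat H_{\mathcal E})\cap\widehat{\mathcal B}(\mathcal E)$ this forces $\widehat{\mathcal F}^{(\pm)}(\lambda)\widehat H_{\mathcal E}\widehat f = \lambda\,\widehat{\mathcal F}^{(\pm)}(\lambda)\widehat f$, which is (2). The eigenoperator property (3) is dual and becomes elementary once (\ref{DefineFpmlambdainmathcalE}) is transposed. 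Taking adjoints and using $\widehat R_{\mathcal E}(\lambda\pm i0)^{\ast}=\widehat R_{\mathcal E}(\lambda\mp i0)$ together with $q_{\mathcal E}^{\ast}=q_{\mathcal E}$ gives $\widehat{\mathcal F}^{(\pm)}(\lambda)^{\ast}=(1-\widehat R_{\mathcal E}(\lambda\mp i0)q_{\mathcal E})\widehat{\mathcal F}^{(0)}(\lambda)^{\ast}$. Writing $\psi=\widehat{\mathcal F}^{(0)}(\lambda)^{\ast}\phi$ and using the boundary-value identity $(\widehat H_{\mathcal E}-\lambda)\widehat R_{\mathcal E}(\lambda\mp i0)=1$, one computes
\[
(\widehat H_{\mathcal E}-\lambda)\widehat{\mathcal F}^{(\pm)}(\lambda)^{\ast}\phi = (\widehat H^{(0)}_{\mathcal E}-\lambda)\psi + q_{\mathcal E}\psi - q_{\mathcal E}\psi = (\widehat H^{(0)}_{\mathcal E}-\lambda)\psi ,
\]
so (3) reduces to the free statement $(\widehat H^{(0)}_{\mathcal E}-\lambda)\widehat{\mathcal F}^{(0)}(\lambda)^{\ast}=0$. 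This I would read off from (\ref{R0Elambda=Phi0AlambdaPhi0}): since $D^{(0)}(\lambda\pm i0)$ concentrates all its spectral weight on the characteristic surface $M_\lambda$ where $\lambda_j(x)+\cos\sqrt\lambda=0$, the range of $\widehat{\mathcal F}^{(0)}(\lambda)^{\ast}$ consists of generalized eigenfunctions of $\widehat H^{(0)}_{\mathcal E}$ at energy $\lambda$, the regular term $r^{(0)}_{\mathcal E}(\lambda)$, whose only poles lie in $\mathcal T$, contributing nothing on the continuous spectrum.

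The remaining point, and the one I expect to be the \textbf{main obstacle}, is the surjectivity onto the final set ${\bf H}$ that completes (1). For the free operator this is transparent: by (\ref{R0Elambda=Phi0AlambdaPhi0}) and (\ref{Resolventformula2}) the fiber map $\widehat{\mathcal F}^{(0)}(\lambda)$ factors through the unitary discrete Fourier transform $\mathcal U_{\mathcal V}$ and the vertex eigenoperators, whose surjectivity onto each $L^2(M_{\lambda,j};dS_j)$ is recorded in \cite{AnIsoMo17}, while $r^{(0)}_{\mathcal E}(\lambda)$ is regular off $\mathcal T$; hence $\widehat{\mathcal F}^{(0)}$ maps ${\mathcal H}_{ac}(\widehat H^{(0)}_{\mathcal E})$ onto ${\bf H}$. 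To transfer this to $\widehat H_{\mathcal E}$ one must show that the factor $1-q_{\mathcal E}\widehat R_{\mathcal E}(\lambda\pm i0)$ in (\ref{DefineFpmlambdainmathcalE}) does not collapse the range for a.e.\ $\lambda$. Here the compact support hypothesis (Q-2) is decisive: $q_{\mathcal E}\widehat R^{(0)}_{\mathcal E}(\lambda\pm i0)$ is compact on the weighted spaces, so by the Fredholm alternative $1+q_{\mathcal E}\widehat R^{(0)}_{\mathcal E}(\lambda\pm i0)$ is invertible except on a discrete subset of $(0,\infty)\setminus\mathcal T$, these exceptional energies being exactly the embedded eigenvalues, which are null and are already carried by ${\mathcal H}_p(\widehat H_{\mathcal E})$. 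Away from them the resolvent equation gives the algebraic identity $1-q_{\mathcal E}\widehat R_{\mathcal E}(\lambda\pm i0)=(1+q_{\mathcal E}\widehat R^{(0)}_{\mathcal E}(\lambda\pm i0))^{-1}$, so (\ref{DefineFpmlambdainmathcalE}) becomes $\widehat{\mathcal F}^{(\pm)}(\lambda)=\widehat{\mathcal F}^{(0)}(\lambda)(1+q_{\mathcal E}\widehat R^{(0)}_{\mathcal E}(\lambda\pm i0))^{-1}$, a bounded isomorphism of $\widehat{\mathcal B}(\mathcal E)$ followed by the surjection $\widehat{\mathcal F}^{(0)}(\lambda)$, hence itself onto ${\bf h}_\lambda$. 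Integrating in $\lambda$ then yields surjectivity of $\widehat{\mathcal F}^{(\pm)}$ onto ${\bf H}$. The delicate part of this last step is verifying that the exceptional set is genuinely discrete and that the weighted-space compactness persists near the flat-band thresholds collected in $\mathcal T$; this is exactly where the structure of the graphene symbol $H_0(x)$ and the threshold analysis of \cite{AnIsoMo17} must be invoked.
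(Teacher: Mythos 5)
Your proof is correct and takes essentially the same route as the paper, which constructs the spectral representation ``by the perturbation method'' from the free formula (\ref{R0Elambda=Phi0AlambdaPhi0}), the definition (\ref{DefineFpmlambdainmathcalE}), and the Parseval identity (\ref{RmathcalEParseval}), deferring the standard Agmon--Kato--Kuroda details (Stone's formula, the resolvent-equation computations, fiberwise surjectivity) to the vertex-model arguments of \cite{AnIsoMo17}; your write-up simply supplies those details explicitly. The only remark worth adding is that your caution about a possible discrete exceptional set in the completeness step is unnecessary: Theorem \ref{LAPwholespace} already gives the boundary values $\widehat R_{\mathcal E}(\lambda \pm i0)$ for \emph{every} $\lambda \in (0,\infty)\setminus\mathcal T$ (absence of exceptional energies being built into the limiting absorption principle via the Rellich theorem and unique continuation), so $1 + q_{\mathcal E}\widehat R^{(0)}_{\mathcal E}(\lambda \pm i0)$ is invertible on all of $(0,\infty)\setminus\mathcal T$ and no null set needs to be discarded there.
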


The crucial step for the inverse scattering procedure is Theorem \ref{HelmhlotzEqWholespace} below, which can be proven by the same argument as in \cite{AnIsoMo17}. We do not repeat the whole procedure, but explain important intermediate steps.
Let us prepare a lemma. 
\begin{lemma}
\label{S2B0astEandB0astVLemma}
 For a solution $\widehat u$ of the equation 
 $(\widehat H_{\mathcal E} - \lambda)\widehat u = \widehat f$
 satisfying the Kirchhoff condition, 
we have the inequality
 \begin{equation}
 C_{\lambda}^{-1}\|\widehat u\|_{{\mathcal B}^{\ast}(\mathcal E)} \leq 
 \|\widehat u\big|_{\mathcal V}\|_{\mathcal B^{\ast}(\mathcal V)} \leq C_{\lambda}\|\widehat u\|_{\mathcal B^{\ast}(\mathcal E)},
 \nonumber
 \end{equation}
 and the equivalence
 \begin{equation}
 \widehat u \in \mathcal B^{\ast}_0({\mathcal E}) \Longleftrightarrow \widehat u\big|_{\mathcal V} \in \mathcal B^{\ast}_0(\mathcal V).
 \nonumber
 \end{equation}
 \end{lemma}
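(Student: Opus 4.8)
The plan is to reduce both sides to the edge-wise representation \eqref{widehatue(z,lambda)=PhiC+ref}, which writes $\widehat u_{\bf e}$ as a combination of two fixed boundary solutions plus a purely local source term. Evaluating \eqref{widehatue(z,lambda)=PhiC+ref} at the endpoints and using $\phi_{{\bf e}0}(0,\lambda)=0$, $\phi_{{\bf e}1}(1,\lambda)=0$ together with the vanishing of $r_{\bf e}(\lambda)\widehat f_{\bf e}$ at $z=0,1$ (it is a Dirichlet resolvent), I would first identify the constants with the vertex values, $c_{\bf e}(0,\lambda)=\widehat u({\bf e}(0))$ and $c_{\bf e}(1,\lambda)=\widehat u({\bf e}(1))$. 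Thus on each edge
$$
\widehat u_{\bf e} = \widehat u({\bf e}(1))\,g_{{\bf e},0} + \widehat u({\bf e}(0))\,g_{{\bf e},1} + r_{\bf e}(\lambda)\widehat f_{\bf e},
$$
where $g_{{\bf e},0}=\phi_{{\bf e}0}(\cdot,\lambda)/\phi_{{\bf e}0}(1,\lambda)$ and $g_{{\bf e},1}=\phi_{{\bf e}1}(\cdot,\lambda)/\phi_{{\bf e}1}(0,\lambda)$ are the solutions normalized to equal $1$ at the non-vanishing endpoint.

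Next I would establish uniform two-sided $L^2(0,1)$ bounds. By \textbf{(Q-2)}, $q_{\bf e}=0$ on all but finitely many edges, so for all but finitely many ${\bf e}$ the pair $(g_{{\bf e},0},g_{{\bf e},1})$ coincides with the single explicit pair $\sin\sqrt\lambda\,z/\sin\sqrt\lambda$, $\sin\sqrt\lambda\,(1-z)/\sin\sqrt\lambda$ from \eqref{phie10unperturbed}. These are linearly independent for $\lambda\in(0,\infty)\setminus\mathcal T$, so their $2\times2$ Gram matrix is positive definite, giving a constant with
$$
c_{\lambda}^{-1}\big(|a|^2+|b|^2\big) \le \|a\,g_{{\bf e},0}+b\,g_{{\bf e},1}\|^2_{L^2(0,1)} \le c_{\lambda}\big(|a|^2+|b|^2\big).
$$
The finitely many perturbed edges each obey such a bound, since their normalizing denominators are nonzero under the standing assumption on $\lambda$, so enlarging $c_{\lambda}$ makes the estimate uniform in ${\bf e}$. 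Combined with the triangle inequality for the source term, this yields the edge-wise comparison between $|\widehat u({\bf e}(0))|^2+|\widehat u({\bf e}(1))|^2$, the quantity $\|\widehat u_{\bf e}\|^2_{L^2(0,1)}$, and the correction $\|r_{\bf e}(\lambda)\widehat f_{\bf e}\|^2_{L^2(0,1)}$, in both directions.

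I would then sum over $|c({\bf e})|<R$ and regroup by endpoints. Since every vertex has degree $3$ and every edge has length $1$ (so $|c({\bf e})-{\bf e}(i)|\le 1/2$), the sum $\sum_{|c({\bf e})|<R}\big(|\widehat u({\bf e}(0))|^2+|\widehat u({\bf e}(1))|^2\big)$ is comparable to $3\sum_{|v|<R'}|\widehat u(v)|^2$ with $R'=R\pm 1/2$; shifting $R$ by a bounded amount changes $\tfrac1R(\cdots)$ only by a factor bounded for $R>1$, so it leaves the $\mathcal B^{\ast}$ norms unchanged up to adjusting $c_{\lambda}$. Dividing by $R$ and taking the supremum (resp. the limit $R\to\infty$) produces both inequalities of the lemma and both implications of the $\mathcal B^{\ast}_0$ equivalence.

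The step I expect to be the main obstacle is controlling the source contribution so that the clean two-sided inequality, with a single constant $c_{\lambda}$ and no error term, actually holds. Here one notes that $\tfrac1R\sum_{|c({\bf e})|<R}\|r_{\bf e}(\lambda)\widehat f_{\bf e}\|^2 \le \tfrac1R\|r_{\mathcal E}(\lambda)\widehat f\|^2_{L^2(\mathcal E)}$, and that the relevant $\widehat f$ lies in $\mathcal B(\mathcal E)$, on which the edge-local operator $r_{\mathcal E}(\lambda)$ acts boundedly (uniformly in ${\bf e}$, again by \textbf{(Q-2)}) with $r_{\mathcal E}(\lambda)\widehat f\in\mathcal B(\mathcal E)\subset\mathcal B^{\ast}_0(\mathcal E)$. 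Consequently this term tends to $0$ as $R\to\infty$ and stays bounded for $R>1$, so it is absorbed into $c_{\lambda}$ in the norm equivalence and is annihilated in the $\mathcal B^{\ast}_0$ characterization. Verifying precisely that the source lies in $\mathcal B^{\ast}_0(\mathcal E)$, so that it disturbs neither the norm comparison nor membership in $\mathcal B^{\ast}_0$, is the delicate point of the argument.
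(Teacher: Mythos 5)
Your proposal is correct and is essentially the paper's own proof written out in full: the paper likewise starts from the representation \eqref{widehatue(z,lambda)=PhiC+ref}, identifies the constants $c_{\bf e}(0,\lambda)$, $c_{\bf e}(1,\lambda)$ with the vertex values, invokes the linear independence of $\phi_{{\bf e}0}$ and $\phi_{{\bf e}1}$ to obtain a constant $C_{\lambda}$ independent of ${\bf e}$ with $C_{\lambda}^{-1}\|\widehat u_{\bf e}\|_{L^2_{\bf e}} \leq |c_{\bf e}(0)| + |c_{\bf e}(1)| \leq C_{\lambda}\|\widehat u_{\bf e}\|_{L^2_{\bf e}}$, and then sums over edges exactly as you do. The one point where you are more scrupulous than the paper is the source term $r_{\bf e}(\lambda)\widehat f_{\bf e}$: the paper's displayed inequality silently drops it, whereas you correctly note that it is annihilated in the $\mathcal B^{\ast}_0$ equivalence (since $r_{\mathcal E}(\lambda)\widehat f \in L^2(\mathcal E)$, its $R^{-1}$-averages vanish as $R\to\infty$) and is only ``absorbed'' into the norm inequality at the cost of an additive $\widehat f$-dependent term --- a blemish shared by the paper's own argument, and harmless in every application of the lemma, where the equation is homogeneous near infinity.
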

 
 \begin{proof}
Note that $\widehat u_{\bf e}(z)$ is written as in (\ref{widehatue(z,lambda)=PhiC+ref}).  Since $\phi_{{\bf e}0}(t,\lambda)$ and $\phi_{{\bf e}1}(t,\lambda)$ are linearly independent, there exists a constant $C_{\lambda} > 0$ independent of ${\bf e}$ such that 
\begin{equation}
C_{\lambda}^{-1}\|\widehat u_{\bf e}\|_{L^2_{\bf e}} \leq 
|c_{\bf e}(0)| + |c_{\bf e}(1)| \leq C_{\lambda}\|\widehat u_{\bf e}\|_{L^2_{\bf e}}.
\nonumber
\end{equation}
The lemma then  follows from this inequality. 
\end{proof}

\noindent
(I) {\it Rellich type theorem.}
 We define  exterior and interior domains $\mathcal E_{ext,R}$ and $\mathcal E_{int,R}$ in $\mathcal E$ by
\begin{equation}
\mathcal E_{ext,R} \ni {\bf e} \Longleftrightarrow 
|c({\bf e})| \geq R,
\quad
\mathcal E_{int,R} \ni {\bf e} \Longleftrightarrow 
|c({\bf e})| < R.
\nonumber
\end{equation}
\begin{theorem}
\label{RelichTypeTheorem}
Let $\lambda \in (0,\infty)\setminus \mathcal T^{(0)}$, and suppose  $\widehat u \in \widehat {\mathcal B}^{\ast}_0(\mathcal E)$ satisfies 
$
\widehat H^{(0)}_{\mathcal E}\widehat u = \lambda\widehat u \ {\rm in} \ \mathcal E_{ext,R},
$
and the Kirchhoff condition 
for some $R > 0$. Then $\widehat u = 0$ on $\mathcal E_{ext,R_1}$ for some $R_1 > 0$.
 \end{theorem}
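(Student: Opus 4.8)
The plan is to reduce the statement to the Rellich-type theorem for the discrete vertex Laplacian proved in \cite{AnIsoMo17}, and then to transport the resulting vanishing back onto the edges. First I would invoke Lemma \ref{S2B0astEandB0astVLemma}: since $\widehat u$ solves the equation and satisfies the Kirchhoff condition, the hypothesis $\widehat u \in \widehat{\mathcal B}^{\ast}_0(\mathcal E)$ is equivalent to $\widehat u\big|_{\mathcal V} \in \mathcal B^{\ast}_0(\mathcal V)$. Thus the decay assumption is inherited by the vertex restriction, and the whole problem can be phrased on $\mathcal V$.

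Next I would extract the discrete equation satisfied by $\widehat u\big|_{\mathcal V}$ in the exterior. On $\mathcal E_{ext,R}$ the potential vanishes, so Lemma \ref{Lemma3.2} applies with $\widehat f = 0$ and gives
\begin{equation}
\big(-\widehat\Delta^{(0)}_{\mathcal V,\lambda} + \widehat Q^{(0)}_{\mathcal V,\lambda}\big)\widehat u\big|_{\mathcal V} = 0
\nonumber
\end{equation}
at every vertex whose incident edges all lie in the exterior. Inserting the explicit unperturbed coefficients (\ref{DefineDeltaVlambda(0)}) and (\ref{DefinewidehatQ0}) and dividing by the scalar $\sqrt{\lambda}/\sin\sqrt{\lambda}$ --- nonzero because $\lambda \notin \mathcal T^{(0)}$ forces $\cos\sqrt{\lambda} \neq \pm 1$, hence $\sin\sqrt{\lambda} \neq 0$ --- this collapses to
\begin{equation}
\big(-\widehat\Delta_{\mathcal V} + \cos\sqrt{\lambda}\big)\widehat u\big|_{\mathcal V} = 0
\nonumber
\end{equation}
on all vertices sufficiently far out. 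Hence $\widehat u\big|_{\mathcal V}$ is a $\mathcal B^{\ast}_0(\mathcal V)$ solution of the exterior discrete Helmholtz equation for $-\widehat\Delta_{\mathcal V}$ with spectral parameter $-\cos\sqrt{\lambda}$.

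The heart of the argument, and the step I expect to be the main obstacle, is the discrete Rellich theorem of \cite{AnIsoMo17} applied to this vertex equation. The hypothesis $\lambda \notin \mathcal T^{(0)}$ is precisely what guarantees (via Lemma 3.3 of \cite{AnIsoMo17}) that the characteristic surface $M_{\lambda} = \{x \in {\bf T}^2\, ; \, \det(H_0(x) + \cos\sqrt{\lambda}) = 0\}$ is a smooth regular hypersurface. For such $\lambda$, the theorem asserts that a $\mathcal B^{\ast}_0(\mathcal V)$ function solving the equation in an exterior region must vanish outside a larger ball: on the Fourier side via $\mathcal U_{\mathcal V}$ any such solution would correspond to a distribution carried by $M_{\lambda}$, and the combination of the $\mathcal B^{\ast}_0$ decay with the nondegeneracy of $M_{\lambda}$ excludes this by an Agmon--H\"ormander type stationary-phase argument. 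All the harmonic-analytic content --- the geometry of $M_{\lambda}$ and the absence of exterior solutions with $\mathcal B^{\ast}_0$ decay --- is concentrated here; the surrounding steps are bookkeeping. This yields $\widehat u\big|_{\mathcal V} = 0$ on $\{v \in \mathcal V\, ; \, |v| \geq R_1\}$ for some $R_1 > R$.

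Finally I would lift the vanishing back to the edges. Enlarging $R_1$ by the edge length, any edge ${\bf e} \in \mathcal E_{ext,R_1}$ has both endpoints in the region where $\widehat u\big|_{\mathcal V}$ has been shown to vanish, so the constants in the representation (\ref{widehatue(z,lambda)=PhiC+ref}) satisfy $c_{\bf e}(0) = \widehat u({\bf e}(0)) = 0$ and $c_{\bf e}(1) = \widehat u({\bf e}(1)) = 0$; since moreover $\widehat f_{\bf e} = 0$ on such an edge, the representation gives $\widehat u_{\bf e} \equiv 0$. Concretely, $\widehat u_{\bf e}$ is a linear combination of $\sin\sqrt{\lambda}\,z$ and $\cos\sqrt{\lambda}\,z$ with vanishing Dirichlet data, and $\sin\sqrt{\lambda} \neq 0$ forces it to be zero. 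Hence $\widehat u = 0$ on $\mathcal E_{ext,R_1}$, as claimed.
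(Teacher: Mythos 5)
Your proposal is correct and follows essentially the same route as the paper's proof: pass to the vertex restriction via Lemma \ref{S2B0astEandB0astVLemma}, observe that $\widehat u\big|_{\mathcal V}$ solves $(-\widehat\Delta_{\mathcal V}+\cos\sqrt{\lambda})\widehat u\big|_{\mathcal V}=0$ near infinity, and apply the discrete Rellich theorem (Theorem 5.1 of \cite{AnIsoMo17}). Your explicit derivation of the discrete equation and the final lift back to the edges (using $\sin\sqrt{\lambda}\neq 0$) only spell out steps the paper leaves implicit.
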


\begin{proof}
By Lemma \ref{S2B0astEandB0astVLemma}, 
$\widehat u\big|_{\mathcal V} \in {\mathcal B}_0^{\ast}(\mathcal V)$. Since $(- \Delta_{\mathcal V} + \cos\sqrt{\lambda})\widehat u\big|_{\mathcal V} = 0$ near infinity, 
by Theorem 5.1 in \cite{AnIsoMo17}, $\widehat u\big|_{\mathcal V} = 0$ near infinity. This proves Theorem \ref{RelichTypeTheorem}. 
\end{proof}

We say that the operator $\widehat H_{\mathcal E}-\lambda$ has the {\it unique continuation property} on $\mathcal E$ when the following assertion holds: If $\widehat u$ satisfies $(\widehat H_{\mathcal E} - \lambda)\widehat u = 0$ on $\mathcal E$ and $\widehat u = 0$ on $\mathcal E_{ext,R}$ for some $R > 0$, then $\widehat u = 0$ on $\mathcal E$.
The following lemma can be checked easily.
\begin{lemma}
For the hexagonal lattice in ${\bf R}^2$, the unique continuation property holds.
\end{lemma}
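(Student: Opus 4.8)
The plan is to reduce the statement on the metric graph $\mathcal E$ to a discrete unique continuation statement on the vertex set $\mathcal V$, and then to settle the latter by a finite sweep. Throughout I work under the standing assumption $\lambda \notin \cup_{{\bf e}\in\mathcal E}\sigma(-(d^2/dz^2)_D + q_{\bf e})$, so that $\phi_{{\bf e}0}(1,\lambda)\neq 0$ and $\phi_{{\bf e}0},\phi_{{\bf e}1}$ are linearly independent on each edge. First I would observe that, since $(\widehat H_{\mathcal E}-\lambda)\widehat u = 0$, formula \eqref{widehatue(z,lambda)=PhiC+ref} with $\widehat f = 0$ expresses $\widehat u_{\bf e}$ as a linear combination of $\phi_{{\bf e}0}$ and $\phi_{{\bf e}1}$ whose coefficients are exactly the endpoint values $\widehat u({\bf e}(0))$ and $\widehat u({\bf e}(1))$. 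By linear independence this gives the pointwise equivalence $\widehat u_{\bf e}\equiv 0 \Leftrightarrow \widehat u({\bf e}(0)) = \widehat u({\bf e}(1)) = 0$. Consequently $\widehat u \equiv 0$ on $\mathcal E$ if and only if $\widehat u\big|_{\mathcal V}\equiv 0$, and the hypothesis $\widehat u = 0$ on $\mathcal E_{ext,R}$ forces $\widehat u\big|_{\mathcal V}(v)=0$ at every vertex all of whose incident edges are exterior, so $\widehat u\big|_{\mathcal V}$ is finitely supported. By Lemma \ref{Lemma3.2} with $\widehat f = 0$ it also solves $(-\widehat\Delta_{\mathcal V,\lambda}+\widehat Q_{\mathcal V,\lambda})\widehat u\big|_{\mathcal V} = 0$. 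Thus the lemma reduces to the claim that a finitely supported solution of this discrete equation vanishes.

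For the discrete step I would use the adjacency structure of the hexagonal lattice together with the fact that all hopping coefficients $\frac{1}{3\psi_{wv}(1,\lambda)}$ in \eqref{S3DefineVertexLaplacian} are nonzero (being reciprocals of nonzero numbers), while $\widehat Q_{\mathcal V,\lambda}$ is a diagonal multiplier. Writing $\widehat u\big|_{\mathcal V}$ in the coordinates $(\widehat u_1(n),\widehat u_2(n))$ of \eqref{DefineI2}, the equation at a $\mathcal V_1$-vertex $(n_1,n_2)$ couples $\widehat u_1(n)$ to the three $\mathcal V_2$-values at $(n_1,n_2),(n_1-1,n_2),(n_1,n_2-1)$, and symmetrically, by the hexagonal adjacency, the $\mathcal V_2$-equation at $(n_1,n_2)$ couples $\widehat u_2(n)$ to the three $\mathcal V_1$-values at $(n_1,n_2),(n_1+1,n_2),(n_1,n_2+1)$. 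Let $m$ be the largest $n_2$ for which some component in row $n_2$ is nonzero, and seek a contradiction. Evaluating the $\mathcal V_1$-equation at each $(n_1,m+1)$, the diagonal term $\widehat u_1(n_1,m+1)$ and the two $\mathcal V_2$-neighbors in row $m+1$ all vanish, leaving only the coefficient of $\widehat u_2(n_1,m)$; its nonvanishing forces $\widehat u_2(n_1,m)=0$ for every $n_1$. Feeding this into the $\mathcal V_2$-equation at $(n_1,m)$, whose remaining $\mathcal V_1$-neighbor in row $m+1$ is already zero, yields a two-term relation $c_0\,\widehat u_1(n_1,m) + c_1\,\widehat u_1(n_1+1,m) = 0$ with $c_0,c_1\neq 0$; since $\widehat u_1(\cdot,m)$ has finite support in $n_1$, this recursion propagates the value $0$ from the far end of the row across all of it. Hence row $m$ vanishes entirely, contradicting the choice of $m$. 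Therefore $\widehat u\big|_{\mathcal V}\equiv 0$, and by the first paragraph $\widehat u\equiv 0$ on $\mathcal E$.

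The only genuine obstacle is this discrete step, and it is mild: everything hinges on the ``staircase'' coupling of the hexagonal adjacency (each sublattice equation reaches exactly one row beyond the current one) and on the nonvanishing of the hopping weights, which is precisely what the standing assumption $\lambda \notin \cup_{\bf e}\sigma(-(d^2/dz^2)_D + q_{\bf e})$ guarantees. I would note that this same assumption is what excludes modes supported on a single hexagon: such a mode would require each of its six edge functions to vanish at both endpoints, i.e.\ $\lambda$ to be a Dirichlet eigenvalue on those edges, which is forbidden. This is the structural reason the unique continuation property holds ``easily'' despite the presence of cycles in the lattice. I would present the sweep abstractly, choosing the extreme vertex in a generic direction rather than fixing $n_2$, so that the argument applies verbatim with the edge-dependent weights produced by the potential.
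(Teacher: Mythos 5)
Your proof is correct, and it is exactly the argument the paper has in mind: the paper itself gives no proof (it merely remarks that the lemma ``can be checked easily''), the intended route being precisely your reduction --- use the representation \eqref{widehatue(z,lambda)=PhiC+ref} and Lemma \ref{Lemma3.2} to pass from the edge equation to the discrete equation $(-\widehat\Delta_{\mathcal V,\lambda}+\widehat Q_{\mathcal V,\lambda})\widehat u\big|_{\mathcal V}=0$ for a finitely supported vertex function, then kill it by a row-by-row sweep exploiting the nonvanishing hopping weights $1/\psi_{wv}(1,\lambda)$, as in the authors' earlier discrete work \cite{AnIsoMo17(1)}. Your write-up supplies the details the paper omits, including the correct use of the standing assumption $\lambda\notin\cup_{\bf e}\sigma(-(d^2/dz^2)_D+q_{\bf e})$ both for the edge-to-vertex equivalence and for the nonvanishing of the weights.
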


\noindent
(II) {\it Radiation condition}. 
The radiation condition for the vertex Laplacian was introduced in \cite{AnIsoMo17} for the distinction between $(- \Delta_{\mathcal V} - \lambda - i0)^{-1}$ and 
$(- \Delta_{\mathcal V} - \lambda +  i0)^{-1}$. Hence it is  extended to the edge Laplacian. 
Note that for the edge Laplacian, one must replace $\lambda$ in the definition (6.2) of \cite{AnIsoMo17} by $\cos\sqrt{\lambda}$. 
See \cite{AnIsoMo(2020)} for details.

\begin{theorem}
\label{LemmaRadCondUnique}
Let $\lambda \in (0,\infty)\setminus{\mathcal T}$ and $\widehat f \in \mathcal B(\mathcal E)$. \\
\noindent
(1) The solution $\widehat u \in \widehat{\mathcal B}^{\ast}({\mathcal E})$ of the equation $(- \widehat\Delta_{\mathcal E} + q_{\mathcal E} - \lambda)\widehat u = \widehat f$ satisfying the outgoing or incoming radiation condition is unique. \\
\noindent
(2) $( \widehat H_{\mathcal E} - \lambda - i0)^{-1}\widehat f$ satisfies the outgoing radiation condition, and $(\widehat H_{\mathcal E} - \lambda + i0)^{-1}\widehat f$ satisfies the incoming radiation condition.
\end{theorem}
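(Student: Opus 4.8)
The plan is to reduce both assertions to the corresponding facts for the vertex Laplacian $-\widehat\Delta_{\mathcal V}$ established in \cite{AnIsoMo17}, exploiting the edge-to-vertex correspondence of \S\ref{BasicTheory} together with Lemma \ref{S2B0astEandB0astVLemma}. Recall, from the remark preceding the theorem, that the radiation condition on $\mathcal E$ is \emph{defined} to be the vertex radiation condition of \cite{AnIsoMo17} imposed on the restriction $\widehat u\big|_{\mathcal V}$, with $\lambda$ replaced by $\cos\sqrt\lambda$. Thus the entire strategy is to push each statement down to $\mathcal V$ and invoke the known discrete result.

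For the uniqueness in (1), I would start from a solution $\widehat u \in \widehat{\mathcal B}^{\ast}(\mathcal E)$ of the homogeneous equation satisfying, say, the outgoing radiation condition. By Lemma \ref{Lemma3.2} with $\widehat f = 0$, its restriction satisfies $(-\widehat\Delta_{\mathcal V,\lambda} + \widehat Q_{\mathcal V,\lambda})\widehat u\big|_{\mathcal V} = 0$; since $q_{\mathcal E}$ is compactly supported by (Q-2), outside a finite set of vertices the coefficients reduce to the free ones and this becomes the free equation $(-\widehat\Delta_{\mathcal V} + \cos\sqrt\lambda)\widehat u\big|_{\mathcal V} = 0$. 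Lemma \ref{S2B0astEandB0astVLemma} places $\widehat u\big|_{\mathcal V}$ in $\mathcal B^{\ast}(\mathcal V)$, and by definition it is then an outgoing solution in the sense of \cite{AnIsoMo17}. The vertex uniqueness theorem forces $\widehat u\big|_{\mathcal V} = 0$. Finally, inserting $\widehat f = 0$ and the vanishing boundary data $c_{\bf e}(0) = c_{\bf e}(1) = 0$ into the edge representation (\ref{widehatue(z,lambda)=PhiC+ref}) gives $\widehat u_{\bf e} = 0$ on every edge, hence $\widehat u = 0$.

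For (2), I would use the resolvent formula of Lemma \ref{ResolventofwidehatHE}. Writing $\widehat u = (\widehat H_{\mathcal E} - \lambda - i0)^{-1}\widehat f$, its restriction $\widehat u\big|_{\mathcal V}$ is the corresponding boundary value of the reduced vertex resolvent $(-\widehat\Delta_{\mathcal V,\lambda} + \widehat Q_{\mathcal V,\lambda})^{-1}\widehat T_{\mathcal V}(\lambda)\widehat f$. Outside the support of the perturbation this coincides, modulo a solution of the free equation, with the boundary value of the free reduced resolvent, which by (\ref{S5D0lambdaandresolvent}) and \cite{AnIsoMo17} is the outgoing solution of $(-\widehat\Delta_{\mathcal V} + \cos\sqrt\lambda)(\cdot) = 0$ near infinity and hence satisfies the outgoing vertex radiation condition. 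By the definition of the edge radiation condition this is exactly the claim for $\widehat u$, and the incoming case follows identically with $-i0$ replaced by $+i0$.

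The main obstacle I anticipate is the bookkeeping of the limiting-absorption sign in (2): the reduced operator $-\widehat\Delta_{\mathcal V,\lambda} + \widehat Q_{\mathcal V,\lambda}$ has $\lambda$-dependent, genuinely non-self-adjoint coefficients, so one must verify that the prescription $\lambda \to \lambda - i0$ translates, through the substitution $\lambda \mapsto \cos\sqrt\lambda$ and the orientation factor $\sigma(\lambda)$ in (\ref{S5D0lambdaandresolvent}), into precisely the outgoing boundary value $(-\widehat\Delta_{\mathcal V} + \cos\sqrt\lambda - i\sigma(\lambda)0)^{-1}$ at the characteristic surface $M_\lambda$. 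Away from the finitely many perturbed edges this identification is exact, so the difficulty is confined to checking that the compactly supported perturbation does not alter the far-field behavior, which follows once $\widehat u\big|_{\mathcal V} \in \mathcal B^{\ast}(\mathcal V)$ solves the free equation near infinity and the uniqueness of part (1) is invoked.
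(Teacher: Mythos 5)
Your overall plan --- transfer everything to the vertex set via Lemmas \ref{Lemma3.2} and \ref{S2B0astEandB0astVLemma} and quote the discrete results of \cite{AnIsoMo17} --- is indeed the reduction the paper has in mind (the paper itself offers little beyond this remark, deferring details to \cite{AnIsoMo17} and \cite{AnIsoMo(2020)}), but two of your steps do not hold as written. In (1), the uniqueness theorem of \cite{AnIsoMo17} concerns $-\widehat\Delta_{\mathcal V}+V$, i.e.\ a \emph{diagonal} (potential) perturbation of the free vertex Laplacian. Your $\widehat u\big|_{\mathcal V}$ solves $(-\widehat\Delta_{\mathcal V,\lambda}+\widehat Q_{\mathcal V,\lambda})\widehat u\big|_{\mathcal V}=0$, and on the finitely many edges where $q_{\bf e}\neq 0$ the perturbation enters the \emph{off-diagonal} coefficients $1/\psi_{wv}(1,\lambda)$ of $\widehat\Delta_{\mathcal V,\lambda}$, not only the diagonal term $\widehat Q_{\mathcal V,\lambda}$; so the theorem you cite does not literally apply to this operator. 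The gap is repairable within the paper's own toolkit: the radiation condition together with the Green-identity argument of \cite{AnIsoMo17} yields $\widehat u\big|_{\mathcal V}\in\mathcal B_0^{\ast}(\mathcal V)$ --- here one must check that the argument tolerates the off-diagonal perturbation, which it does because for real $\lambda\notin\mathcal T$ the coefficients are real and, thanks to (Q-3), symmetric ($\psi_{wv}(1,\lambda)=\psi_{vw}(1,\lambda)$) --- then Lemma \ref{S2B0astEandB0astVLemma} gives $\widehat u\in\mathcal B_0^{\ast}(\mathcal E)$, Theorem \ref{RelichTypeTheorem} (its hypothesis, the free equation near infinity, holds by (Q-2)) gives $\widehat u=0$ near infinity, and the unique continuation lemma finishes. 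Your final step (zero vertex data in (\ref{widehatue(z,lambda)=PhiC+ref}) together with $\lambda\notin\mathcal T$ forces $\widehat u_{\bf e}=0$ on every edge) is correct.

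In (2) your argument is circular. Knowing that $(\widehat H_{\mathcal E}-\lambda-i0)^{-1}\widehat f$ differs from the free resolvent by a function solving the free equation near infinity proves nothing: such a function may be outgoing, incoming, or neither, so this decomposition is compatible with the radiation condition failing; and the uniqueness of part (1) cannot certify that a given function \emph{is} outgoing --- it only distinguishes among functions already known to be outgoing. What is needed is the resolvent identity
\begin{equation}
\widehat R_{\mathcal E}(\lambda\pm i0)\widehat f=\widehat R^{(0)}_{\mathcal E}(\lambda\pm i0)\bigl[\widehat f-q_{\mathcal E}\widehat R_{\mathcal E}(\lambda\pm i0)\widehat f\bigr],
\nonumber
\end{equation}
in which the bracket belongs to $\mathcal B(\mathcal E)$: indeed $q_{\mathcal E}$ is supported on finitely many edges by (Q-2), $\widehat R_{\mathcal E}(\lambda\pm i0)\widehat f\in\mathcal B^{\ast}(\mathcal E)$ by Theorem \ref{LAPwholespace}, and on each such edge the solution lies in $H^2(0,1)\subset L^{\infty}$, so multiplication by $q_{\bf e}\in L^2$ stays in $L^2$. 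This reduces (2) to the statement that the \emph{free} resolvent boundary values $\widehat R^{(0)}_{\mathcal E}(\lambda\pm i0)$ applied to $\mathcal B(\mathcal E)$-functions are outgoing/incoming, which does follow from the vertex result of \cite{AnIsoMo17} through (\ref{Resolventformula2}), with precisely the $\sigma(\lambda)$ sign bookkeeping you identify at the end of your proposal.
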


\noindent
(III) {\it Singularity expansion}. Asymptotic behavior at infinity of the resolvent is closely related to the far-field behavior of the scattering waves. For the case of scattering on perturbed lattices,
instead of observing the asymptotic expansion  of $(\widehat H_{\mathcal E} - \lambda \mp i0)^{-1}$ at infinity of the edge space $\mathcal E$, it is more convenient to consider the  singularities of its Fourier transform in $\mathcal B^{\ast}$. 
For $f, g \in \mathcal B^{\ast}({\mathcal E})$, we use the notation $f \simeq g$ in the following sense:
$$
f \simeq g \Longleftrightarrow 
f - g \in \mathcal B^{\ast}_0({\mathcal E}).
$$
We  use the same notation $\simeq$ for ${\mathcal B}^{\ast}(\mathcal V)$.

Since $r_{\mathcal E}(\lambda)$ is bounded on $L^2(\mathcal E)$, we have $r_{\mathcal E}(\lambda)\widehat f \simeq 0$ for any $\widehat f \in L^2(\mathcal E)$. 
Then, by (\ref{widehatue(z,lambda)=PhiC+ref}) and (\ref{widehatu|mathcalVformal}), the singularities of $(\widehat H_{\mathcal E} - \lambda \mp i0)^{-1}$ appear  from 
$\left(- \widehat\Delta_{\mathcal V,\lambda}+  \widehat{Q}_{\mathcal V,\lambda}\right)^{-1}\widehat T_{\mathcal V}(\lambda)\widehat f$, which were studied in
\cite{AnIsoMo17}. Therefore, in view of \cite{AnIsoMo17} Theorem 7.7, we have for $f \in \big(\mathcal B({\bf T}^2)\big)^2$
\begin{equation}
\begin{split}
& \mathcal U_{\mathcal V} I_2\big(- \widehat\Delta_{\mathcal V} + \cos\sqrt{\lambda \pm i0}\big)^{-1}I_2^{\ast}\mathcal U_{\mathcal V}^{\ast} f  \\
& \simeq \mathcal \sum_{j=1}^2\frac{1}{\lambda_j(x) + \cos\sqrt{\lambda} \mp i\sigma(\lambda)0}\big(P_{\mathcal V,j}(x) f\big)\big|_{ M_{\lambda,j}}.
\end{split}
\label{S5D0lambdaandresolvent}
\end{equation}
We denote the right-hand side as
\begin{equation}
 \big(- \widehat\Delta_{\mathcal V} + \cos\sqrt{\lambda \pm i0}\big)^{-1} f\big|_{ M_{\lambda}}.
\nonumber
\end{equation}
We can then prove the following theorem for $\widehat H^{(0)}_{\mathcal E}$ by using (\ref{S5D0lambdaandresolvent}), and for $\widehat H_{\mathcal E}$ by the formula (\ref{DefineFpmlambdainmathcalE}) and the resolvent equation.
\begin{theorem}
\label{ResolventExpansionPerturbedCase}
For any  $\lambda \in (0,\infty)\setminus{\mathcal T}$ and $\widehat f \in \mathcal B(\mathcal E)$, we have
\begin{equation}
\widehat R_{\mathcal E}(\lambda \pm i0)\widehat f
 \simeq \frac{\sin\sqrt{\lambda}}{\sqrt{\lambda}}\Phi^{(0)}(\lambda)^{\ast}\big(- \widehat\Delta_{\mathcal V} + \cos\sqrt{\lambda \pm i0}\big)^{-1}\big({\mathcal U_{\mathcal V}}^{\ast}\, \widehat{\mathcal F}^{(\pm)}(\lambda)\widehat f\big)\big|_{M_{\lambda}}.
\nonumber
\end{equation}
\end{theorem}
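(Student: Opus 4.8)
The plan is to follow the two-step route signalled before the statement: establish the expansion first for the free edge Laplacian $\widehat H^{(0)}_{\mathcal E}$ directly from (\ref{Resolventformula2}) and the vertex singularity expansion (\ref{S5D0lambdaandresolvent}), and then pass to $\widehat H_{\mathcal E}$ by the resolvent equation together with the definition (\ref{DefineFpmlambdainmathcalE}) of $\widehat{\mathcal F}^{(\pm)}$. Throughout I work modulo $\mathcal B^{\ast}_0(\mathcal E)$, i.e. with the relation $\simeq$. The first reduction is to discard the local term: since $r^{(0)}_{\mathcal E}(\lambda)$ is bounded on $L^2(\mathcal E)$ and $\mathcal B(\mathcal E)\subset L^2(\mathcal E)\subset\mathcal B^{\ast}_0(\mathcal E)$, for $\widehat f\in\mathcal B(\mathcal E)$ we have $r^{(0)}_{\mathcal E}(\lambda)\widehat f\simeq 0$, so in (\ref{Resolventformula2}) only the term $\widehat T^{(0)}_{\mathcal V}(\overline\lambda)^{\ast}\widehat w$ survives, where $\widehat w=\frac{\sin\sqrt\lambda}{\sqrt\lambda}\big(-\widehat\Delta_{\mathcal V}+\cos\sqrt{\lambda\pm i0}\big)^{-1}\widehat T^{(0)}_{\mathcal V}(\lambda)\widehat f$ is a vertex function.

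For the free case I then compute the singularity of $\widehat w$ on the vertex side. Applying (\ref{S5D0lambdaandresolvent}) with $f=\Phi^{(0)}(\lambda)\widehat f=\mathcal U_{\mathcal V}I_2\widehat T^{(0)}_{\mathcal V}(\lambda)\widehat f$, the singular part of $\mathcal U_{\mathcal V}I_2\widehat w$ is governed entirely by the scalar factors $\big(\lambda_j(x)+\cos\sqrt\lambda\mp i\sigma(\lambda)0\big)^{-1}$ acting on the boundary data $\big(P_{\mathcal V,j}(x)\Phi^{(0)}(\lambda)\widehat f\big)\big|_{M_{\lambda,j}}=\widehat{\mathcal F}^{(0)}_j(\lambda)\widehat f$. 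Since the singular-part operator depends only on these restrictions, its value is exactly the right-hand side written with $\widehat{\mathcal F}^{(0)}(\lambda)\widehat f$. To return to the edges I use (\ref{widehatue(z,lambda)=PhiC+ref}): $\widehat T^{(0)}_{\mathcal V}(\overline\lambda)^{\ast}$ reconstructs an edge function from its vertex values, and because the free building blocks (\ref{phie10unperturbed}) are entire and singularity-free one has $\widehat T^{(0)}_{\mathcal V}(\overline{\lambda\pm i0})^{\ast}=\widehat T^{(0)}_{\mathcal V}(\lambda)^{\ast}$, hence $\widehat T^{(0)}_{\mathcal V}(\overline\lambda)^{\ast}I_2^{\ast}\mathcal U_{\mathcal V}^{\ast}=\Phi^{(0)}(\lambda)^{\ast}$. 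Lemma \ref{S2B0astEandB0astVLemma} guarantees that $\simeq$ on $\mathcal V$ lifts to $\simeq$ on $\mathcal E$ under this reconstruction, which yields the claimed formula with $\widehat{\mathcal F}^{(0)}$ in place of $\widehat{\mathcal F}^{(\pm)}$; this is precisely the statement for $q_{\mathcal E}=0$, where $\widehat{\mathcal F}^{(\pm)}=\widehat{\mathcal F}^{(0)}$.

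For the perturbed operator I invoke the resolvent equation $\widehat R_{\mathcal E}(\lambda\pm i0)=\widehat R^{(0)}_{\mathcal E}(\lambda\pm i0)\big(1-q_{\mathcal E}\widehat R_{\mathcal E}(\lambda\pm i0)\big)$ and set $\widehat g=\big(1-q_{\mathcal E}\widehat R_{\mathcal E}(\lambda\pm i0)\big)\widehat f$. By (Q-2) the function $q_{\mathcal E}\widehat R_{\mathcal E}(\lambda\pm i0)\widehat f$ is supported on finitely many edges, hence $\widehat g\in\mathcal B(\mathcal E)$ and the free-case expansion applies to $\widehat g$. Finally the definition (\ref{DefineFpmlambdainmathcalE}) gives $\widehat{\mathcal F}^{(0)}(\lambda)\widehat g=\widehat{\mathcal F}^{(0)}(\lambda)\big(1-q_{\mathcal E}\widehat R_{\mathcal E}(\lambda\pm i0)\big)\widehat f=\widehat{\mathcal F}^{(\pm)}(\lambda)\widehat f$, and substituting this into the free expansion for $\widehat R^{(0)}_{\mathcal E}(\lambda\pm i0)\widehat g=\widehat R_{\mathcal E}(\lambda\pm i0)\widehat f$ produces precisely the stated identity.

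I expect the main obstacle to be the faithful transfer of the far-field singularity between the vertex space $\mathcal V$ and the edge space $\mathcal E$: one must ensure that passing through $\widehat T^{(0)}_{\mathcal V}(\lambda)$ and its adjoint neither creates nor destroys $\mathcal B^{\ast}$-singular behaviour at infinity, so that the single resonant denominators in (\ref{S5D0lambdaandresolvent}) carry the entire singular content on the edges as well. This is exactly what Lemma \ref{S2B0astEandB0astVLemma} secures, through its uniform two-sided bound between $\|\widehat u\|_{\mathcal B^{\ast}(\mathcal E)}$ and $\|\widehat u|_{\mathcal V}\|_{\mathcal B^{\ast}(\mathcal V)}$ for Kirchhoff solutions and the corresponding $\mathcal B^{\ast}_0$ equivalence; once that transfer is in hand, the remaining steps are bookkeeping with $\mathcal U_{\mathcal V}$, $I_2$ and the $\lambda\pm i0$ limits, all inherited from \cite{AnIsoMo17}.
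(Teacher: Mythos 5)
Your proposal is correct and takes essentially the same route as the paper: discard $r^{(0)}_{\mathcal E}(\lambda)\widehat f\simeq 0$, obtain the free-case expansion from (\ref{Resolventformula2}) combined with the vertex singularity expansion (\ref{S5D0lambdaandresolvent}), and pass to $\widehat H_{\mathcal E}$ via the resolvent equation together with the definition (\ref{DefineFpmlambdainmathcalE}) of $\widehat{\mathcal F}^{(\pm)}$. Your write-up simply makes explicit the details the paper leaves implicit, namely the $\mathcal V$-to-$\mathcal E$ transfer of $\simeq$ through Lemma \ref{S2B0astEandB0astVLemma} and the membership $\widehat g=(1-q_{\mathcal E}\widehat R_{\mathcal E}(\lambda\pm i0))\widehat f\in\mathcal B(\mathcal E)$.
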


\subsection{Helmholtz equation and S-matrix}
Theorem \ref{ResolventExpansionPerturbedCase} enables us to characterize the solution space to the Helmholtz equation.
\begin{lemma}
\label{LemmaRangeFlambdaast=solutionspace}
Let  $\lambda \in (0,\infty) \setminus \mathcal{T}$ and $\widehat f \in \mathcal B(\mathcal E)$. Then
\begin{equation}
\{\widehat u \in \widehat{\mathcal B}^{\ast}(\mathcal E)\, ; \, 
(\widehat H_{\mathcal E} - \lambda)\widehat u = 0\} 
= \widehat{\mathcal F}^{(-)}(\lambda)^{\ast}{\bf h}_{\lambda}.
\end{equation}
\end{lemma}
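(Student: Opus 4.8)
The plan is to establish the two inclusions separately, the substance lying in the surjectivity of $\widehat{\mathcal F}^{(-)}(\lambda)^{\ast}$ onto the solution space. The inclusion $\widehat{\mathcal F}^{(-)}(\lambda)^{\ast}{\bf h}_{\lambda} \subseteq \{\widehat u \in \widehat{\mathcal B}^{\ast}(\mathcal E)\,;\,(\widehat H_{\mathcal E}-\lambda)\widehat u=0\}$ is immediate from part (3) of Theorem \ref{EigenfunctionExpansionWholespace}: for every $\phi \in {\bf h}_{\lambda}$ one has $\widehat{\mathcal F}^{(-)}(\lambda)^{\ast}\phi \in \mathcal B^{\ast}(\mathcal E)$ and $(\widehat H_{\mathcal E}-\lambda)\widehat{\mathcal F}^{(-)}(\lambda)^{\ast}\phi=0$, so every element of the range is a $\widehat{\mathcal B}^{\ast}(\mathcal E)$-solution of the Helmholtz equation.

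For the reverse inclusion, I would take $\widehat u \in \widehat{\mathcal B}^{\ast}(\mathcal E)$ with $(\widehat H_{\mathcal E}-\lambda)\widehat u=0$ and extract from it a well-defined asymptotic datum on the characteristic surface $M_{\lambda}$. Concretely, by Lemma \ref{S2B0astEandB0astVLemma} the restriction $\widehat u\big|_{\mathcal V} \in \mathcal B^{\ast}(\mathcal V)$ solves the reduced vertex equation and, since $q_{\mathcal E}$ is compactly supported, agrees with a free solution outside a bounded region. The singularity expansion of Theorem \ref{ResolventExpansionPerturbedCase} together with the incoming/outgoing dichotomy of the radiation conditions (Theorem \ref{LemmaRadCondUnique}) then assigns to $\widehat u$ an incoming part, a distinguished element of ${\bf h}_{\lambda}$. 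Because the scattered term in $\widehat{\mathcal F}^{(-)}(\lambda)^{\ast} = (1 - \widehat R_{\mathcal E}(\lambda+i0)q_{\mathcal E})\widehat{\mathcal F}^{(0)}(\lambda)^{\ast}$ is outgoing, the incoming part of $\widehat{\mathcal F}^{(-)}(\lambda)^{\ast}\phi$ depends on $\phi$ through a fixed isomorphism of ${\bf h}_{\lambda}$, so I may choose $\phi$ matching the incoming part of $\widehat u$. Setting $\widehat w = \widehat u - \widehat{\mathcal F}^{(-)}(\lambda)^{\ast}\phi$, both terms solve the homogeneous equation with the same incoming part, whence $\widehat w$ carries no incoming singularity and therefore satisfies the outgoing radiation condition. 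Applying the uniqueness statement of Theorem \ref{LemmaRadCondUnique}(1) with source $\widehat f=0$ forces $\widehat w=0$, so that $\widehat u=\widehat{\mathcal F}^{(-)}(\lambda)^{\ast}\phi$ lies in the range.

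The main obstacle is precisely this middle step: making the extraction of the incoming datum $\phi$ rigorous and identifying it with the incoming part of $\widehat{\mathcal F}^{(-)}(\lambda)^{\ast}\phi$, i.e. the singularity bookkeeping on $M_{\lambda}$. This rests on the full force of the singularity expansion and the radiation-condition analysis, which in turn rely on the Rellich-type Theorem \ref{RelichTypeTheorem} and the unique continuation property established above. Once it is known that $\widehat w$ has vanishing incoming part, the outgoing radiation uniqueness of Theorem \ref{LemmaRadCondUnique} closes the argument immediately; the only delicate point is the identification of the asymptotic data, the rest of the proof being essentially formal.
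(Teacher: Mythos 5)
Your first inclusion is fine: Theorem \ref{EigenfunctionExpansionWholespace}(3) gives exactly that $\widehat{\mathcal F}^{(-)}(\lambda)^{\ast}\phi$ lies in $\widehat{\mathcal B}^{\ast}(\mathcal E)$ and solves $(\widehat H_{\mathcal E}-\lambda)\widehat u=0$. The converse, however, has a genuine gap, and it sits exactly where you planted your flag: the ``extraction of the incoming datum'' from an arbitrary solution $\widehat u \in \widehat{\mathcal B}^{\ast}(\mathcal E)$ is not supplied by any result you invoke. Theorem \ref{ResolventExpansionPerturbedCase} expands $\widehat R_{\mathcal E}(\lambda \pm i0)\widehat f$ only for $\widehat f \in \mathcal B(\mathcal E)$; it says nothing about a general $\widehat{\mathcal B}^{\ast}$ solution of the homogeneous equation. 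The assertion that every such solution splits into incoming and outgoing data on $M_{\lambda}$ is essentially the content of Theorem \ref{HelmhlotzEqWholespace}, which in the logical order of the paper (and of \cite{AnIsoMo17}, Theorem 7.15, to which the paper defers for both proofs) is established together with, indeed by means of, the present lemma; as set up, your argument is circular. The same objection hits the step ``no incoming singularity $\Rightarrow$ outgoing radiation condition'' for a general solution: Theorem \ref{LemmaRadCondUnique}(2) concerns only resolvents applied to $\mathcal B(\mathcal E)$ functions, so this implication is likewise unsupported by what is available at this point.

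The proof the paper points to avoids asymptotic matching altogether, and the ingredient it uses is precisely the one missing from your proposal: the characterization of the \emph{free} solution space, $\{v \in \widehat{\mathcal B}^{\ast}(\mathcal E)\,;\,(\widehat H^{(0)}_{\mathcal E}-\lambda)v=0\} = \widehat{\mathcal F}^{(0)}(\lambda)^{\ast}{\bf h}_{\lambda}$, an Agmon--H\"ormander type theorem proven by Fourier analysis (the Fourier transform of such a $v$ is supported on $M_{\lambda}$, and the $\mathcal B^{\ast}$ bound forces it to be an $L^2$ density there). Granting that, the perturbed case is purely algebraic: since $q_{\mathcal E}\widehat u$ has compact support, $v := \widehat u + \widehat R^{(0)}_{\mathcal E}(\lambda+i0)\,q_{\mathcal E}\widehat u$ belongs to $\widehat{\mathcal B}^{\ast}(\mathcal E)$ and solves the free equation, hence $v = \widehat{\mathcal F}^{(0)}(\lambda)^{\ast}\phi$ for some $\phi \in {\bf h}_{\lambda}$; the resolvent equation yields the identity $\big(1-\widehat R_{\mathcal E}(\lambda+i0)q_{\mathcal E}\big)\big(1+\widehat R^{(0)}_{\mathcal E}(\lambda+i0)q_{\mathcal E}\big)=1$ on such functions; therefore $\widehat u = \big(1-\widehat R_{\mathcal E}(\lambda+i0)q_{\mathcal E}\big)v = \big(1-\widehat R_{\mathcal E}(\lambda+i0)q_{\mathcal E}\big)\widehat{\mathcal F}^{(0)}(\lambda)^{\ast}\phi = \widehat{\mathcal F}^{(-)}(\lambda)^{\ast}\phi$, the last equality being the adjoint of (\ref{DefineFpmlambdainmathcalE}). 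Note that neither the radiation condition nor the uniqueness theorem is needed for this lemma; they enter only afterwards, in deriving the expansion (\ref{Smatrixedge}) and the S-matrix. If you want to salvage your route, the incoming datum of $\widehat u$ would have to be \emph{defined} via this very reduction to the free case --- at which point the identity above already finishes the proof and the matching-plus-uniqueness argument becomes superfluous.
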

\begin{theorem}
\label{HelmhlotzEqWholespace}
For any incoming data $\phi^{in} \in L^2(M_{\lambda})$, there exist a unique solution $\widehat u \in \widehat{\mathcal B}^{\ast}(\mathcal E)$ of the equation 
$$
(\widehat H_{\mathcal E} - \lambda)\widehat u = 0
$$
and an outgoing data $\phi^{out} \in L^2(M_{\lambda})$ satisfying

\begin{equation}
\begin{split}
 \widehat u
 \simeq &  
- \Phi^{(0)}(\lambda)^{\ast}\sum_{j=1}^2\frac{1}{\lambda_j(x) + \cos\sqrt\lambda + i0\sigma(\lambda)}\big(P_{\mathcal V,j}(x)\phi^{in}_j\big)\big|_{M_{\lambda,j}} \\
& +  \Phi^{(0)}(\lambda)^{\ast}\sum_{j=1}^2\frac{1}{\lambda_j(x) + \cos\sqrt\lambda - i0\sigma(\lambda)}\big(P_{\mathcal V,j}(x)\phi^{out}_j\big)\big|_{M_{\lambda,j}}.
\end{split}
\label{Smatrixedge}
\end{equation}
The mapping
\begin{equation}
S(\lambda) : \phi^{in} \to \phi^{out}
\nonumber
\end{equation}
is the S-matrix, which is unitary on ${\bf h}_{\lambda}$.
\end{theorem}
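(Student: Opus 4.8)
The plan is to realize every admissible $\widehat u$ through the stationary spectral representation and to read off the incoming and outgoing data from its singularity expansion modulo $\widehat{\mathcal B}_0^*(\mathcal E)$. By Lemma \ref{LemmaRangeFlambdaast=solutionspace} the solution space $\{\widehat u \in \widehat{\mathcal B}^*(\mathcal E)\,;\,(\widehat H_{\mathcal E} - \lambda)\widehat u = 0\}$ is exactly $\widehat{\mathcal F}^{(-)}(\lambda)^*{\bf h}_\lambda$, so I would write every solution as $\widehat u = \widehat{\mathcal F}^{(-)}(\lambda)^*\psi$ with $\psi \in {\bf h}_\lambda$; the Parseval identity (\ref{RmathcalEParseval}) shows that $\widehat{\mathcal F}^{(-)}(\lambda)^*$ is injective, so $\psi$ is determined by $\widehat u$. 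Taking the adjoint of (\ref{DefineFpmlambdainmathcalE}) and using self-adjointness of $\widehat H_{\mathcal E}$ gives $\widehat{\mathcal F}^{(-)}(\lambda)^* = (1 - \widehat R_{\mathcal E}(\lambda + i0)q_{\mathcal E})\widehat{\mathcal F}^{(0)}(\lambda)^*$, which I would feed into the resolvent expansion of Theorem \ref{ResolventExpansionPerturbedCase} together with the Plemelj formula (\ref{S5D0lambdaandresolvent}) for $(-\widehat\Delta_{\mathcal V} + \cos\sqrt{\lambda \pm i0})^{-1}$ to obtain the singularity expansion of $\widehat u$.

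That expansion separates into a term carrying the factor $(\lambda_j(x) + \cos\sqrt\lambda + i0\,\sigma(\lambda))^{-1}$ and a term carrying $(\lambda_j(x) + \cos\sqrt\lambda - i0\,\sigma(\lambda))^{-1}$; by the definition of the radiation conditions the first is the incoming and the second the outgoing part, and their $L^2(M_{\lambda,j})$-coefficients, matched against the prescribed form (\ref{Smatrixedge}), define $\phi^{in}$ and $\phi^{out}$. The scattered term $-\widehat R_{\mathcal E}(\lambda + i0)q_{\mathcal E}\widehat{\mathcal F}^{(0)}(\lambda)^*\psi$ is purely outgoing (Theorem \ref{LemmaRadCondUnique} (2)), so the incoming part comes solely from the free eigenfunction $\widehat{\mathcal F}^{(0)}(\lambda)^*\psi$ and equals an explicit non-vanishing multiple of $\psi$; hence $\psi \mapsto \phi^{in}$ is a bijection on ${\bf h}_\lambda$. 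Consequently, given any $\phi^{in} \in L^2(M_\lambda)$ one solves for the corresponding $\psi$, takes $\widehat u = \widehat{\mathcal F}^{(-)}(\lambda)^*\psi$ as a solution with that incoming datum, and reads off $\phi^{out}$, thereby defining $S(\lambda) : \phi^{in} \mapsto \phi^{out}$. Carrying out this bookkeeping of the two boundary values with the correct constants is the main technical obstacle; it runs parallel to the argument of \cite{AnIsoMo17}, transcribed to the edge Laplacian through the substitution of $\cos\sqrt\lambda$ for the spectral parameter.

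For uniqueness, if two solutions in $\widehat{\mathcal B}^*(\mathcal E)$ have the same incoming datum, their difference $\widehat w$ solves $(\widehat H_{\mathcal E} - \lambda)\widehat w = 0$ with vanishing incoming part and therefore satisfies the outgoing radiation condition; Theorem \ref{LemmaRadCondUnique} (1) with right-hand side $0$ then forces $\widehat w = 0$. Equivalently, a solution with neither incoming nor outgoing singularity lies in $\widehat{\mathcal B}_0^*(\mathcal E)$, whence Theorem \ref{RelichTypeTheorem} and the unique continuation property give $\widehat w \equiv 0$. This yields simultaneously the existence and the uniqueness of the solution attached to a prescribed $\phi^{in}$.

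It remains to prove that $S(\lambda)$ is unitary on ${\bf h}_\lambda$. I would first establish the isometry $\|\phi^{out}\|_{{\bf h}_\lambda} = \|\phi^{in}\|_{{\bf h}_\lambda}$ by a flux (Green-type) identity for the homogeneous solution $\widehat u$: pairing the two boundary-value pieces of its singularity expansion and using that the cross terms between the $+i0\,\sigma(\lambda)$ and $-i0\,\sigma(\lambda)$ factors drop out upon integration over $M_\lambda$, one finds that the incoming and outgoing fluxes coincide, in the same spirit as the Parseval identity (\ref{RmathcalEParseval}). Surjectivity follows by rerunning the construction with the incoming resolvent $\widehat R_{\mathcal E}(\lambda - i0)$ and the representation $\widehat u = \widehat{\mathcal F}^{(+)}(\lambda)^*\psi$, which produces the inverse map $\phi^{out} \mapsto \phi^{in}$; equivalently, $S(\lambda)$ intertwines the two fiber representations $\widehat{\mathcal F}^{(\pm)}(\lambda)$, each surjective onto ${\bf h}_\lambda$ by Theorem \ref{EigenfunctionExpansionWholespace}. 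An isometry with a bounded two-sided inverse is unitary, which completes the proof.
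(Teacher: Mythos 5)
Your proposal is correct and follows essentially the same route as the paper's own proof: the paper omits the details, referring to Theorem 7.15 of \cite{AnIsoMo17}, and that argument is precisely the one you reconstruct --- representing every solution as an element of $\widehat{\mathcal F}^{(-)}(\lambda)^{\ast}{\bf h}_{\lambda}$ via Lemma \ref{LemmaRangeFlambdaast=solutionspace}, reading off $\phi^{in}$ and $\phi^{out}$ from the singularity expansion of Theorem \ref{ResolventExpansionPerturbedCase} together with the splitting into $\pm i0\,\sigma(\lambda)$ boundary values, using the radiation conditions (Theorem \ref{LemmaRadCondUnique}) and the Rellich-type theorem with unique continuation for uniqueness, and obtaining unitarity through the spectral representation and scattering-operator framework. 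The only cosmetic imprecision is your claim that the Parseval identity (\ref{RmathcalEParseval}) alone gives injectivity of $\widehat{\mathcal F}^{(-)}(\lambda)^{\ast}$ (this really rests on the density of the range of $\widehat{\mathcal F}^{(-)}(\lambda)$ in ${\bf h}_{\lambda}$, which is part of the cited spectral-representation results), but this is not load-bearing since your bijection $\psi \mapsto \phi^{in}$ already follows from the non-vanishing-multiple computation.
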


We omit the proof of Lemma \ref{LemmaRangeFlambdaast=solutionspace} and 
Theorem \ref{HelmhlotzEqWholespace}, since they are almost the same as that of  Theorem 7.15 of \cite{AnIsoMo17}.

As is proven in \cite{KoSa15}, using the wave operator
\begin{equation}
\widehat W_{\pm} = {\mathop{\rm s-lim}_{t\to\pm\infty}}\, e^{it\widehat H_{\mathcal E}}e^{-it\widehat H^{(0)}_{\mathcal E}}\widehat P_{ac}(\widehat H^{(0)}_{\mathcal E}),
\nonumber
\end{equation}
where $\widehat P_{ac}(\widehat H^{(0)}_{\mathcal E})$ is the projection onto the absolutely continuous subspace of $\widehat H^{(0)}_{\mathcal E}$, one can define the scattering operator 
$$
\widehat S = (\widehat W^{(+)})^{\ast}\widehat W^{(-)},
$$
which 
is unitary. Define $S$ by
$$
S =\widehat{\mathcal F}^{(0)}\widehat S (\widehat{\mathcal  F}^{(0)})^{\ast}.
$$
The S-matrix $S(\lambda)$ and the scattering amplitude $A(\lambda)$ are defined by
\begin{equation}
S(\lambda) = 1 - 2\pi iA(\lambda),
\nonumber
\end{equation}
\begin{equation}
A(\lambda) = \widehat{\mathcal F}^{(+)}(\lambda)q_{\mathcal E}\widehat{\mathcal F}^{(0)}(\lambda).
\label{S5FormulaScateringamplitude}
\end{equation}
Then $S(\lambda)$ is unitary on ${\bf h}_{\lambda}$, and for $\lambda \in (0,\infty)\setminus{\mathcal T}$
\begin{equation}
(Sf)(\lambda) = S(\lambda)f(\lambda), \quad f \in {\bf H}.
\nonumber
\end{equation}
Since the resolvent has a meromorphic extension into the lower half-plane $\{{\rm Re}\, \lambda > 0, \ {\rm Im}\, \lambda < 0\}$ with possible branch points on $\mathcal T$, the formula (\ref{S5FormulaScateringamplitude}) implies that 
the S-matrix $S(\lambda)$ is also meromorphic in the same domain.


\section{From S-matrix to interior D-N map}
\label{SectionSmatrixtoDNmap}

\subsection{Boundary value problem}
For a subgraph $\Omega = \{\mathcal V_{\Omega}, \mathcal E_{\Omega}\} \subset \{\mathcal V, \mathcal E\}$ and $v \in \mathcal V$, $v \sim \Omega$ means that there exist a vertex $w \in \mathcal V_{\Omega}$ and an  edge ${\bf e} \in \mathcal E$ such that 
$v \sim w$, ${\bf e}(0) = v$ or ${\bf e}(1) = v$. 
For a connected subgraph $\Omega \subset \{\mathcal V, \mathcal E\}$, we define a subset $\partial\Omega = \{\mathcal V_{\partial\Omega}, \mathcal E_{\partial\Omega}\} \subset \{\mathcal V, \mathcal E\}$ by 
\begin{equation}
\mathcal V_{\partial\Omega} = \{v \not\in \mathcal V_{\Omega}\, ; \, 
v \sim {\Omega}\},
\nonumber
\end{equation}
\begin{equation}
\mathcal E_{\partial\Omega} = \{{\bf e} \in \mathcal E\, ; \, {\bf e}(0) \in \mathcal V_{\partial\Omega} \ {\rm or} \ {\bf e}(1) \in \mathcal V_{\partial\Omega}\}.
\nonumber
\end{equation}
We then put $\overline{\Omega} = \Omega \cup \partial\Omega$ and
\begin{equation}
\stackrel{\circ}{\mathcal V_{\overline{\Omega}}} = {\mathcal V}_{\Omega}, \quad \partial \mathcal V_{\overline{\Omega}} = \mathcal V_{\partial\Omega},
\nonumber
\end{equation}
which are called the set of {\it interior vertices}  and the set of {\it boundary vertices} of $\overline{\Omega}$, respectively. We  put
\begin{equation}
\mathcal V_{\overline{\Omega}} = \stackrel{\circ}{\mathcal V_{\overline{\Omega}}}\cup\, \partial\mathcal V_{\overline{\Omega}}.
\nonumber
\end{equation}
As for the edges, we simply put
\begin{equation}
{\mathcal E}_{\overline{\Omega}} = {\mathcal E}_{\Omega}\cup \mathcal E_{\partial\Omega}.
\nonumber
\end{equation}
We then define the {\it edge Dirichlet Laplacian} $\widehat\Delta_{\mathcal E_{\overline{\Omega}}}$ by
\begin{equation}
\widehat\Delta_{\mathcal E_{\overline{\Omega}}} u_{\bf e}(z) = \frac{d^2}{dz^2}u_{\bf e}(z), \quad 
{\bf e} \in \mathcal E_{\overline{\Omega}}
\nonumber
\end{equation}
whose domain $D(\widehat\Delta_{\mathcal E_{\overline{\Omega}}})$ is the set of all $u = \{u_{\bf e}\}_{{\bf e} \in \mathcal E_{\overline{\Omega}}} \in H^2(\mathcal E_{\overline{\Omega}})$ satisfying $u(v) = 0$ at any boundary vertex $v \in \partial \mathcal V_{\overline{\Omega}}$ and the Kirchhoff condition at any interior vertex $v \in \stackrel{\circ}{\mathcal V}_{\overline{\Omega}}$. By the standard argument, $\widehat\Delta_{\mathcal E_{\overline{\Omega}}}$ is self-adjoint.

The {\it vertex Dirichlet Laplacian on $\mathcal V_{\overline{\Omega}}$}  is defined in the same way as in (\ref{S3DefineVertexLaplacian}) :
\begin{equation}
\big(\widehat\Delta_{\mathcal V_{\overline{\Omega}}, \lambda}\widehat u\big)(v) = \frac{1}{{\rm deg}_{\mathcal V_{\overline{\Omega}}}(v)}
\sum_{w\sim v, w\in \mathcal V_{\overline{\Omega}}}\frac{1}{\psi_{wv}(1,\lambda)}\widehat u(w), \quad 
v \in \mathcal V_{\overline{\Omega}}.
\nonumber
\end{equation}
Recall that for a domain $\mathcal W \subset \mathcal V$, we define
\begin{equation}
\deg_{\mathcal W}(v) = 
\left\{
\begin{split}
& \sharp\, \{w \in \mathcal W \, ; \, w\sim v\}, \quad  v \in \stackrel{\circ}{\mathcal W}, \\
& \sharp\, \{w \in \stackrel{\circ}{\mathcal W} \, ; \, w\sim v\}, \quad  v \in \partial{\mathcal W}.
\end{split}
\right.
\nonumber
\end{equation}
(See (2.6) of \cite{AnIsoMo17(1)}). 
We impose the Dirichlet boundary condition for the domain $D(\widehat\Delta_{\mathcal V_{\overline{\Omega}}, \lambda})$ : 
\begin{equation}
\widehat u \in D(\widehat\Delta_{\mathcal V_{\overline{\Omega}}, \lambda}) \Longleftrightarrow \widehat u \in \ell^2(\mathcal V_
{\overline{\Omega}})\cap\{\widehat u \, ; \, \widehat u(v) = 0, \ v \in \partial \mathcal V_{\overline{\Omega}}\}.
\nonumber
\end{equation}
As in \S 3, we first define the vertex Dirichlet Laplacian for the  case without potential and then add the pontntial $\widehat Q_{\mathcal V,\lambda}$ as a perturbation. By modifying the inner product, $- \widehat\Delta_{\mathcal V_{\overline{\Omega}},\lambda} + \widehat Q_{\mathcal V,\lambda}$ is self-adjoint.
 The normal derivative at the boundary associated with $\widehat\Delta_{\mathcal V_{\overline{\Omega}},\lambda}$ is defined by
\begin{equation}
\big(\partial^{\nu}_{\widehat\Delta_{\mathcal V_{\overline{\Omega}},\lambda}}\widehat u\big)(v) = - \frac{1}{{\rm deg}_{\mathcal V_{\overline{\Omega}}}(v)}\sum_{w\sim v, w \in \stackrel{\circ}{\mathcal V_{\overline{\Omega}}}}\frac{1}{\psi_{wv}(1,\lambda)}
\widehat u(w).
\label{DefinenormalderivativeVDlambda}
\end{equation}
(c.f. (2.7) of \cite{AnIsoMo17(1)}).
Note that in the right-hand side, $w$ is taken only from $\stackrel{\circ}{\mathcal V_{\overline{\Omega}}}$.

Let us give an example of interior and exterior domains as well as their boundaries for the case of hexagonal lattice. 
We identify ${\bf R}^2$ with ${\bf C}$ and put $\omega = e^{2\pi i/6} = (1 + \sqrt{3}i)/2$. 
Let $\mathcal D$ be the hexagon with center at the origin and vertices $\omega^n, 1 \leq n \leq 6$. Recalling that the basis of the hexagonal lattice are $2 - \omega$ and $1 + \omega$, we put
\begin{equation}
\mathcal D_{k\ell} = \mathcal D + k(2-\omega) + \ell(1 + \omega),
\nonumber
\end{equation}
which denotes the translation of $\mathcal D$ by $k(2-\omega)$ and $\ell(1 + \omega)$.
For an integer $L \geq 1$, let 
\begin{equation}
\mathcal D_L = {\mathop\cup_{|k|\leq L, |\ell| \leq L}}\mathcal D_{k\ell}.
\nonumber
\end{equation}
 As is illustrated in Figure \ref{BoundaryHexagonal}, we take an interior domain $\Omega_{int}$ in such a way that
\begin{equation}
\stackrel{\circ}{\mathcal V_{\Omega_{int}}} = \mathcal V\cap \mathcal D_L, \quad 
\stackrel{\circ}{\mathcal E_{\Omega_{int}}} = \mathcal E\cap \mathcal D_L.
\nonumber
\end{equation}
In Figure \ref{BoundaryHexagonal}, $\partial \mathcal V_{\Omega_{int}}$ is denoted by white dots. 

\begin{figure}[h]
\centering
\includegraphics[width=5cm, bb=0 0 606 564]{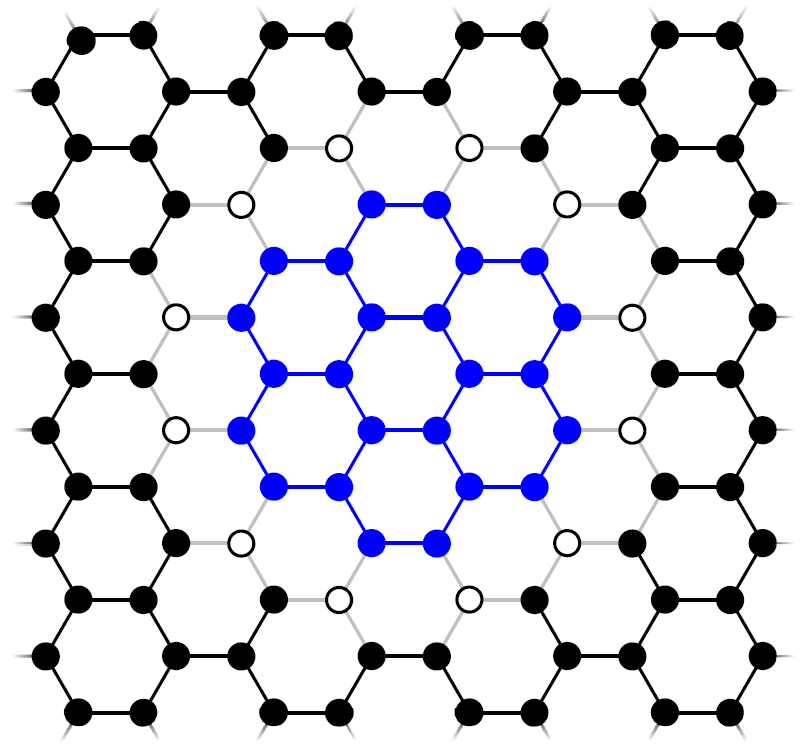}
\caption{Boundary of a domain in the hexagonal lattice}
\label{BoundaryHexagonal}
\end{figure}

The exterior domain $\Omega_{ext}$ is defined similarly. We then put
$$
\mathcal V_{int} = \mathcal V_{\overline{\Omega_{int}}}, \quad \mathcal E_{int} = \mathcal E_{\overline{\Omega_{int}}}, 
$$
$$
\mathcal V_{ext} = \mathcal V_{\overline{\Omega_{ext}}}, \quad \mathcal E_{ext} = \mathcal E_{\overline{\Omega_{ext}}}, 
$$
for the sake of simplicity. 
Note that
\begin{equation}
\mathcal V = \mathcal V_{int} \cup \mathcal V_{ext}, \quad 
\partial\mathcal V_{int} = \partial \mathcal V_{ext},
\nonumber
\end{equation}
\begin{equation}
\mathcal E = \mathcal E_{int}\cup\mathcal E_{ext}, \quad 
\mathcal E_{int}\cap\mathcal E_{ext} = \emptyset.
\nonumber
\end{equation}
We define the edge Dirichlet Laplacians on $\mathcal E_{int}$, $\mathcal E_{ext}$, which are denoted by $\widehat\Delta_{int,\mathcal E}$, $\widehat\Delta_{ext,\mathcal E}$:
$$
\widehat \Delta_{int,\mathcal E} = \widehat \Delta_{\mathcal E_{int}}, \quad \widehat \Delta_{ext,\mathcal E} = \widehat\Delta_{\mathcal E_{ext}}.
$$ 
Let us note that 
$$
\sigma_e\big(- \widehat{\Delta}_{\mathcal E}\big) = \sigma_e\big(- \widehat{\Delta}_{ext,\mathcal E}\big).
$$

We assume that the support of the potential lies strictly inside of  $\mathcal E_{int}$. Namely introducing a set:
\begin{equation}
\widetilde{\mathcal E_{int}} = \{{\bf e} \in \mathcal E_{int}\, ; \, {\bf e}(0) \not\in \partial\mathcal V_{int}, \ {\bf e}(1) \not\in \partial\mathcal V_{int}\},
\nonumber
\end{equation}
we assume
\begin{equation}
{\rm supp}\, q_{\mathcal E} \subset \widetilde{\mathcal E_{int}}.\label{SupportCond}
\end{equation}

The formal formulas (\ref{Resolventformalformula1}), (\ref{Resolventformula2})  are also valid for  boundary value problems of  edge Laplacians. For the case of the exterior problem, the resolvent of $- \widehat{\Delta}_{ext,\mathcal E}$ is written by (\ref{Resolventformula2}) with $\widehat H^{(0)}_{\mathcal E}$ replaced by $- \widehat{\Delta}_{ext,\mathcal E}$.
In our previous work \cite{AnIsoMo17(1)}, we studied the spectral properties of the vertex Laplacian in the exterior domain by reducing them to the whole space problem. Therefore, all the results for the edge Laplacian in the previous section also hold in the exterior domain. In particular, we have 
\begin{itemize}
\item Rellich type theorem (Theorem \ref{RelichTypeTheorem}),
\item Limiting absorption principle (Theorem \ref{LAPwholespace}),
\item Spectral representation (Theorem \ref{EigenfunctionExpansionWholespace}),
\item Resolvent expansion (Theorem \ref{ResolventExpansionPerturbedCase}),
\item Exapansion of solutions to the Helmholtz equation  (Theorem \ref{HelmhlotzEqWholespace}), 
\item S-matrix (Theorem \ref{HelmhlotzEqWholespace})
\end{itemize}
in the exterior domain $\mathcal E_{ext}$. In fact, Theorem \ref{RelichTypeTheorem} holds without any change. Using the formula (\ref{widehatue(z,lambda)=PhiC+ref}) and the limiting absorption principle for $\widehat\Delta_{ext}$ proven in Theorem 7.7 in \cite{AnIsoMo17}, one can extend Theorem \ref{LAPwholespace} for the exterior domain. 
The radiation condition is also extended to the exterior domain. Then, the remaining theorems 
(Theorems \ref{EigenfunctionExpansionWholespace}, \ref{HelmhlotzEqWholespace}) are proven by the same argument.
\subsection{Exterior and interior D-N maps}

We consider the edge model for the exterior problem. 
Let
$\widehat u^{(\pm)} = \{\widehat u^{(\pm)}_{\bf e}\}_{{\bf e}\in \mathcal E_{ext}}$ be the solution to the equation
\begin{equation}
\left\{
\begin{split}
& ( - \widehat{\Delta}_{ext,\mathcal E} - \lambda)\widehat u = 0, \quad 
{\rm in} \quad \stackrel{\circ}{{\mathcal E}_{ext}}, \\
& \widehat u = \widehat f, \quad {\rm on} \quad \partial \mathcal E_{ext},
\end{split}
\right.
\label{EdgeEquationExterior}
\end{equation}
 satisfying the radiation
condition (outgoing for $\widehat u^{(+)}$ and incoming for $\widehat u^{(-)}$).
Then, the extrior D-N map $\Lambda^{(\pm)}_{ext,\mathcal E}(\lambda)$ is defined by
\begin{equation}
\Lambda^{(\pm)}_{ext,\mathcal E}(\lambda)\widehat f(v) = - \frac{d}{dz}\widehat u^{(\pm)}_{\bf e}(v), \quad v \in \partial \mathcal V_{ext}, 
\label{DefineLambdaextE}
\end{equation}
where ${\bf e}$ is the edge having $v$ as its end point.
Here, to compute $\frac{d}{dz}\widehat u_{\bf e}^{(\pm)}(v)$, we negelect the original orientation of ${\bf e}$. Namely, we parametrize ${\bf e}$ by $z \in [0,1]$ so that $v \in \partial\mathcal V$ corresponds to $z= 0$, and define $\frac{d}{dz}\widehat u^{(\pm)}_{\bf e}(v) = \frac{d}{dz}\widehat u^{(\pm)}_{\bf e}(z)\big|_{z=0}$.

 For the case of the interior problem, the Dirichlet boundary value problem for the edge Laplacian 
\begin{equation}
\left\{
\begin{split}
& ( - \widehat{\Delta}_{int,\mathcal E} + q_{{\bf e}} - \lambda)\widehat u = 0, \quad 
{\rm in} \quad \stackrel{\circ}{{\mathcal E}_{int}}, \\
& \widehat u = \widehat f, \quad {\rm on} \quad \partial \mathcal V_{int}
\end{split}
\right.
\label{EdgeEquationInterior}
\end{equation}
is formulated as above.
Note that the spectrum of $- \widehat\Delta_{int,\mathcal E} + q_{\mathcal E}$ is discrete. In the following, we assume that 
\begin{equation}
\lambda \not\in \sigma(- \widehat\Delta_{int,\mathcal E} + q_{\mathcal E}).
\label{lambdanotinsigamDeltaintE}
\end{equation}
The D-N map $\Lambda_{int,\mathcal E}(\lambda)$ is defined by
\begin{equation}
\Lambda_{int,\mathcal E}(\lambda)\widehat f(v) = \frac{d}{dz}\widehat u_{\bf e}(v), \quad v \in \partial \mathcal V_{int}, 
\label{LambdaintE}
\end{equation}
where ${\bf e}$ is the edge having $v$ as its end point and $\widehat u = \{\widehat u^{(\pm)}_{\bf e}\}_{{\bf e}\in \mathcal E_{int}}$ 
is the solution to the equation (\ref{EdgeEquationInterior}).
The same remark as above is applied to $\frac{d}{dz}\widehat u_{\bf e}(v), v \in \partial\mathcal V_{int}$.

The D-N maps are also defined for vertex operators. 
Let us slightly change the  notation. 
For a subset $\mathcal V_D \subset \mathcal V$ and $v \in \mathcal V_D$, let
\begin{equation}
(\widehat\Delta^{(0)}_{\mathcal V}\widehat u)(v) = \frac{1}{3}
\sum_{w \sim v}\widehat u(w),
\nonumber
\end{equation}
\begin{equation}
(\widehat\Delta^{(0)}_{\mathcal V_D}\widehat u)(v) = \frac{1}{{\rm deg}_{\mathcal V_D}(v)}
\sum_{w \sim v, w \in \mathcal V_D}\widehat u(w).
\nonumber
\end{equation}
By this definition, we have (see (\ref{DefineDeltaVlambda(0)}))
\begin{equation}
\widehat\Delta^{(0)}_{\mathcal V,\lambda} = \frac{\sqrt{\lambda}}{\sin\sqrt{\lambda}}\widehat\Delta^{(0)}_{\mathcal V}.
\label{Delta0Vlambda=fracsinlambdalambdaDelta0}
\end{equation}
For the exterior and interior domains $\Omega_{ext}$ and $\Omega_{int}$ defined in the previous section, $\widehat\Delta_{\mathcal V_D}^{(0)}$ is denoted by $\widehat\Delta_{ext,\mathcal V}$ and $\widehat\Delta_{int,\mathcal V}$, respectively:
\begin{equation}
\widehat\Delta_{ext,\mathcal V} = \widehat\Delta^{(0)}_{\mathcal V_{ext}}, \quad 
\widehat\Delta_{int,\mathcal V} = \widehat\Delta^{(0)}_{\mathcal V_{int}}.
\nonumber
\end{equation}

Now, consider the exterior boundary value problem
\begin{equation}
\left\{
\begin{split}
& \big(- \widehat\Delta_{\mathcal V,\lambda} + \widehat Q_{\mathcal V,\lambda}\big)\widehat u = 0, \quad {\rm in} \quad \stackrel{\circ}{\mathcal V_{ext}}, \\
& \widehat u = \widehat f, \quad {\rm on} \quad \partial\mathcal V_{ext}.
\end{split}
\right.
\label{ExteriorBVPVertexmodel}
\end{equation}
Note that by (\ref{Delta0Vlambda=fracsinlambdalambdaDelta0}) and (\ref{DefinewidehatQ0}) this is equivalent to 
\begin{equation}
\left\{
\begin{split}
& (- \widehat\Delta^{(0)}_{\mathcal V} + \cos\sqrt{\lambda})\widehat u = 0, \quad {\rm in} \quad \stackrel{\circ}{\mathcal V_{ext}}, \\
& \widehat u = \widehat f, \quad {\rm on} \quad \partial\mathcal V_{ext}.
\end{split}
\right.
\nonumber
\end{equation}
Let $\widehat u^{(\pm)}_{ext,\mathcal V}$ be the solution of this equation satisfying the radiation condition.
Then, taking account of (\ref{DefinenormalderivativeVDlambda}) and  (\ref{Delta0Vlambda=fracsinlambdalambdaDelta0}), 
we define the exterior D-N map by
\begin{equation}
\begin{split}
\widehat \Lambda^{(\pm)}_{ext,\mathcal V}(\lambda)\widehat f &= 
- \frac{\sin\sqrt{\lambda}}{\sqrt{\lambda}}\partial^{\nu}_{\widehat\Delta_{\mathcal V_{ext},\lambda}}\widehat u^{(\pm)}_{ext,\mathcal V} = \partial^{\nu}_{\widehat\Delta_{ext,\mathcal V}}\widehat u^{(\pm)}_{ext,\mathcal V} \\
& = \frac{1}{{\rm deg}_{\mathcal V_{ext}}(v)}\sum_{w\sim v, w\in \stackrel{\circ}{\mathcal V_{ext}}}\widehat u^{(\pm)}_{ext,\mathcal V}(w).
\end{split}
\label{LambdaextV}
\end{equation}

We also consider the interior boundary value problem
\begin{equation}
\left\{
\begin{split}
& \big(- \widehat\Delta_{\mathcal V,\lambda} + \widehat Q_{\mathcal V,\lambda}\big)\widehat u = 0, \quad {\rm in} \quad \stackrel{\circ}{\mathcal V_{int}}, \\
& \widehat u = \widehat f, \quad {\rm on} \quad \partial\mathcal V_{int}.
\end{split}
\right.
\label{IntBVPVertex}
\end{equation}
Taking account of (\ref{SupportCond}), we define the interior D-N map by
\begin{equation}
\begin{split}
\widehat \Lambda_{int,\mathcal V}(\lambda)\widehat f(v) &= 
 \frac{\sin\sqrt{\lambda}}{\sqrt{\lambda}}\partial^{\nu}_{\widehat\Delta_{\mathcal V_{int},\lambda}}\widehat u_{int,\mathcal V} = - \partial^{\nu}_{\widehat\Delta_{int,\mathcal V}}\widehat u_{int,\mathcal V} \\
& = - \frac{1}{{\rm deg}_{\mathcal V_{int}}(v)}\sum_{w\sim v, w\in \stackrel{\circ}{\mathcal V_{int}}}\widehat u_{int,\mathcal V}(w).
\end{split}
\label{DefineLambdaintV}
\end{equation}

Note that by virtue of Lemma \ref{Lemma3.2}, if $\widehat u$ satisfies the edge Schr{\"o}dinger equation $(\widehat H_{\mathcal E} - \lambda)\widehat u = 0$ and the Kirchhoff condition, $\widehat u\big|_{\mathcal V}$ satisfies the vertex Schr{\"o}dinger equation $\big(- \widehat \Delta_{\mathcal V,\lambda} + \widehat Q_{\mathcal V,\lambda}\big)\widehat u\big|_{\mathcal V} = 0$. 
Therefore, if the exterior boundary value problem (\ref{EdgeEquationExterior}) for the edge model is solvable, so is the exterior boundary value problem  
(\ref{ExteriorBVPVertexmodel}) for the vertex model. The same remark applies to the interior boundary value problem.

If $\varphi(z)$ satisfies $- \varphi''(z) - \lambda\varphi(z) = 0$ in $(0,1)$, we have
$$
\varphi(1) = \varphi(0)\cos\sqrt{\lambda} +  \varphi'(0)\frac{\sin\sqrt{\lambda}}{\sqrt{\lambda}},
$$
Since the D-N map for the vertex model is computed by $\widehat u\big|_{\mathcal V}$, where $\widehat u$ is the solution to the edge Schr{\"o}dinger equation, this implies, by (\ref{DefineLambdaextE}), (\ref{LambdaintE}), (\ref{LambdaextV}) and (\ref{DefineLambdaintV}), the following formulas between the D-N maps of edge-Laplacian and vertex Laplacian.
\begin{lemma} 
\label{LemmaDNedge=DNvertex}
The following equalities hold: 
\begin{equation}
\widehat\Lambda^{(\pm)}_{ext,\mathcal V}(\lambda) = \cos\sqrt{\lambda} - \frac{\sin\sqrt{\lambda}}{ \sqrt{\lambda}}\, \Lambda^{(\pm)}_{ext,\mathcal E}(\lambda), \quad 
\lambda \in (0,\infty)\setminus \mathcal T,
\nonumber
\end{equation}
\begin{equation}
\widehat\Lambda_{int,\mathcal V}(\lambda) = - \cos\sqrt{\lambda} -  \frac{\sin\sqrt{\lambda}}{\sqrt{\lambda}}\, \Lambda_{int,\mathcal E}(\lambda), \quad  \lambda \in {\bf C}\setminus \sigma(- \widehat{\Delta}_{int,\mathcal E} + q_{\mathcal E}).
\nonumber
\end{equation}
\end{lemma}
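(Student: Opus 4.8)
The proof is a direct edge-by-edge application of the transfer relation $\varphi(1) = \varphi(0)\cos\sqrt\lambda + \varphi'(0)\,\frac{\sin\sqrt\lambda}{\sqrt\lambda}$ for solutions of $-\varphi'' - \lambda\varphi = 0$ on $(0,1)$, recorded just above the statement. The key observation is that every edge ${\bf e}$ joining a boundary vertex to an adjacent interior vertex carries no potential: in the exterior problem this is automatic, since $q_{\mathcal E}$ is supported in $\mathcal E_{int}$, and in the interior problem it is guaranteed by the support hypothesis (\ref{SupportCond}), which keeps $\mathrm{supp}\,q_{\mathcal E}$ inside $\widetilde{\mathcal E_{int}}$, away from $\partial\mathcal V_{int}$. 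On such an edge the solution $\widehat u$ of the edge equation is therefore a free solution, and the transfer relation ties together its three relevant boundary data: the prescribed value $\widehat f(v) = \widehat u(z{=}0)$, the derivative $\frac{d}{dz}\widehat u(z{=}0)$ entering the edge D-N map, and the neighbouring interior vertex value $\widehat u(z{=}1) = \widehat u|_{\mathcal V}(w)$ entering the vertex D-N map. For the hexagonal domains constructed above each boundary vertex is joined to the relevant interior by a single edge, so the sums defining the vertex D-N maps reduce to one term, consistent with the single-edge definitions of the edge D-N maps.

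I would then treat the two cases in parallel, fixing $v \in \partial\mathcal V$, its interior neighbour $w$, and the joining edge ${\bf e}$ parametrized so that $v$ corresponds to $z=0$. In the exterior case the transfer relation gives $\widehat u^{(\pm)}_{ext,\mathcal V}(w) = \cos\sqrt\lambda\,\widehat f(v) + \frac{\sin\sqrt\lambda}{\sqrt\lambda}\frac{d}{dz}\widehat u^{(\pm)}_{\bf e}(v)$. Inserting $\frac{d}{dz}\widehat u^{(\pm)}_{\bf e}(v) = -\Lambda^{(\pm)}_{ext,\mathcal E}(\lambda)\widehat f(v)$ from (\ref{DefineLambdaextE}) and using the reduced form of the vertex D-N map in (\ref{LambdaextV})---whose normalization $\frac{\sin\sqrt\lambda}{\sqrt\lambda}$ is exactly the potential-free value $\psi_{wv}(1,\lambda)$---identifies $\widehat\Lambda^{(\pm)}_{ext,\mathcal V}(\lambda)\widehat f(v)$ with $\widehat u^{(\pm)}_{ext,\mathcal V}(w)$ and yields the first equality. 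The interior case is identical in structure: now the edge D-N map (\ref{LambdaintE}) equals $+\frac{d}{dz}\widehat u_{\bf e}(v)$ and the vertex D-N map (\ref{DefineLambdaintV}) carries an overall minus sign, so the same substitution produces $\widehat\Lambda_{int,\mathcal V}(\lambda)\widehat f(v) = -\cos\sqrt\lambda\,\widehat f(v) - \frac{\sin\sqrt\lambda}{\sqrt\lambda}\Lambda_{int,\mathcal E}(\lambda)\widehat f(v)$. As the identities are purely pointwise in $\lambda$, each holds throughout the range of $\lambda$ on which the corresponding maps are defined.

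The computation is elementary, so I expect the only real obstacle to be the bookkeeping of orientations and signs: one must apply the convention fixing $v$ at $z=0$ uniformly across (\ref{DefineLambdaextE}), (\ref{LambdaintE}), (\ref{LambdaextV}) and (\ref{DefineLambdaintV}), and one must confirm that the factor $\frac{\sin\sqrt\lambda}{\sqrt\lambda}$ appearing in the vertex D-N maps is legitimately the potential-free $\psi_{wv}(1,\lambda)$ on the boundary edges---which is exactly the content of the support condition (\ref{SupportCond}). Once these conventions are pinned down, the two equalities follow immediately.
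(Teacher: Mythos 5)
Your proof is correct and follows essentially the same route as the paper: the paper's justification is precisely the paragraph preceding the lemma, which applies the free transfer relation $\varphi(1) = \varphi(0)\cos\sqrt{\lambda} + \varphi'(0)\frac{\sin\sqrt{\lambda}}{\sqrt{\lambda}}$ on the potential-free edges adjacent to the boundary (guaranteed by the support condition) and then substitutes the definitions of the four D-N maps. Your version merely spells out the sign bookkeeping and the role of (\ref{SupportCond}) more explicitly than the paper does.
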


Therefore, the D-N map for the edge model and the D-N map for the vertex model determine each other.


\subsection{Relations between S-matices and D-N maps}
 We show that the S-matrices for the vertex Laplacian and the edge Laplacian coincide.

In \cite{AnIsoMo17}, Theorem 7.15, we have proven the following theorem, which is the counter part of Theorem \ref{HelmhlotzEqWholespace} for the discrete Laplacian $- \widehat\Delta_{\mathcal V}$ at the energy $- \cos\sqrt{\lambda}$: 
For any incoming data $\phi^{in} \in L^2(M_{\lambda})$, there exist a unique solution $\widetilde u_{\mathcal V} \in \widehat{\mathcal B}^{\ast}(\mathcal V)$ of the equation 
$$
(- \widehat\Delta_{\mathcal V} + \cos \sqrt{\lambda})\widetilde u_{\mathcal V} = 0
$$
and an outgoing data $\widetilde \phi^{out} \in L^2(M_{\lambda})$ satisfying
\begin{equation}
\begin{split}
I_2 \widetilde u_{\mathcal V}
 \simeq &  
- \sum_{j=1}^2\frac{1}{\lambda_j(x) + \cos\sqrt\lambda + i0\sigma(\lambda)}\big(P_{\mathcal V,j}(x)\phi^{in}_j\big)\big|_{M_{\lambda,j}} \\
& +  \sum_{j=1}^2\frac{1}{\lambda_j(x) + \cos\sqrt\lambda - i0\sigma(\lambda)}\big(P_{\mathcal V,j}(x)\widetilde \phi^{out}_j\big)\big|_{M_{\lambda,j}}.
\end{split}
\label{Smatrixvertex}
\end{equation}
in the sense that the difference of both sides is in $\mathcal B_0^{\ast}({\bf T}^2;{\bf C}^2)$.
The mapping
\begin{equation}
\widetilde S(\lambda) : \phi^{in} \to \widetilde\phi^{out}
\nonumber
\end{equation}
is the S-matrix of $- \widehat\Delta_{\mathcal V}$ at the energy $- \cos\sqrt{\lambda}$, which is unitary on ${\bf h}_{\lambda}$.

By virtue of Lemma \ref{S2B0astEandB0astVLemma}, we see that 
$\widetilde u_{\mathcal E} := \Phi^{(0)}(\lambda)^{\ast}\widetilde u_{\mathcal V}$ has the properties in Theorem \ref{HelmhlotzEqWholespace}, hence by the uniqueness $u = \widetilde u_{\mathcal E}$. Therefore, $\widetilde \phi^{(out)} = \phi^{(out)}$, which implies $S(\lambda) = \widetilde S(\lambda)$. 
We have thus proven the followin theorem.

\begin{theorem}
\label{VertexSmatrix=EdgeSmatrix}
The S-matrix for the vertex Schr{\"o}dinger operator  at the energy $- \cos\sqrt{\lambda}$ coincides with that of  the edge Schr{\"o}dinger operator  at the energy $\lambda$. 
\end{theorem}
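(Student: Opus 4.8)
The plan is to realize both S-matrices as the map ``incoming datum $\mapsto$ outgoing datum'' read off from one and the same free far-field expansion, and then to transport a solution from the vertex model to the edge model in a way that preserves both the governing equation and the two pieces of scattering data; uniqueness in Theorem~\ref{HelmhlotzEqWholespace} then forces the two maps to agree. Concretely, fix $\lambda \in (0,\infty)\setminus\mathcal T$ and an arbitrary incoming datum $\phi^{in} \in L^2(M_{\lambda})$. Let $\widetilde u_{\mathcal V} \in \widehat{\mathcal B}^{\ast}(\mathcal V)$ be the unique vertex solution furnished by the cited Theorem~7.15 of \cite{AnIsoMo17}, with outgoing datum $\widetilde\phi^{out} = \widetilde S(\lambda)\phi^{in}$ so that (\ref{Smatrixvertex}) holds, and let $\widehat u \in \widehat{\mathcal B}^{\ast}(\mathcal E)$ be the unique edge solution of $(\widehat H_{\mathcal E}-\lambda)\widehat u = 0$ with the same incoming datum $\phi^{in}$ given by Theorem~\ref{HelmhlotzEqWholespace}, with outgoing datum $\phi^{out} = S(\lambda)\phi^{in}$ satisfying (\ref{Smatrixedge}). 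It suffices to show $\widetilde\phi^{out} = \phi^{out}$, since $\phi^{in}$ is arbitrary.

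The central step is to lift $\widetilde u_{\mathcal V}$ to an edge function and identify it with $\widehat u$. I would set $\widetilde u_{\mathcal E} := \Phi^{(0)}(\lambda)^{\ast}\widetilde u_{\mathcal V}$; reading off the edgewise action of the underlying extension operator $\widehat T^{(0)}_{\mathcal V}(\overline\lambda)^{\ast}$ recorded in the excerpt, $\widetilde u_{\mathcal E}$ is precisely the edge function that restricts to $\widetilde u_{\mathcal V}$ at the vertices and solves the homogeneous edge equation on each edge, hence is continuous across vertices, i.e.\ satisfies (K-1). Because $\widetilde u_{\mathcal V}$ solves the reduced vertex equation, Lemma~\ref{Lemma3.2} with $\widehat f = 0$ shows that the derivative-matching condition (K-2) holds as well, so $\widetilde u_{\mathcal E}$ satisfies the full Kirchhoff condition and $(\widehat H_{\mathcal E}-\lambda)\widetilde u_{\mathcal E} = 0$. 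Lemma~\ref{S2B0astEandB0astVLemma} then places $\widetilde u_{\mathcal E}$ in $\widehat{\mathcal B}^{\ast}(\mathcal E)$ and, what is decisive, carries a vertex remainder in $\mathcal B^{\ast}_0$ to an edge remainder in $\widehat{\mathcal B}^{\ast}_0(\mathcal E)$. Applying $\Phi^{(0)}(\lambda)^{\ast}$ to the right-hand side of (\ref{Smatrixvertex}) therefore reproduces (\ref{Smatrixedge}) term by term, with the same $\phi^{in}$ in the incoming singular part and the same $\widetilde\phi^{out}$ in the outgoing part.

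Having exhibited $\widetilde u_{\mathcal E}$ as a $\widehat{\mathcal B}^{\ast}(\mathcal E)$-solution with incoming datum $\phi^{in}$ and the expansion (\ref{Smatrixedge}), the uniqueness assertion of Theorem~\ref{HelmhlotzEqWholespace} yields $\widetilde u_{\mathcal E} = \widehat u$, whence $\widetilde\phi^{out} = \phi^{out}$; as $\phi^{in}$ was arbitrary this gives $\widetilde S(\lambda) = S(\lambda)$, proving Theorem~\ref{VertexSmatrix=EdgeSmatrix}. I expect the main obstacle to be precisely the compatibility asserted in the middle paragraph: one must check that the extension operator $\Phi^{(0)}(\lambda)^{\ast}$ sends each leading singular term of the vertex expansion to the corresponding $\Phi^{(0)}(\lambda)^{\ast}$-prefactored term of the edge expansion without rescaling or mixing the coefficients attached to the two sheets $M_{\lambda,1}, M_{\lambda,2}$, while simultaneously sending the $\mathcal B^{\ast}_0$ error to a $\widehat{\mathcal B}^{\ast}_0(\mathcal E)$ error. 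The first is built into the parallel structure of (\ref{Smatrixvertex}) and (\ref{Smatrixedge}) together with the definition $\Phi^{(0)}(\lambda) = \mathcal U_{\mathcal V}I_2\widehat T^{(0)}_{\mathcal V}(\lambda)$, while the second is exactly the norm and $\mathcal B^{\ast}_0$ equivalence of Lemma~\ref{S2B0astEandB0astVLemma}; once both are secured the uniqueness step closes the argument.
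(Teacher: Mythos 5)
Your proposal is correct and follows essentially the same route as the paper: the paper also takes the vertex solution $\widetilde u_{\mathcal V}$ from Theorem 7.15 of \cite{AnIsoMo17}, sets $\widetilde u_{\mathcal E} := \Phi^{(0)}(\lambda)^{\ast}\widetilde u_{\mathcal V}$, invokes Lemma \ref{S2B0astEandB0astVLemma} to see that $\widetilde u_{\mathcal E}$ has the properties required in Theorem \ref{HelmhlotzEqWholespace}, and concludes $\widetilde\phi^{out} = \phi^{out}$ by the uniqueness assertion there. Your write-up merely fills in the details (the Kirchhoff conditions via Lemma \ref{Lemma3.2}, the term-by-term matching of (\ref{Smatrixvertex}) with (\ref{Smatrixedge})) that the paper leaves implicit in its one-line justification.
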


In \cite{AnIsoMo17(1)}, we have proven that for the vertex Laplacian the scattering matrix and the interior D-N map determine each other.   By virtue of Theorems \ref{LemmaDNedge=DNvertex} and \ref{VertexSmatrix=EdgeSmatrix}, we have the following theorem. 
\begin{theorem}
\label{Aext-A=IBSigma-1ILemma}
For the edge Laplacian on the hexagonal lattice, the S-matrix and the D-N map in the interior domain determine each other. 
\end{theorem}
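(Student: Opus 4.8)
The plan is to prove Theorem \ref{Aext-A=IBSigma-1ILemma} by reduction to the already-established inverse result for the discrete vertex model, chaining together three equivalences under the energy correspondence $\lambda \leftrightarrow -\cos\sqrt\lambda$ between the edge and vertex Schr{\"o}dinger operators. Throughout I would work at a fixed $\lambda \in (0,\infty)\setminus\mathcal T$ with $\lambda \notin \sigma(-\widehat\Delta_{int,\mathcal E}+q_{\mathcal E})$, so that both the edge S-matrix $S(\lambda)$ and the interior D-N map $\Lambda_{int,\mathcal E}(\lambda)$ are well-defined, and I would exploit that the scalar factors $\cos\sqrt\lambda$ and $\sin\sqrt\lambda/\sqrt\lambda$ occurring in Lemma \ref{LemmaDNedge=DNvertex} are explicitly known and that $\sin\sqrt\lambda/\sqrt\lambda \neq 0$ off $\mathcal T$.

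First I would invoke Theorem \ref{VertexSmatrix=EdgeSmatrix}, which gives the exact identity $S(\lambda) = \widetilde S(\lambda)$ between the edge S-matrix at energy $\lambda$ and the S-matrix of the discrete Laplacian $-\widehat\Delta_{\mathcal V}$ at energy $-\cos\sqrt\lambda$; hence the edge S-matrix and the vertex S-matrix determine each other. Second, I would appeal to the inverse theorem of \cite{AnIsoMo17(1)}, by which, for the vertex Laplacian, the scattering matrix $\widetilde S(\lambda)$ and the interior D-N map $\widehat\Lambda_{int,\mathcal V}(\lambda)$ determine each other. Third, to return from the vertex interior D-N map to the edge interior D-N map I would use the second formula of Lemma \ref{LemmaDNedge=DNvertex},
\[
\widehat\Lambda_{int,\mathcal V}(\lambda) = -\cos\sqrt\lambda - \frac{\sin\sqrt\lambda}{\sqrt\lambda}\,\Lambda_{int,\mathcal E}(\lambda),
\]
which, since $\sin\sqrt\lambda \neq 0$ for $\lambda \notin \mathcal T$ (recall $\mathcal T \supset \mathcal T^{(0)}$ and $\mathcal T^{(0)}$ contains the points where $\cos\sqrt\lambda = \pm 1$), is an invertible affine relation with known coefficients; thus the two interior D-N maps determine each other. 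Composing these three equivalences yields that $S(\lambda)$ and $\Lambda_{int,\mathcal E}(\lambda)$ determine each other, which is the assertion of the theorem.

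The main obstacle is not analytic but lies in the consistent bookkeeping of the reduction: one must check that the energy matching $\lambda \leftrightarrow -\cos\sqrt\lambda$ is respected at every link, that the support hypothesis (\ref{SupportCond}) indeed confines $q_{\mathcal E}$ strictly inside $\mathcal E_{int}$ so that the exterior problem is potential-free and the discrete inverse theorem of \cite{AnIsoMo17(1)} applies without modification, and that the interior D-N map is invoked only at $\lambda \notin \sigma(-\widehat\Delta_{int,\mathcal E}+q_{\mathcal E})$ as required by (\ref{lambdanotinsigamDeltaintE}). Once these compatibility conditions are in place, each of the three links is either an exact identity or an invertible relation with explicitly known coefficients, so no further estimates are needed.
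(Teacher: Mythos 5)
Your proposal is correct and follows essentially the same route as the paper: the paper's proof is precisely the chain you describe, combining Theorem \ref{VertexSmatrix=EdgeSmatrix} (edge S-matrix equals vertex S-matrix at energy $-\cos\sqrt{\lambda}$), the inverse result of \cite{AnIsoMo17(1)} (vertex S-matrix and vertex interior D-N map determine each other), and Lemma \ref{LemmaDNedge=DNvertex} (the invertible affine relation between the vertex and edge interior D-N maps). Your additional bookkeeping remarks (invertibility via $\sin\sqrt{\lambda}\neq 0$ off $\mathcal T$, the support condition, and the spectral restriction (\ref{lambdanotinsigamDeltaintE})) are consistent with, and slightly more explicit than, what the paper records.
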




\section{Inverse scattering}
\label{SectionInversescattering}
\subsection{Hexagonal parallelogram}
We are now in a position to consider the inverse scattering problem. 
Note here that although
 the choice of  fundamental domain of the lattice $\mathcal L$ is not unique, different choices give rise to unitarily equivalent Hamiltonians. 
 In this section, we take ${\bf v}_1, {\bf v}_2$ and  $p^{(1)}, p^{(2)}$ as in (\ref{v1andv2inthepreviouspaper}) and  (\ref{p1andp2inthepreviouspaper}) to make use of our previous results in \cite{AnIsoMo17},  \cite{AnIsoMo17(1)}. 
We identify ${\bf R}^2$ with ${\bf C}$, and put
$$
\omega = e^{\pi i/3}.
$$
 For $n = n_1 + i n_2 \in {\bf Z}[i]= {\bf Z} + i{\bf Z}$, 
let
$$
\mathcal L_0 = \left\{{\bf v}(n)\, ; \, n \in {\bf Z}[i]\right\}, \quad
{\bf v}(n) = n_1{\bf v}_1 + n_2{\bf v}_2,
$$
\begin{equation}
{\bf v}_1 = 1 + \omega, \quad {\bf v}_2 = \sqrt3 i,
\label{v1andv2inthepreviouspaper}
\end{equation}
\begin{equation}
p_1 = \omega^{-1} = \omega^5, \quad p_2 = 1,
\label{p1andp2inthepreviouspaper}
\end{equation}
and define the vertex set $\mathcal V_0$ by
$$
\mathcal V_0 = \mathcal V_{01} \cup \mathcal V_{02}, \quad \mathcal V_{0i} = p_i + \mathcal L_0.
$$

By virtue of Theorem \ref{Aext-A=IBSigma-1ILemma}, given an S-matrix and a bounded domain $\mathcal E_{int}$, we can compute the D-N map 
associated with $\mathcal E_{int}$.  The problem is now reduced to the reconstruction of the potentials on the edges from the knowledge of the D-N map for the vertex Schr{\"o}dinger operator defined on $\mathcal V_{int}$,  the set of the vertices in $\mathcal E_{int}$.

As $\mathcal V_{int}$, we use the following domain which is different from the one in Figure \ref{BoundaryHexagonal}.
Let $\mathcal D_0$ be the Wigner-Seitz cell of $\mathcal V_0$.
 It is a hexagon
 having 6 vertices 
 $\omega^k,\ 0 \leq k \leq 5$,
 with center at the origin. Take $D_N = \{n \in {\bf Z}[i]\, ; \, 0 \leq n_1 \leq N, \  0 \leq n_2 \leq N\}$, where $N$ is chosen large enough,  and put
 $$
 \mathcal D_N = {\mathop\cup_{n \in D_N}}\Big( \mathcal D_0 + {\bf v}(n)\Big).
 $$
 This is a parallelogram in the hexagonal lattice (see Figure \ref{S6HexaParallel}).
\begin{figure}[h]
\includegraphics[width=7cm, bb=0 0 664 676]{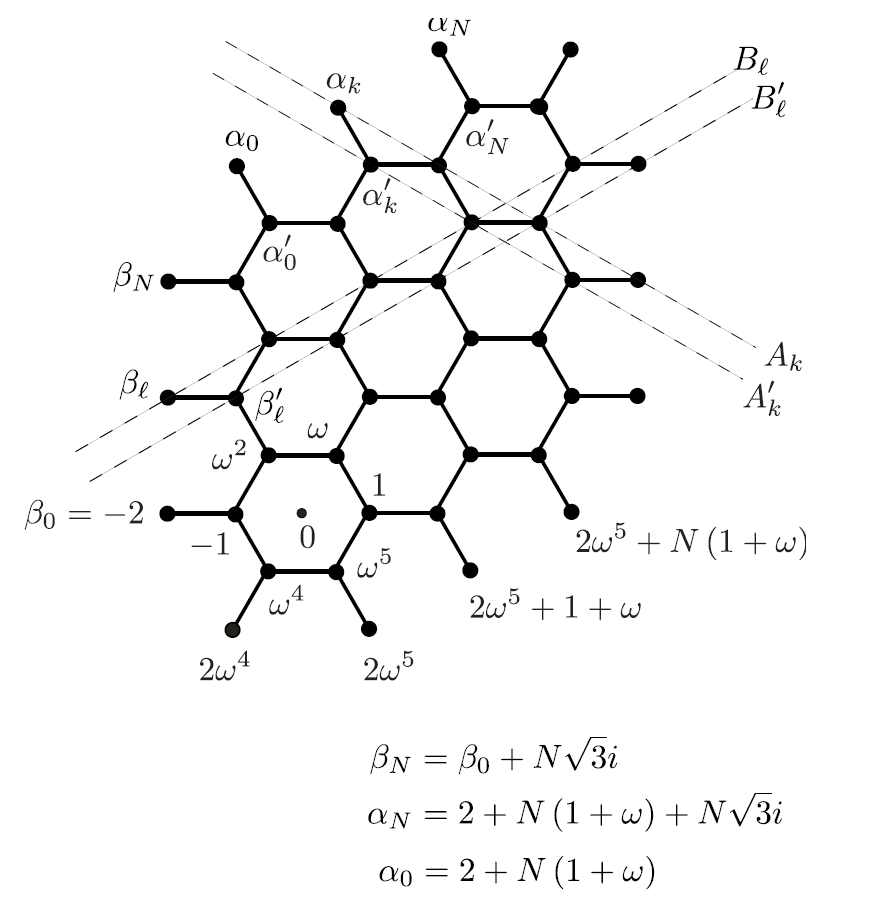}
\caption{Hexagonal parallelogram ($N = 2$)}
\label{S6HexaParallel}
\end{figure}
The interior angle of each vertex on the periphery of $\mathcal D_N$ is either $2\pi/3$ or $4\pi/3$. Let $\mathcal A$ be the set of the former, and for  each $z \in \mathcal A$, we assign a new edge $ e_{z,\zeta}$, and a new vertex $\zeta = t(e_{z,\zeta})$ on its terminal point, hence $\zeta$ is in the outside of $\mathcal D_N$. 
Let 
$$
\Omega = \{v \in \mathcal V_0\, ; \, v \in \mathcal D_N\}
$$
 be the set of vertices in the inside of the resulting graph.
The boundary $\partial\Omega = \{t(e_{z,\zeta})\, ; \, z \in \mathcal A\}$ is  divided into 4 parts,   called  top, 
 bottom, right,  left sides, which are denoted by 
$(\partial\Omega)_T, (\partial\Omega)_B, (\partial\Omega)_R, (\partial\Omega)_L$, i.e. 
\begin{gather*}
\begin{split}
 (\partial\Omega)_T =& \{ \alpha_0 , \cdots , \alpha_N \}, \\
 (\partial\Omega)_B = & \{2\omega^5 + k(1 + \omega)\,  ; \, 0 \leq k \leq N\}, \\
 (\partial\Omega)_R = & \{ 2+ N(1+\omega ) + k\sqrt{3} i \, ; \, 1 \leq k \leq N \} \cup \{ 2+N(1+\omega ) +N\sqrt{3} i + 2 \omega ^2  \} , \\
 (\partial\Omega)_L =& \{2\omega^4\}\cup\{\beta_0,\cdots,\beta_N\} , 
\end{split}
\end{gather*}
where $ \alpha_k = \beta _N + 2\omega + k (1+\omega ) $ and $ \beta _k = -2 + k\sqrt{3} i $ for $ 0\leq k \leq N$.

\subsection{Special solutions to the vertex Schr{\"o}dinger equation}
Taking $N$ large enough so that $\mathcal D_N$ contains all the supports of the potentials $q_{\bf e}(z)$ in its interior, 
we consider the following Dirichlet problem for the vertex Schr{\"o}dinger equation
\begin{equation}
\left\{
\begin{split}
& ( - \widehat{\Delta}_{\mathcal V,\lambda} + \widehat Q_{\mathcal V,\lambda})\widehat u = 0, \quad 
{\rm in} \quad \stackrel{\circ}{{\Omega}}, \\
& \widehat u = \widehat f, \quad {\rm on} \quad \partial \Omega.
\end{split}
\right.
\end{equation}
Let  ${\bf\Lambda_{\widehat{\bf Q}}}$ be the associated D-N map. 
The key to the inverse procedure is the following partial data problem.


\begin{lemma}\label{S6partialDNdata}
(1) Given a partial Dirichlet data $\widehat f$ on $\partial\Omega\setminus(\partial \Omega)_R$, and a partial Neumann data $\widehat g$ on $(\partial\Omega)_L$, there is a unique solution $\widehat u$ on $\stackrel{\circ}\Omega \cup (\partial\Omega)_R$ to the equation
\begin{equation}
\left\{
\begin{split}
& (- \widehat\Delta_{\mathcal V,\lambda} + \widehat Q_{\mathcal V,\lambda})\widehat u = 0, \quad {\it in} \quad \stackrel{\circ}\Omega,\\
& \widehat u =\widehat f, \quad {\it on} \quad \partial\Omega\setminus(\partial\Omega)_R, \\
& \partial_{\nu}^{\mathcal D_N}\widehat u = \widehat g, \quad {\it on} \quad (\partial\Omega)_L.
\end{split}
\right.
\label{Lemma61Equation}
\end{equation}
\noindent
(2) Given the D-N map ${\bf\Lambda_{\widehat{\bf Q}}}$, a partial Dirichlet data $\widehat f_2$ on $\partial\Omega\setminus(\partial\Omega)_R$ and a partial Neumann data $\widehat g$ on $(\partial\Omega)_{L}$, there exists a unique $\widehat f$ on $\partial\Omega$ such that $\widehat f = \widehat f_2$ on $\partial\Omega\setminus(\partial\Omega)_R$ and ${\bf\Lambda_{\widehat{\bf Q}}}\widehat f = \widehat g$ on $(\partial\Omega)_{L}$.
\end{lemma}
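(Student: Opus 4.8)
The plan is to prove part (1) by a discrete Cauchy (marching) argument and then to obtain part (2) as a direct consequence of (1). The first point to exploit is that, by the construction of $\mathcal D_N$, every vertex of $\partial\Omega$ is a pendant vertex: each $\zeta\in\partial\Omega$ is joined by a single new edge $e_{z,\zeta}$ to exactly one vertex $z\in\stackrel{\circ}\Omega$. Reading off the normal derivative (\ref{DefinenormalderivativeVDlambda}) at such a $\zeta$, the Neumann datum is simply a nonzero multiple of $\widehat u(z)$, the factor being $\psi_{z\zeta}(1,\lambda)^{-1}$, which is nonvanishing because $\lambda\notin\mathcal T$. Hence the prescribed Neumann data $\widehat g$ on $(\partial\Omega)_L$ determines $\widehat u(z)$ at every interior neighbour $z$ of the left boundary, and together with the Dirichlet data $\widehat f$ on $\partial\Omega\setminus(\partial\Omega)_R$ this furnishes the Cauchy data along the left side of $\mathcal D_N$ with which to begin the recursion.

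Next I would introduce a total order on the vertices of $\stackrel{\circ}\Omega\cup(\partial\Omega)_R$ adapted to the parallelogram: sweep the columns in order of increasing ${\bf v}_1$-coordinate $n_1$, from left to right, and within each column sweep in the ${\bf v}_2$-direction. The structural observation is that the vertex Schr{\"o}dinger equation (\ref{S3DefineVertexLaplacian}) at a vertex couples its value to those of its three nearest neighbours, and that with this ordering each successive equation can be arranged to involve exactly one not-yet-determined value, which it then fixes (the coupling coefficients being the nonzero factors $\psi_{wv}(1,\lambda)^{-1}$). The Dirichlet data $\widehat f$ on $(\partial\Omega)_T\cup(\partial\Omega)_B$ supplies the top/bottom values that would otherwise lie outside $\mathcal D_N$ and close the recursion at the ends of each column. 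Marching in this order through all columns then determines $\widehat u$ uniquely on $\stackrel{\circ}\Omega$ and finally on $(\partial\Omega)_R$; existence and uniqueness come together, since each step is a forced scalar solve.

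For part (2), the only genuinely free data is $\widehat f$ on $(\partial\Omega)_R$, because $\widehat f=\widehat f_2$ is prescribed elsewhere. I would apply part (1) with Dirichlet data $\widehat f_2$ on $\partial\Omega\setminus(\partial\Omega)_R$ and Neumann data $\widehat g$ on $(\partial\Omega)_L$ to obtain the unique $\widehat u$ on $\stackrel{\circ}\Omega\cup(\partial\Omega)_R$, and then set $\widehat f:=\widehat u$ on $(\partial\Omega)_R$ and $\widehat f:=\widehat f_2$ on the rest of $\partial\Omega$. By construction $\widehat u$ solves the full interior Dirichlet problem (\ref{IntBVPVertex}) with boundary data $\widehat f$, so ${\bf\Lambda_{\widehat{\bf Q}}}\widehat f$ is exactly the Neumann trace of $\widehat u$, whose restriction to $(\partial\Omega)_L$ equals $\widehat g$. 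Uniqueness of $\widehat f$ follows because any two admissible choices differ by a solution of the homogeneous version of (\ref{Lemma61Equation}), which vanishes by the uniqueness already established in part (1).

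The step I expect to be the main obstacle is the geometric bookkeeping in part (1): one must verify that the hexagonal parallelogram genuinely admits a column ordering of \emph{both} sublattices in which every equation, including those at vertices adjacent to $(\partial\Omega)_T$, $(\partial\Omega)_B$ and near the corners, introduces exactly one new unknown to the right, so that the recursion neither stalls nor becomes overdetermined. Handling the two sublattices simultaneously and the irregular pendant structure along the periphery of $\mathcal D_N$ is where the careful case analysis lies; by contrast, the analytic content — the nonvanishing of $\psi_{wv}(1,\lambda)$ and the unique solvability of each scalar step — is comparatively routine.
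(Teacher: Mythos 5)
Your overall strategy is the same as that of the proof the paper relies on (the paper gives no argument of its own, citing Lemma 6.1 of \cite{AnIsoMo17(1)}): read the Neumann data through the pendant edges, sweep the parallelogram from left to right so that each vertex equation fixes one new value, and deduce part (2) from part (1). Your reduction of (2) to (1) is complete and correct, and the sweep does prove \emph{uniqueness} (propagation of zeros). The problem is the step you set aside as ``geometric bookkeeping'': it is not routine verification, it is the entire content of the lemma, and your claim that ``existence and uniqueness come together, since each step is a forced scalar solve'' is unsupported as written. The sweep yields existence only if \emph{every} interior vertex equation is consumed exactly once as the defining equation of a new unknown; any equation left unconsumed at the end is a linear compatibility condition on $(\widehat f,\widehat g)$, and existence then fails for generic data. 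Whether the count closes is exact arithmetic: the number of equations, $|\stackrel{\circ}{\Omega}|+|(\partial\Omega)_L|$, must equal the number of unknowns, $|\stackrel{\circ}{\Omega}|+|(\partial\Omega)_R|$, i.e.\ one needs $|(\partial\Omega)_L|=|(\partial\Omega)_R|$.

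With the boundary sets as literally defined in the paper this balance fails: $|(\partial\Omega)_L|=N+2$ while $|(\partial\Omega)_R|=N+1$, because the pendant vertex $2+N(1+\omega)$ at the bottom-right corner belongs to none of the four sides, hence lies in $\partial\Omega\setminus(\partial\Omega)_R$ and carries Dirichlet data. Already for $N=0$ (one hexagon, interior vertices $\omega^k$, pendants $2\omega^k$) your sweep runs: the Neumann data at $2\omega^3,2\omega^4$ gives $\widehat u(\omega^3),\widehat u(\omega^4)$; the equations at $\omega^3,\omega^4$ give $\widehat u(\omega^2),\widehat u(\omega^5)$; those at $\omega^2,\omega^5$ give $\widehat u(\omega),\widehat u(1)$; that at $\omega$ gives $\widehat u(2\omega)$ --- and the equation at the vertex $1$ is left over, forcing one linear relation on the data in which $\widehat f(2)$ appears only once. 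So uniqueness holds but existence fails for generic data, for both (1) and (2). The lemma is true only under the convention that the corner pendant $2+N(1+\omega)$ is counted in $(\partial\Omega)_R$ (its value is an unknown, fixed last by the sweep), which restores $|(\partial\Omega)_L|=|(\partial\Omega)_R|=N+2$. A complete proof along your lines must adopt this convention explicitly and then verify, by the column induction you describe (including at the top and bottom rows where the zigzag chains are closed by Dirichlet data), that every interior equation determines exactly one new value; that verification is precisely what the cited Lemma 6.1 of \cite{AnIsoMo17(1)} carries out.
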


For the proof, see \cite{AnIsoMo17(1)}, Lemma 6.1.

Now, for $0 \leq k \leq N$, let us consider a {\it diagonal} line $A_k$ (see  Figure \ref{LineAk})  :
\begin{equation}
A_k = \{x_1 + ix_2\, ; \, x_1 + \sqrt3x_2 = a_k\},
\label{S6Dinagonalline}
\end{equation}
where $a_k$ is chosen so that $A_k$ passes through
\begin{equation}
\alpha_k =  \alpha_0 + k (1+ \omega )  \in 
(\partial\Omega)_T.
\end{equation} 
The vertices on $A_k\cap\Omega$ are written as
\begin{equation}
\alpha_{k,\ell} = \alpha_k + \ell (1 + \omega^5 ), \quad \ell = 0, 1, 2, \cdots.
\end{equation}

\begin{figure}[h]
\centering
\includegraphics[width=6cm, bb=0 0 470 482]{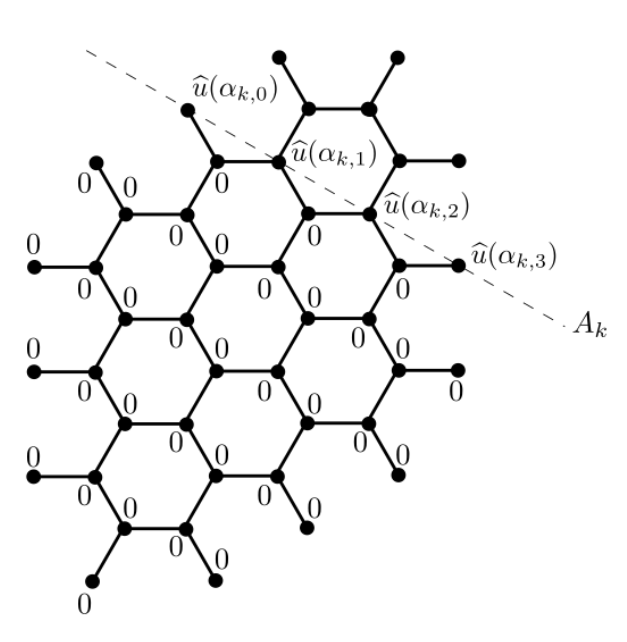}
\caption{Line $A_k$}
\label{LineAk}
\end{figure}

\begin{lemma}
\label{Lemmaspecialboundarydata}
Let $A_k \cap\partial\Omega = \{\alpha_{k,0},\alpha_{k,m}\}$. Then, there exists a unique solution $\widehat u$ to the equation
\begin{equation}
\big(- \widehat\Delta_{\mathcal V,\lambda} + \widehat Q_{\mathcal V,\lambda}\big)\widehat u = 0 \quad {\it in} \quad \stackrel{\circ}\Omega,
\label{S7BVP}
\end{equation}
with partial Dirichlet data $\widehat f$ such that
\begin{equation}
\left\{
\begin{split}
& \widehat f(\alpha_{k,0}) = 1, \\
& \widehat f(z) = 0 \quad {\it for} \quad 
z \in \partial\Omega\setminus\big((\partial\Omega)_R
\cup\alpha_{k,0}\cup\alpha_{k,m}\big)
\end{split}
\right.
\end{equation}
and partial Neumann data $\widehat g = 0$ on $(\partial\Omega)_L$. It satisfies
\begin{equation}
\widehat u(x_1 + ix_2) = 0 \quad {\it if} \quad 
x_1 + \sqrt3 x_2 < a_k.
\label{S7Lemmu=0belowAk}
\end{equation}
\end{lemma}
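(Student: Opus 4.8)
The plan is to read off existence and uniqueness from the partial data result and then to prove the support statement (\ref{S7Lemmu=0belowAk}) by a sweep along the level sets of $d(x_1+ix_2):=x_1+\sqrt3\,x_2$. The prescribed data --- Dirichlet values on $\partial\Omega\setminus(\partial\Omega)_R$ equal to $1$ at $\alpha_{k,0}$ and $0$ elsewhere, together with the vanishing Neumann datum on $(\partial\Omega)_L$ --- is exactly of the form treated in Lemma \ref{S6partialDNdata}(1); hence there is a unique $\widehat u$ on $\stackrel{\circ}{\Omega}\cup(\partial\Omega)_R$ solving (\ref{S7BVP}) with these data, and the two exceptional boundary points $\alpha_{k,0}\in(\partial\Omega)_T$ and $\alpha_{k,m}\in(\partial\Omega)_R$ both lie on $A_k$, i.e. at the level $d=a_k$.

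The substance is (\ref{S7Lemmu=0belowAk}). The key fact is that each edge direction changes $d$ by a fixed nonzero amount: from a vertex of $\mathcal V_{01}$ one neighbour has $d$ larger by $2$ and two neighbours have $d$ smaller by $1$, while from a vertex of $\mathcal V_{02}$ two neighbours have $d$ larger by $1$ and one neighbour has $d$ smaller by $2$. Thus the lattice is stratified by the level sets $\{d=\text{const}\}$, no edge joins two vertices of the same level, and $A_k$ is a characteristic of the equation $(-\widehat\Delta_{\mathcal V,\lambda}+\widehat Q_{\mathcal V,\lambda})\widehat u=0$. I would prove by induction on the level $d_0<a_k$ that $\widehat u$ vanishes on $\{d=d_0\}\cap\Omega$, using that, by (\ref{S3DefineVertexLaplacian})--(\ref{S3DefinewidehatQVlambda}), the equation at an interior vertex is a three-term relation with every coefficient $1/\psi_{wv}(1,\lambda)$ nonzero, so that a vertex together with any two of its three neighbours vanishing forces the third to vanish.

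For the induction step each level is reduced to a single chain. Once all lower levels vanish, the $+2$ / $-1,-1$ pattern at $\mathcal V_{01}$-vertices and the $+1,+1$ / $-2$ pattern at $\mathcal V_{02}$-vertices mean that the equation at a $\mathcal V_{01}$-vertex below forces its unique higher neighbour on level $d_0$ to vanish, while the equation at a $\mathcal V_{02}$-vertex below only links its two higher neighbours on level $d_0$, producing a chain along the level. An anchor for this chain is supplied by the boundary: through (\ref{DefinenormalderivativeVDlambda}), each boundary vertex having a single interior neighbour, the vanishing Neumann datum on $(\partial\Omega)_L$ forces $\widehat u$ to vanish at the interior neighbour of every left boundary vertex, which anchors the levels meeting $(\partial\Omega)_L$; for the remaining levels the equation at the extreme interior vertex --- whose outward neighbour is either absent or a vertex of $\partial\Omega$ carrying the Dirichlet value $0$ --- together with the induction hypothesis forces that vertex to vanish and anchors the chain. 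Since the only nonzero boundary value sits at $\alpha_{k,0}$ on the level $d=a_k$, and $\alpha_{k,m}\in(\partial\Omega)_R$ lies on the same level, no source enters any level $d_0<a_k$; the induction thus yields $\widehat u=0$ on $\{d<a_k\}$, which is (\ref{S7Lemmu=0belowAk}).

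The step I expect to be the main obstacle is the combinatorial bookkeeping of this sweep: one must verify that every level below $A_k$ possesses a boundary anchor and that the three-term relations across consecutive levels chain every vertex of the level back to that anchor. This is precisely the assertion that $A_k$ bounds the domain of dependence of the single source at $\alpha_{k,0}$, and it is the analogue for the hexagonal parallelogram $\mathcal D_N$ of the propagation argument carried out in \cite{AnIsoMo17(1)}, Lemma 6.1.
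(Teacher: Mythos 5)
Your proposal is correct and takes essentially the same approach as the paper: existence and uniqueness are read off from Lemma \ref{S6partialDNdata} (1), and the vanishing below $A_k$ is obtained by the level-by-level propagation along the diagonals using the three-term vertex relations anchored by the zero Cauchy data on $(\partial\Omega)_L$ and the zero Dirichlet data elsewhere. The paper does not write this sweep out, deferring it to \cite{AnIsoMo17(1)} (``we make use of the same idea''), and your argument is precisely a reconstruction of that cited proof.
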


An important feature is that $\widehat u$ vanishes below the line $A_k$. By using this property, we reconstructed the vertex potentials and defectes of the hexagonal lattice in \cite{AnIsoMo17(1)}. 
We make use of the same idea.

Let $\widehat u$ be a solution of the equation
\begin{equation}
 (- \widehat\Delta_{\mathcal V,\lambda} + \widehat Q_{\mathcal V,\lambda})\widehat u = 0, \quad {\rm in} \quad \stackrel{\circ}\Omega,
\label{S7Equationtobeevaluated}
\end{equation}
which vanishes in the region $x_1 + \sqrt{3}x_2 < a_k$. Let $a, b, b', c \in \mathcal V$ and ${\bf e}, {\bf e}' \in \mathcal E$ be as in  Figure
\ref{S7hex_edge}. 

\begin{figure}[h]
\includegraphics[width=5cm, bb=0 0 579 425]{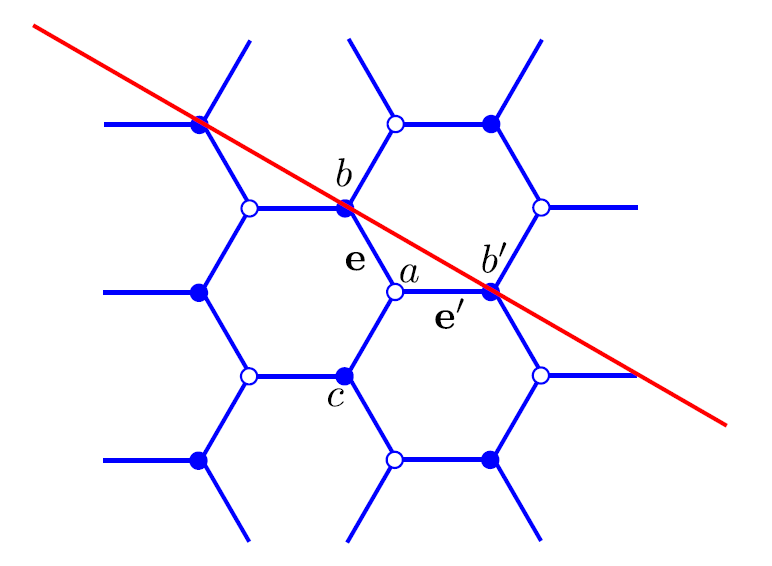}
\caption{$\widehat u(b)$ and $\widehat u(b')$}
\label{S7hex_edge}
\end{figure}

Then, evaluating the equation (\ref{S7Equationtobeevaluated}) at $v = a$ and using (\ref{S3DefineVertexLaplacian}), (\ref{S3DefinewidehatQVlambda}),
we obtain
\begin{equation}
\frac{1}{\psi_{ba}(1,\lambda)}\widehat u(b) + \frac{1}{\psi_{b'a}(1,\lambda)}\widehat u(b') = 0.
\label{S7equationbypsiwv}
\end{equation}
Here, for any edge ${\bf e} \in \mathcal E$, we associate an edge $[{\bf e}]$ without orientation and a function $\phi_{[{\bf e}]}(z,\lambda)$ satisfying
\begin{equation}
\left\{
\begin{split}
& \left(- \frac{d^2}{dz^2} + q_{{\bf e}}(z) - \lambda\right)\phi_{[{\bf e}]}(z,\lambda) = 0, \quad {\rm for} \quad  0 < z < 1,\\
& \phi_{[{\bf e}]}(0,\lambda) = 0, \quad \phi_{[{\bf e}]}'(0,\lambda) = 1.
\end{split}
\right.
\nonumber
\end{equation}
By the assumption (Q-3), $\phi_{[{\bf e}]}(z,\lambda)$ is determined by ${\bf e}$ and  independent of the orientation of ${\bf e}$. 
Then, the equation (\ref{S7equationbypsiwv}) is rewritten as
\begin{equation}
\widehat u(b) = - \frac{\phi_{[{\bf e}]}(1,\lambda)}{\phi_{[{\bf e}']}(1,\lambda)}\widehat u(b').
\label{S7ubandub'}
\end{equation}
Let ${\bf e}_{k,1}, {\bf e}'_{k,1}, {\bf e}_{k,2}, {\bf e}'_{k,2}, \cdots$ be the series of edges just below $A_k$ starting from the vertex $\alpha_k$, and put
\begin{equation}
f_{k,m}(\lambda) = - \frac{\phi_{[{\bf e}_{k,m}]}(1,\lambda)}{\phi_{[{\bf e}'_{k,m}]}(1,\lambda)}.
\label{s7equationforfkm}
\end{equation}
Then, we obtain the following lemma.

\begin{lemma}
The solution $\widehat u$ in  Lemma \ref{Lemmaspecialboundarydata} satisfies
$$
\widehat u(\alpha_{k,\ell}) = f_{k,1}(\lambda)\cdots f_{k,\ell}(\lambda).
$$
\end{lemma}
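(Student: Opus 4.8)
The plan is to prove the identity by induction on $\ell$, driving the whole argument with the single local relation (\ref{S7ubandub'}). The base case $\ell = 0$ will be immediate: since $\alpha_{k,0} = \alpha_k \in (\partial\Omega)_T$ is a boundary vertex on which the Dirichlet data of Lemma \ref{Lemmaspecialboundarydata} is prescribed, one has $\widehat u(\alpha_{k,0}) = \widehat f(\alpha_{k,0}) = 1$, which matches the empty product $f_{k,1}(\lambda)\cdots f_{k,0}(\lambda) = 1$.

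For the inductive step I would first isolate the geometric configuration, read off from Figure \ref{LineAk}: two consecutive on-line vertices $\alpha_{k,m-1}$ and $\alpha_{k,m}$ are joined through a single vertex $a_m$ lying just below $A_k$, with the edge ${\bf e}'_{k,m}$ joining $a_m$ to $\alpha_{k,m-1}$ and ${\bf e}_{k,m}$ joining $a_m$ to $\alpha_{k,m}$, while the third neighbour $c_m$ of $a_m$ lies strictly below $A_k$. Because both $a_m$ and $c_m$ satisfy $x_1 + \sqrt3\,x_2 < a_k$, the vanishing property (\ref{S7Lemmu=0belowAk}) gives $\widehat u(a_m) = \widehat u(c_m) = 0$. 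Evaluating the equation (\ref{S7Equationtobeevaluated}) at $v = a_m$ then annihilates both the $\widehat Q_{\mathcal V,\lambda}$-term (since $\widehat u(a_m)=0$) and the neighbour term carrying $\widehat u(c_m)$, leaving precisely the two-term identity (\ref{S7equationbypsiwv}). This is exactly the situation of Figure \ref{S7hex_edge} with $a = a_m$, $b = \alpha_{k,m}$, $b' = \alpha_{k,m-1}$, so that (\ref{S7ubandub'}) together with the definition (\ref{s7equationforfkm}) yields $\widehat u(\alpha_{k,m}) = f_{k,m}(\lambda)\,\widehat u(\alpha_{k,m-1})$.

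Feeding this one-step relation into the inductive hypothesis and telescoping back to $\alpha_{k,0}$ then gives
\begin{equation}
\widehat u(\alpha_{k,\ell}) = f_{k,\ell}(\lambda)\,\widehat u(\alpha_{k,\ell-1}) = f_{k,\ell}(\lambda)\cdots f_{k,1}(\lambda)\,\widehat u(\alpha_{k,0}) = f_{k,1}(\lambda)\cdots f_{k,\ell}(\lambda),
\nonumber
\end{equation}
which is the assertion.

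The hard part is not analytic but combinatorial: I must make sure that, as one marches along $A_k$ away from $\alpha_k$, every step genuinely reproduces the configuration of Figure \ref{S7hex_edge} — that the down-neighbour $a_m$ bridging $\alpha_{k,m-1}$ and $\alpha_{k,m}$ is unique, that its remaining edge points strictly below the line (so that $\widehat u$ vanishes there), and that the ordered edge pair $({\bf e}_{k,m},{\bf e}'_{k,m})$ in (\ref{s7equationforfkm}) is attached with the orientation matching the recursion direction $\alpha_{k,m-1}\to\alpha_{k,m}$. Each of these is a direct consequence of the periodicity of the hexagonal lattice together with the strict inequality in (\ref{S7Lemmu=0belowAk}); once they are checked, the analytic content is entirely supplied by the already-established relation (\ref{S7ubandub'}).
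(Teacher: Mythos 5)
Your proof is correct and follows essentially the same route as the paper: the paper derives the one-step relation (\ref{S7ubandub'}) by evaluating the vertex equation at the bridging vertex below $A_k$ (where $\widehat u$ and its third neighbour vanish by (\ref{S7Lemmu=0belowAk})), and the lemma is exactly the telescoped form of that relation starting from $\widehat u(\alpha_{k,0})=1$, which you have merely organized as an explicit induction with the combinatorial checks spelled out.
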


\subsection{Reconstruction procedure}
We now prove Theorem \ref{Maintheorem1} by showing the reconstruction algorithm of the potential $q_{\bf e}(z)$.

\medskip
\noindent
{\it 1st step}. We first take a sufficiently large hexagonal parallelogram $\Omega$ as in Figure \ref{S6HexaParallel} which contains all the supports of the potential $q_{\bf e}(z)$.

\medskip
\noindent
{\it 2nd step}. For an arbitrary $k$, draw a line $A_k$ as in  Figure \ref{LineAk} and take the boundary data $\widehat f$ having the properties in Lemma \ref{Lemmaspecialboundarydata}. 

\medskip
\noindent
{\it 3rd step}. Compute the values of the associated solution $\widehat u$ to the boundary value problem in Lemma \ref{Lemmaspecialboundarydata} at the points $\alpha_{k,\ell}$, $\ell = 0, 1, 2, \cdots$.

\medskip
\noindent
{\it 4th step}. Look at Figure \ref{S6HexaParallel}. Two edges ${\bf e}$ and ${\bf e'}$ between $A_k$ and $A_k'$ are said to be $A_k'$-adjacent if they have a vertex in common on $A_k'$ (see Figure \ref{S7hex_edge}).
 Take two $A_k'$-adjacent edges ${\bf e}$ and ${\bf e}'$ between $A_k$ and $A_k'$, and use the formula (\ref{s7equationforfkm}) to compute the ratio of $\phi_{[{\bf e}]}(1,\lambda)$ and $\phi_{[{\bf e}']}(1,\lambda)$.

\medskip
\noindent
{\it 5th step}. Rotate the whole system by the angle $\pi$ and take a hexagonal parallelogram congruent to the previous one. Then, the roles of $A_k$ and $A_k'$ are exchanged. One can then compute the  ratio of 
$\phi_{[{\bf e}]}(1,\lambda)$ and $\phi_{[{\bf e}']}(1,\lambda)$ for $A_k'$-adjacent pairs in the sense after the rotation, which are $A_k$-adjacent before the rotation.

\medskip
After the 4th and 5th steps, for all pairs ${\bf e}$ and ${\bf e}'$ which are either $A_k$-adjacent or $A_k'$-adjacent, one has computed the ratio of 
$\phi_{[{\bf e}]}(1,\lambda)$ and $\phi_{[{\bf e}']}(1,\lambda)$.

\medskip
\noindent
{\it 6th step}. Take a zigzag line on the hexagonal lattice (see Figure \ref{fig:line2}), and take any two edges ${\bf e}$ and ${\bf e}'$ on it. They are between $A_k$ and $A_k'$ for some $k$. Then, using the 4th and 5th steps, one can compute the ratio of $\phi_{[{\bf e}]}(1,\lambda)$ and $\phi_{[{\bf e}']}(1,\lambda)$ by computing the ratio for two successive edges between ${\bf e}$ and ${\bf e}'$.

\begin{figure}[h]
  \begin{center}
   \includegraphics[width=60mm, bb=0 0 633 440]{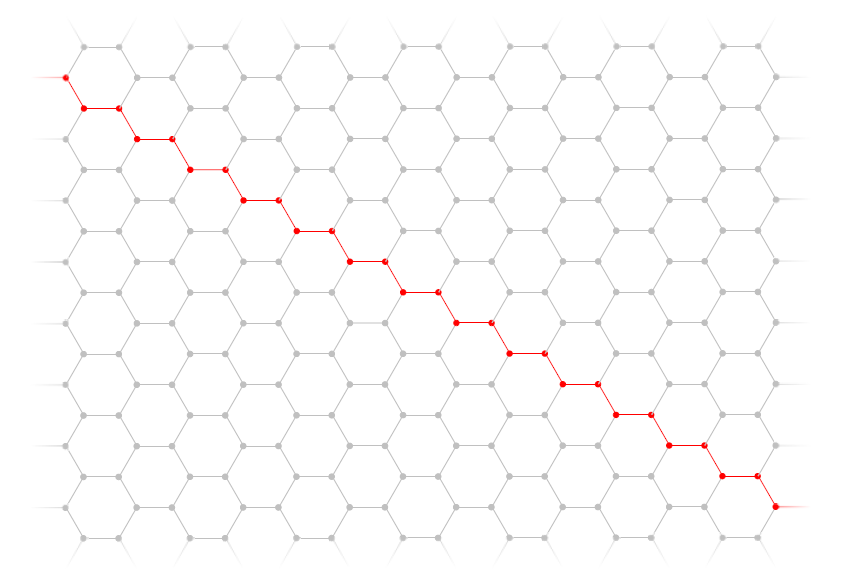}
  \end{center}
  \caption{Zigzag line in the hexagonal lattice}
  \label{fig:line2}
\end{figure}

\medskip
\noindent
{\it 7th step}.  For a sufficiently remote edge ${\bf e}'$, one knows $\phi_{[{\bf e}']}(1,\lambda)$ since $q_{{\bf e}'}(z) = 0$ on ${\bf e}'$. One can thus compute $\phi_{[{\bf e}]}(1,\lambda)$ for any edge ${\bf e}$.
Then, by the analytic continuation, one can compute the zeros of $\phi_{[{\bf e}]}(1,\lambda)$ for any edge ${\bf e}$. 

\medskip
\noindent
{\it 8th step}. Note that the zeros of  $\phi_{[{\bf e}]}(1,\lambda)$ are the Dirichlet eigenvalues for the operator $- (d/dz)^2 + q_{\bf e}(z)$ on $(0,1)$. Since the potential is symmetric, by Borg's theorem (see e.g. \cite{PoTru}, p. 117) these eigenvalues determine the potential $q_{\bf e}(z)$. 

\medskip
We have now completed the proof of Theorem \ref{Maintheorem1}.

\medskip
Note that for the 1st step, we need a-priori knowledge of the size of the support of the potential $q_{\mathcal E}(z)$. The knowledge of the D-N map is used in the 2nd step (in the proof of  Lemma \ref{S6partialDNdata}). In the 3rd step, one uses the equation (\ref{S7BVP}) and the fact that $\widehat u = 0$ below $A_k$.

\medskip
The proof of Theorem \ref{Maintheorem2} requires no essential change. Instead of $\frac{\sin\sqrt{\lambda}z}{\sqrt{\lambda}}$ and $\frac{\sin\sqrt{\lambda}(1-z)}{\sqrt{\lambda}}$, we have only to use the corresponding solutions to the Schr{\"o}dinger equation 
$\big(- (d/dz)^2 + q_0(z) - \lambda)\varphi = 0$.

\bigskip
\noindent
{\bf Acknowledgement}
K. A. is supported by Grant-in-Aid for Scientific Reserach (C) 17K05303, Japan Society for the Promotion of Science (JSPS).
H. I.  is supported by Grant-in-Aid for Scientific Research (C) 20K03667, JSPS. 
 E. K. is supported by the RFBR grant No. 19-01-00094. 
H. M. is supported by Grant-in-Aid for Young Scientists (B) 16K17630, JSPS. 
The authors express their gratitude to these supports.



\begin{thebibliography}{99}


\bibitem{AgHo76}
S. Agmon and L. H\"ormander, \textit{Asymptotic properties of solutions of differential
equations with simple characteristics}, J. d'Anal. Math., \textbf{30} (1976),
1-38.

\bibitem{Ando12}
K. Ando, \textit{Inverse scattering theory for discrete Schr{\"o}dinger operators on the hexagonal lattice}, Ann.  Henri Poincar{\'e} \textbf{14} (2013), 347-383.

\bibitem{AnIsoMo17}
K. Ando, H. Isozaki and H. Morioka, \textit{Spectral properties of Schr{\"o}dinger operators on perturbed lattices}, Ann. Henri Poincar{\'e} \textbf{17} (2016), 2103-2171.

\bibitem{AnIsoMo17(1)}
K. Ando, H. Isozaki and H. Morioka, \textit{Inverse scattering for Schr{\"o}dinger operators on perturbed lattices}, Ann. Henri Poincar{\'e} \textbf{19} (2018), 3397-3455.

\bibitem{AnIsoMo17(2)}
K. Ando, H. Isozaki and H. Morioka, \textit{Correction to : Inverse scattering for Schr{\"o}dinger operators on perturbed lattices}, Ann. Henri Poincar{\'e} \textbf{20} (2019), 337-338.

\bibitem{AnIsoMo(2020)}
K. Ando, H. Isozaki, E. Korotyaev and H. Morioka,\textit{Inverse scattering on the quantum graph  
 --- Edge model for graphene}, arXiv:1911.05233.


\bibitem{AvdBelMat11}
S. Avdonin, B. P. Belinskiy, and J. V. Matthews, \textit{Dynamical inverse problem on a metric tree}, Inverse Porblems \textbf{27} (2011),  075011.

\bibitem{Bel04}
M. I. Belishev, \textit{Boundary spectral inverse  problem on a class of graphs (trees) by the BC method}, Inverse Problems \textbf{20} (2004), 647-672.


\bibitem{Below85}
J. von Below, \textit{A characteristsic equation associated to an eigenvalue problem on $c^2$-networks}, Linear Algebra Appl. \textbf{71} (1985), 309-325.

\bibitem{BK13}
G. Berkolaiko and P. Kuchment, \textit{Introduction to Quantum Graphs}, Mathematical Surveys and Mnonographs \textbf{186}, AMS (2013).



\bibitem{ShiehBonda17}
N. Bondarenko and C. T. Shieh, \textit{Partial inverse problem for Sturm-Liouville operators on trees}, Proceedings of the Royal Society of Edingburgh \textbf{147A} (2017), 917-933.

\bibitem{Bonda20}
N. Bondarenko, \textit{Spectral data characterization for the Sturm-Liouville operator on the star-shaped graph}, arXiv:2009.02522v1


\bibitem{Borg}
G. Borg, \textit{Eine Umkehrung der Sturm-Liouvilleschen Eigenwertaufgabe. Bestimmung der Differentialgleichung durch die Eigenwerte}, Acta Math. \textbf{78} (1946), 1-96.

\bibitem{BroWei09}
B. M. Brown and R. Weikard, \textit{A Borg-Levinson theorem for trees}, Proc. Royal Soc. Lond. Ser. A Math. Phys. Eng. Sci. \textbf{461} 2062  (2005), 3231-3243.  

\bibitem{BPG08}
J. Br{\"u}ning, V. Geyley and K. Pankrashkin, \textit{Spectra of self-adjoint extensions and applications to solvable Schr{\"o}dinger operators}, Rev. Math. Phys. \textbf{20} (2008), 1-70.

\bibitem{C97}
C. Cattaneo, \textit{The spectrum of the continuous Laplacian on a graph}, Monatsh. Math. \textbf{124} (1997), 215-235.

\bibitem{ChExTu11}
T. Chen, P. Exner and O. Turek, \textit{Inverse scattering for quantum graph vertices}, Phys. Rev. A (2011), 86:062715.


\bibitem{Ch97}
F. Chung, \textit{Spectral Graph Theory}, AMS. Providence, Rhodse Island (1997).

\bibitem{Col98}
Y. Colin de Verdi{\`e}re, \textit{Spectre de graphes}, Cours sp{\'e}cialis{\'e}s \textbf{4}, S. M. F., Paris, (1998).

\bibitem{CurtMor00}
E. B. Curtis and J. A. Morrow, \textit{Inverse Problems for Electrical Networks}, On Applied Mathematics, World Scientific, (2000).

\bibitem{CDGT88}
D. Cvetkovic, M. Doob, I. Gutman and A. Torgasev, \textit{A recent result in the theory of graph spectra}, Annals of Discrete Mathematics \textbf{36}, North-Holland Publishing Co., Amsterdam (1988).

\bibitem{CDS95} D. Cvetkovic, M. Doob and H. Saks, \textit{Spectra of graphs, Theory and applications}, 3rd edition, Johann Ambrosius Barth, Heidelberg (1995).



\bibitem{EKMN17}
P. Exner, A. Kostenko, M. Malamud and H. Neidhardt, \textit{Spectral theory for infinite quantum graph}, Ann. Henri Poincar{\'e} \textbf{19} (2018), 3457-3510.

\bibitem{Es} 
M. S. Eskina, \textit{The direct and the inverse scattering problem for a partial difference equation}, Soviet Math. Doklady, \textbf{7} (1966), 193-197. 


\bibitem{GutSmil01}
B. Gutkin and U. Smilansky, \textit{Can one hear the shape of a graph?} J. Phys. A \textbf{34} (2001), 6061-6068.

\bibitem{IsKo12} H. Isozaki and E. Korotyaev, \textit{Inverse problems, trace formulae for discrete Schr\"{o}dinger operators},   Ann.  Henri Poincar{\'e}, \textbf{13} (2012), 751-788.

\bibitem{IsMo}
H. Isozaki and H. Morioka, \textit{Inverse scattering at a fixed energy for discrete Schr{\"o}dinger operators on the square lattice}, Ann. l'Inst. Fourier \textbf{65} (2015), 1153-1200.

\bibitem{KoLo07}
E. Korotyaev and I. Lobanov, \textit{Schr{\"o}dinger operators on zigzag nanotubes}, Ann. Henri Poincar{\'e} \textbf {8} (2007), 1151-1076.

\bibitem{KoSa14}
E. Korotyaev and N. Saburova, \textit{Schr{\"o}dinger operators on periodic discrete graphs}, J. Math. Anal. Appl. \textbf{420} (2014), 576-611. 

\bibitem{KoSa15a}
E. Korotyaev and N. Saburova, \textit{Spectral band localization for Schr{\"o}dinger operators on periodic  graphs}, Proc. Amer. Math. Soc. \textbf{143} (2015), 3951-3967. 

\bibitem{KoSa15}
E. Korotyaev and N. Saburova, \textit{Scattering on metric graphs}, arXiv:1507.06441v1 [math.SP] 23 Jul 2015.


\bibitem{KoSa15b}
E. Korotyaev and N. Saburova, \textit{Estimates of bands for Laplacians on periodic equilateral metric graphs},  Proc. Amer. Math. Soc. \textbf{114} (2016), 1605-1617.

\bibitem{KoSa15c}
E. Korotyaev and N. Saburova, \textit{Effective masses for Laplacians on periodic graphs}, J. Math. Anal. Appl. \textbf{436} (2016), 104-130.

\bibitem{KostSchr99}
V. Kostrykin and R. Schrader, \textit{Kirchhoff's rule for quantum wires}, J. Phys. A \textbf{32} (1999), 595-630.


\bibitem{Kuch13}
P. Kuchment, \textit{Quantum graph spectra of a graphyne structure}, NanoNMTA, \textbf{2} (2013), 107-123.

\bibitem{KuchPost}
P. Kuchment and O. Post, \textit{On the spectra of carbon nano-structures}, Commun. Math. Phys. \textbf{275} (2007), 805-826.

\bibitem{KuVa}
P. Kuchment and B. Vainberg, \textit{On absence of embedded eigenvalues for Schr{\"o}dinger operators with perturbed periodic potentials}, Comm. PDE, \textbf{25} (2000), 1809-1826.

\bibitem{Kura08}
P. Kurasov, \textit{Schr{\"o}dinger operators on graphs and geometry. I. Essentially bounded potentials}, J. Funct. Anal. \textbf{254} (2008), 934-953.

\bibitem{Lev49}
N. Levinson, \textit{The inverse Sturm-Liouville problem}, Mat. Tidsskr. B. (1949), 25-30.

\bibitem{MochiTroosh12}
K. Mochizuki and I. Yu. Trooshin, 
\textit{On the scattering on a loop-shaped graph}, 
Progress of Math. \textbf{301} (2012), 227-245.

\bibitem{Nakamura14}
S. Nakamura, \textit{Modified wave operators for discrete Scr{\"o}dinger operators with long-range perturbations}, J. Math. Phys. \textbf{55} (2014), 112101.

\bibitem{Niikuni16}
H. Niikuni, \textit{Spectral band structure of periodic Schr{\"o}dinger operators with two potentials on the degenerate zigzag nanotube}, J. Appl. Math. Comput. (2016) 50:453-482.

\bibitem{Pank06}
K. Pankraskin, \textit{Spactra of Schr{\"o}dinger operators on equilateral quantum graphs}, Lett. Math. Phys. \textbf{77} (2006), 139-154.

\bibitem{ParRich18}
D. Parra and S. Richard, \textit{Spectral and scattering theory for Schr{\"o}dinger operators on perturbed topological crystals}, 
Rev. Math. Phys. \textbf{30} (2018), Article No. 1850009, pp 1-39.




\bibitem{Pivo00}
V. Pivovarchik, \textit{Inverse problem for the Sturm-Liouville equation on a simple graph}, SIAM J. Math. Anal. \textbf{32} (2000), 801-819.

\bibitem{Post12}
O. Post, \textit{Spectal Analysis on Graph-like Spaces}, Lecture Notes in Mathematics \textbf{2039}, Springer, Heidelberg (2012).


\bibitem{PoTru}
J. P{\"o}schel and E. Trubowitz, \textit{Inverse Spectral Theory}, 
Academic Press, Boston, (1987).


\bibitem{Sha} W. Shaban and B. Vainberg, \textit{Radiation conditions for the difference Schr\"{o}dinger operators}, Applicable Analysis, \textbf{80} (2001), 525-556.




\bibitem{Tadano16}
Y. Tadano, \textit{Long-range scattering for discrete Schr{\"o}dinger operators}, Ann. Henri Poincar{\'e} \textbf{20} (2019), 1439-1469.

\bibitem{VisComMirSor11}
F. Visco-Comandini, M. Mirrahimi, and M. Sorine, \textit{ Some inverse scattering problems on star-shaped graphs}, J. Math. Anal. Appl. \textbf{387}  (2011), 343-358.

\bibitem{YangXu18}
X. C. Xu and C. F. Yang, \textit{Determination of the self-adjoint matrix Schr{\"o}dinger operators without the bound state data}, 
Inverse Problems \textbf{34} (2018), 065002 (20pp).

\bibitem{Yurk05}
V. Yurko, \textit{Inverse spectral problems for Sturm-Liouville operators on graphs}, Inverse Problems \textbf{21} (2005), 1075-1086.

\bibitem{Yurko16}
V. Yurko, \textit{Inverse spectral problems for differential operators  on spatial networks}, Russ. Math. Surveys \textbf{71}, No 3 (2016), 539-584.

\end{thebibliography}
\end{document}